\definecolor{pku-red}{RGB}{139,0,18}
\theoremstyle{plain}
\newtheorem{theorem}{Theorem}[section]
\newtheorem{lemma}[theorem]{Lemma}
\newtheorem{corollary}[theorem]{Corollary}
\newtheorem{example}[theorem]{Example}
\theoremstyle{definition}
\newtheorem{definition}[theorem]{Definition}
\theoremstyle{remark}
\newcommand{\EE}{\mathbb{E}}
\newcommand{\PP}{\mathbb{P}}
\newcommand{\RR}{\mathbb{R}}
\newcommand{\NN}{\mathbb{N}}
\newcommand{\Ff}{\mathcal{F}}
\newcommand{\Nn}{\mathcal{N}}
\newcommand{\Aa}{\mathcal{A}}
\newcommand{\Bb}{\mathcal{B}}
\newcommand{\Ee}{\mathcal{E}}
\newcommand{\Dd}{\mathcal{D}}
\newcommand{\Uu}{\mathcal{U}}
\newcommand{\Xx}{\mathcal{X}}
\newcommand{\Yy}{\mathcal{Y}}
\newcommand{\Rr}{\mathcal{R}}
\newcommand{\Qq}{\mathcal{Q}}
\newcommand{\Pp}{\mathcal{P}}
\newcommand{\Oo}{\mathcal{O}}
\newcommand{\Gg}{\mathcal{G}}
\newcommand{\CE}{\mathrm{CE}}
\newcommand{\NE}{\mathrm{NE}}
\newcommand{\norm}[1]{{\|#1\|}}
\newcommand{\fNE}{f^{\mathrm{NE}}}
\newcommand{\gNE}{g^{\mathrm{NE}}}
\newcommand{\FfNE}{\mathcal{F}^{\mathrm{NE}}}
\newcommand{\fCE}{f^{\mathrm{CE}}}
\newcommand{\FfCE}{\mathcal{F}^{\mathrm{CE}}}
\newcommand{\fCCE}{f^{\mathrm{CCE}}}
\newcommand{\FfCCE}{\mathcal{F}^{\mathrm{CCE}}}
\newcommand{\fcCE}{f^{\mathrm{(C)CE}}}
\newcommand{\gcCE}{g^{\mathrm{(C)CE}}}
\newcommand{\FfcCE}{\mathcal{F}^{\mathrm{(C)CE}}}
\newcommand{\SWR}{\mathrm{SWR}_{\mathrm{N,M}}}
\newcommand{\SW}{\mathrm{SW}}
\title{Are Equivariant Equilibrium Approximators Beneficial?}
\author{
	\textbf{Zhijian Duan}$^{1}$, 
	\textbf{Yunxuan Ma}$^{1}$,
	\textbf{Xiaotie Deng}$^{1,2}$ 
	\\
	$^{1}$Center on Frontiers of Computing Studies, Peking University\\
	$^{2}$Center for Multi-Agent Research, Institute for AI, Peking University
	\\
	\texttt{\{zjduan,charmingmyx,xiaotie\}@pku.edu.cn}
}
\date{}
\begin{document}
	\maketitle
	
	\begin{abstract}
	Recently, remarkable progress has been made by approximating Nash equilibrium (NE), correlated equilibrium (CE), and coarse correlated equilibrium (CCE) through function approximation that trains a neural network to predict equilibria from game representations. Furthermore, equivariant architectures are widely adopted in designing such equilibrium approximators in normal-form games. In this paper, we theoretically characterize benefits and limitations of equivariant equilibrium approximators. For the benefits, we show that they enjoy better generalizability than general ones and can achieve better approximations when the payoff distribution is permutation-invariant. For the limitations, we discuss their drawbacks in terms of equilibrium selection and social welfare. Together, our results help to understand the role of equivariance in equilibrium approximators.
\end{abstract}
	
	% \pagestyle{headings}
% \pagenumbering{arabic}
\section{Introduction}
The equivariant equilibrium property states that,
given a Nash Equilibrium (NE) solution of a game, the permuted solution is also an NE for the game whose actions of representation are permuted in the same way.
% permuting its actions in the input data representation will result in a Nash equilibrium solution that is permuted in the same way.
The same property also holds in correlated equilibrium (CE) and coarse correlated
equilibrium (CCE), as well as the approximate solutions for all three solution concepts. 

In this paper, we are interested in understanding the equivariant equilibrium property in designing neural networks 
that predict equilibria from game payoffs, following such recent approaches in designing equivariant equilibrium approximators~\citep{feng2021neural,marris2022turbocharging} in normal-form games. 
Informally, such equivariant approximators keep the same permutation of the output strategies (represented as vectors or tensors) when the input game representations (e.g., the game payoff tensors) are permuted
\footnote{We will provide a formal definition of equivariance equilibrium approximators in \cref{sec:symmetric}}.
While equivariant approximators achieved empirical success, little work has theoretically discussed whether they are beneficial.

\subsection{Our Contributions}
% In this paper, we theoretically characterize the benefits and limitations of equivariant NE, CE and CCE approximators. 
% We evaluate the approximators using equilibrium approximation, which is defined as the maximum exploitability~\citep{lockhart2019computing,goktas2022exploitability} over all players (See \cref{def:E}).
% We use orbit averaging~\citep{elesedy2021provably} to construct a symmetric subspace of the approximator function class.

We theoretically characterize benefits and limitations of equivariant NE, CE and CCE approximators. 
For the benefits, we first show that equivariant approximators enjoy better generalizability, where we evaluate the approximators using the maximum exploitability~\citep{lockhart2019computing,goktas2022exploitability} over all players.
To get such a result, we derive the generalization bounds and the sample complexities of the NE, CE, and CCE approximators:
The generalization bounds offer confidence intervals on the expected testing approximations based on the empirical training approximations;
The sample complexities describe how many training samples the equilibrium approximators need to achieve desirable generalizability.
The generalization bounds and sample complexities include the covering numbers~\citep{shalev2014understanding}, which measure the representativeness of the approximators' function classes.
Afterward, we prove that the equivariant approximators have lower covering numbers than the general models, therefore have lower generalization bounds and sample complexities. 
We then show that the equivariant approximators can achieve better approximation when the payoff distribution is permutation-invariant.

As for the limitations, we find the equivariant approximators unable to find all the equilibria of some normal-form games. 
Such a result is caused by the limited representativeness of the equivariant approximators' function class.
Besides, we find that the equivariant NE approximator may lose social welfare.
Specifically, in an example we constructed, while the maximum NE social welfare is large, the maximum social welfare of NEs that the equivariant NE approximators could find can be arbitrary close to zero. 
Such a negative result inspires us to balance generalizability and social welfare when we design the approximators' architectures.

\subsection{Further Related Work}
% NE always exists in normal-form games~\citep{nash1950equilibrium}, so as CE and CCE since NE $\subseteq$ CE $\subseteq$ CCE.
% \label{sec:related_work}
% However, finding an NE is PPAD-complete even for $2$-player games~\citep{chen2009settling} and $3$-player games~\citep{daskalakis2009complexity}.
% Such negative results lead to increased attention on developing algorithms to approximate NE~\citep{TS0.3393,DFM1/3}.

Solving (approximate) NE, CE, and CCE for a single game are well studied~\citep{fudenberg1998theory,cesa2006prediction}.
However, many similar games often need to be solved~\citep{harris2023metalearning} , both in practice and in some multi-agent learning algorithms~\citep{marris2021multi,liu2022simplex}.
For instance, in repeated traffic routing games~\citep{sessa2020contextual}, the payoffs of games depend on the capacity of the underlying network, which can vary with time and weather conditions.
In repeated sponsored search auctions, advertisers value different keywords based on the current marketing environment~\citep{nekipelov2015econometrics}.
In many multi-agent learning algorithms such as Nash Q-learning~\citep{hu2003nash}, Correlated-Q learning~\citep{greenwald2003correlated}, V-learning~\citep{jin2021v} and  PSRO~\citep{lanctot2017unified}, an NE, CE or CCE of a normal-form game need to be solved in every update step.

In these settings, it is preferred to accelerate the speed of game solving by function approximation:
\citet{marris2022turbocharging} introduces a neural equilibrium approximator to approximate CE and CCE for $n$-player normal-form games;
\citet{feng2021neural} proposes a neural NE approximator in PSRO~\citep{lanctot2017unified};
\citet{wu2022using} designs a CNN-based NE approximator for zero-sum bimatrix games.
Differentiable approximators have also been developed to learn QREs~\citep{ling2018game}, NE in chance-constrained games~\citep{wu2023ccgnet}, and opponent's strategy~\citep{hartford2016deep}.

Equivariance is an ideal property of the equilibrium approximator.
We will discuss the literates of equivariant approximators after formally defining them in \cref{sec:symmetric}.

\subsection{Organization}
The rest of our paper is organized as follows: 
In \cref{sec:preliminary} we introduce the preliminary of game theory and equilibrium approximators.
In \cref{sec:symmetric} we formally define the equivariance of equilibrium approximators.
We present our theoretical analysis of benefits in \cref{sec:benefits} and limitations in \cref{sec:limitation}.
% We discuss the related works in \cref{sec:related_work} and conclude in \cref{sec:conclusion}.
We conclude and point out the future work in \cref{sec:conclusion}.

\section{Preliminary}
\label{sec:preliminary}

In this section, we introduce the preliminary and notations of our paper.
We also provide a brief introduction to equilibrium approximators.

\subsection{Game Theory}

\paragraph{Normal-Form Game}
Let a normal-form game with joint payoff $u$ be $\Gamma_u = (n, \Aa, u)$, in which
\begin{itemize}
    \item 
    $n \in \NN_{\ge 2}$ is the number of players. Each player is represented by the index $i \in [n]$. 
    
    \item 
    $\Aa = \times_{i\in [n]}\Aa_i$ is the product action space of all players, where $\Aa_i = \{1, 2, \dots, m_i\}$.
    For player $i\in [n]$, let $a_i \in \Aa_i$ be a specific action of $i$ (An action is also referred to as a pure strategy).
    A joint action $a = (a_1, a_2, \dots, a_n) \in \Aa$ represents one play of the game in which the player $i$ takes action $a_i$.
    The action space $\Aa$ is a Cartesian product that contains all possible joint actions. 
    We have $|\Aa| = \prod_{i\in [n]}|\Aa_i| = \prod_{i\in [n]}m_i$.
    
    \item 
    $u = (u_i)_{i\in [n]}$ is the joint payoff or utility of the game.
    $u_i$ is an $n$-dimensional tensor (or matrix if $n=2$) describing player $i$'s payoff on each joint action.
    In our paper, following previous literatures~\citep{TS0.3393,DFM1/3}, we normalize all the elements of payoff into $[0, 1]$.
\end{itemize}

A joint (mixed) strategy is a distribution over $\Aa$.
Let $\sigma = \times_{i\in [n]}\sigma_i$ be a product strategy and $\pi \in \Delta\Aa$ be a joint (possibly correlated) strategy.
Denote $\pi_i$ as the marginal strategy of player $i$ in $\pi$. 
The expected utility of player $i$ under $\pi$ is 
\begin{equation*}
    u_i(\pi) = \mathbb{E}_{a \sim \pi}[u_i(a)] = \sum_{a\in \Aa} \pi(a) u_i(a).
\end{equation*}
Besides, on behalf of player $i$, the other players' joint strategy is denoted as $\pi_{-i}$, so as $a_{-i}$ and $\sigma_{-i}$.

\paragraph{Nash Equilibrium (NE)}
% Nash equilibrium (NE), correlated equilibrium (CE), and coarse correlated equilibrium (CCE) are the most important solution concepts in game theory.
% In these equilibria, no player can receive a higher payoff by disobeying her strategy.
We say a product strategy $\sigma^* = (\sigma^*_1, \sigma^*_2, \dots, \sigma^*_n)$ is a NE if each player's strategy is the best response given the strategies of others, i.e.,
\begin{equation}
\tag{NE}
    u_i(\sigma_i,\sigma^*_{-i}) \le u_i(\sigma^*_i, \sigma^*_{-i}),~\forall i\in[n],\sigma_i\in\Delta\Aa_i.
\end{equation} 
Computing NE for even general $2$-player or $3$-player games is PPAD-hard~\citep{chen2009settling, daskalakis2009complexity}, which leads to research on approximate solutions.
For arbitrary $\epsilon > 0$, we say a product strategy $\hat{\sigma}$ is an \emph{$\epsilon$-approximate Nash equilibrium} ($\epsilon$-NE) if no one can achieve more than $\epsilon$ utility gain by deviating from her current strategy.
Formally, 
\begin{equation}
\tag{$\epsilon$-NE}
u_i(\sigma_i, \hat{\sigma}_{-i}) \le u_i(\hat{\sigma}_i, \hat{\sigma}_{-i}) + \epsilon,~\forall i\in[n],\sigma_i\in\Delta\Aa_i.
\end{equation}
The definition of $\epsilon$-NE reflects the idea that players might not be willing to deviate from their strategies when the amount of utility they could gain by doing so is tiny (not more than $\epsilon$).

\paragraph{Coarse Correlated Equilibrium (CCE)}
We say a joint (possibly correlated) strategy $\pi^*$ is a CCE if no player can receive a higher payoff by acting independently, i.e., 
\begin{equation}
\tag{CCE}
    u_i(\sigma_i, \pi^*_{-i}) \le u_i(\pi^*),~\forall i\in[n],\sigma_i\in\Delta\Aa_i,
\end{equation}
and we say $\hat{\pi}$ is an $\epsilon$-approximate coarse correlated equilibrium ($\epsilon$-CCE) for $\epsilon>0$ if
\begin{equation}
    \tag{$\epsilon$-CCE}
    u_i(\sigma_i, \hat\pi_{-i}) \le u_i(\hat\pi) + \epsilon,~\forall i\in [n],\sigma_i\in\Delta\Aa_i,
\end{equation}
The difference between NE and CCE is that in an NE, players execute their strategy individually in a decentralized way, while in a CCE, players' strategies are possibly correlated.
A standard technique to correlate the strategy is sending each player a signal from a centralized controller ~\citep{shoham2008multiagent}.

\paragraph{Correlated Equilibrium (CE)}
CE is similar to CCE, except that in a CE, each player can observe her recommended action before she acts.
Thus, player $i$ deviates her strategy through \emph{strategy modification} $\phi_i: \Aa_i \to \Aa_i$.
$\phi_i$ maps actions in $\Aa_i$ to possibly different actions in $\Aa_i$.
Based on strategy modification, we say a joint (possibly correlated) strategy $\pi^*$ is a CE if 
\begin{equation}
    \tag{CE}
    \sum_{a \in \Aa} \pi^*(a) u_i(\phi_i(a_i), a_{-i}) \le u_i(\pi^*),~\forall i, \forall \phi_i,
\end{equation}
and a joint strategy $\hat\pi$ is an $\epsilon$-approximate correlated equilibrium ($\epsilon$-CE) for $\epsilon>0$ if
\begin{equation}
    \tag{$\epsilon$-CE}
    \sum_{a \in \Aa} \hat\pi(a) u_i(\phi_i(a_i), a_{-i}) \le u_i(\hat\pi)+\epsilon,~\forall i, \forall \phi_i,
\end{equation}

Note that for a finite $n$-player normal-form game, at least one NE, CE, and CCE must exist. 
This is because NE always exists~\citep{nash1950equilibrium} and NE $\subseteq$ CE $\subseteq$ CCE.

\paragraph{Equilibrium Approximation}
To evaluate the quality of a joint strategy to approximate an equilibrium, we define approximation based on exploitability~\citep{lockhart2019computing,goktas2022exploitability}.
\begin{definition}[Exploitability and Approximation]
    \label{def:E}
    Given a joint strategy $\pi$, the \emph{exploitability} (or regret) $\Ee_i(\pi, u)$ of player $i$ is the maximum payoff gain of $i$ by deviating from her current strategy, i.e., 
    \begin{equation*}
        \begin{aligned}
            \Ee_i(\pi, u) &\coloneqq \max_{\sigma'_i}u_i(\sigma'_i, \pi_{-i}) - u_i(\pi) = \max_{a'_i}u_i(a'_i, \pi_{-i}) - u_i(\pi)
        \end{aligned}
    \end{equation*}
    and the exploitability under strategy modification $\Ee_i^\CE(\pi, u)$ of player $i$ is the maximum payoff gain of $i$ by deviating through strategy modification, i.e.,
    \begin{equation*}
        \Ee_i^\CE(\pi, u) \coloneqq \max_{\phi_i}\sum_{a \in \Aa} \pi(a) u_i(\phi_i(a_i), a_{-i}) - u_i(\pi).
    \end{equation*}
    The \emph{equilibrium approximation} is defined as the maximum exploitability over all players    
    \footnote{
        A similar metric of equilibrium approximation is called Nikaido-Isoda function~\citep{nikaido1955note} or \textsc{NashConv}~\citep{lockhart2019computing}, which is the sum of exploitability over all players, i.e., $\sum_{i\in[n]}\Ee_i(\pi, u)$.
    }, i.e.,
    \begin{equation*}
        \Ee(\pi, u) \coloneqq 
        \begin{cases}
        \max_{i\in[n]}\Ee_i(\pi, u)&,\text{for NE and CCE}
        \\
        \max_{i\in[n]}\Ee^\CE_i(\pi, u)&,\text{for CE}
    \end{cases}
    \end{equation*}
\end{definition}
Based on approximation, we can restate the definition of solution concepts.
A product strategy $\sigma$ is an NE of game $\Gamma_u$ if $\Ee(\sigma, u) = 0$ and is an $\epsilon$-NE if $\Ee(\sigma, u) \le \epsilon$. 
A joint strategy $\pi$ is a (C)CE of $\Gamma_u$ if $\Ee(\pi, u) = 0$ and is an $\epsilon$-(C)CE if $\Ee(\pi, u) \le \epsilon$.

\subsection{Equilibrium Approximator}
\label{sec:approximator}

The equilibrium approximators, including NE, CE, and CCE approximators, aim to predict solution concepts from game representations.
In our paper, we fix the number of players $n$ and action space $\Aa$.
We denote by $\Uu$ the set of all the possible game payoffs.
The NE approximator $\fNE: \Uu \to \times_{i\in [n]}\Delta\Aa_i$ maps a game payoff to a product strategy, where $\fNE(u)_i \in \Delta\Aa_i$ is player $i$'s predicted strategy.
We define $\FfNE$ as the function class of the NE approximator.
Similarly, we define (C)CE approximator as $\fcCE: \Uu \to \Delta\Aa$ and (C)CE approximator class as $\FfcCE$.

\begin{algorithm}[t]
	\caption{Example: learning NE/CCE approximator via minibatch SGD}
	\label{alg:training}
	\begin{algorithmic}[1]
		\STATE {\bfseries Input:} Training set $S$
		\STATE {\bfseries Parameters:} Number of iterations $T > 0$, batch size $B > 0$, learning rate $\eta > 0$, initial parameters $w_0 \in \RR^d$ of the approximator model.
		% \vskip 5pt
		\FOR{$t~=~0$ \textbf{to} $T$}
		\STATE Receive minibatch ${S}_t \,=\, \{u^{(1)}, \ldots, u^{(B)}\} \subset S$ 
		
		\STATE Compute the empirical average approximation of $S_t$:
		\STATE ~~~~$L_{S_t}(f^{w_t}) \gets \frac{1}{B}\sum_{i=1}^B \Ee(f^{w_t}(u^{(i)}), u^{(i)}) $
		
		\STATE Update model parameters:
		\STATE ~~~~$w_{t+1} \gets w_t - \eta\nabla_{w_t} L_{S_t}(f^{w_t}) $
		\ENDFOR
	\end{algorithmic}
\end{algorithm}

% An equilibrium approximator can be learned through machine learning paradigms based on empirical data generated from a (possibly unknown) underlying payoff distribution $\Dd$.
% % We will provide the generalization bound of equilibrium approximators in \cref{sec:benefits}.
% % The bound describes the approximation difference of the equilibrium approximator in training and testing, assuming that the training and testing data distribution $\Dd$ are the same.
% Straightforwardly, the NE and CCE approximators can be learned by minibatch stochastic gradient descent (SGD)~\citep{bottou2012stochastic} on equilibrium approximation. 
% % Specifically, denoting $w \in \RR^d$ as the $d$-dimensional parameter of the approximator, such as the weights of the neural network.
% % We can optimize $w$ by the standard minibatch SGD algorithm on equilibrium approximation (See \cref{alg:training}).
% The algorithm is feasible since the equilibrium approximation is differentiable almost everywhere.
% % The non-differentiable zero-measure points appear when the maximum operations in $\Ee(\pi, u)$ have multiple choices.
% % We can set one of them according to any tie-breaking rule as the outcome to compute the corresponding gradient.
An equilibrium approximator can be learned through machine learning paradigms based on empirical data.
For instance, \citet{feng2021neural} learn the NE approximator using the game payoffs generated in the learning process of PSRO, and optimize the approximator by gradient descent in exploitability. 
\citet{marris2022turbocharging} learn the CE and CCE approximators using the games i.i.d. generated from a manually designed distribution, and optimize the approximators using maximum welfare minimum relative entropy loss.
Such a loss balances the equilibrium approximation, the social welfare, and the relative entropy of the joint strategy. 
% The loss can be optimized by solving the dual problem, and the approximators are learned through self-supervised learning.
Additionally, another simple way to learn NE or CCE equilibrium approximator is to apply minibatch stochastic gradient descent (SGD) on approximation.  
Specifically, we denote $w \in \RR^d$ as the $d$-dimensional parameter of the approximator, such as the weights of the neural network.
We can optimize $w$ by the standard minibatch SGD algorithm on approximation (See \cref{alg:training}).

\section{Equivariant Equilibrium Approximator}
\label{sec:symmetric}
In this section, we introduce the equivariance of the equilibrium approximators and the way how we apply orbit averaging~\citep{elesedy2021provably} to construct equivariant approximators. 
Equivariant approximator has been developed in many literatures~\citep{hartford2016deep,feng2021neural,marris2022turbocharging,wu2022using}, which we will discuss latter. 

We first define the permutation of a game.
Let $\rho_i: \Aa_i \to \Aa_i$ be a permutation function of player $i$, which is a bijection from $\Aa_i$ to $\Aa_i$ itself. 
Define $\Gg_i \ni \rho_i$ as the class of permutation function for player $i$, which forms an Abelian group.

\begin{definition}[Permutation of a game]
For a normal-form game $\Gamma_u = (n, u, \Aa)$, we define the $\rho_i$-permutation of payoff tensor $u$ as $\rho_i u = (\rho_i u_j)_{j\in [n]}$, in which
\begin{equation*}
    (\rho_i u_j)(a_i, a_{-i}) = u_j(\rho_i^{-1}(a_i), a_{-i}),~\forall a \in \Aa.
\end{equation*}
We also define the $\rho_i$-permutation of joint strategy $\pi$ as $\rho_i\pi$, where
\begin{equation*}
    (\rho_i\pi)(a_i, a_{-i}) = \pi(\rho_i^{-1}(a_i), a_{-i}),~\forall a \in \Aa,
\end{equation*}
and the $\rho_i$-permutation of product strategy $\sigma$ as $\rho_i\sigma = (\rho_i\sigma_j)_{j\in[n]}$, where
\begin{equation*}
    \forall a_j \in \Aa_j, \rho_i\sigma_j(a_j) =
    \begin{cases}
        \sigma_j(a_j) &,\text{if }j\ne i
        \\
        \sigma_i(\rho_i^{-1} a_i)  &,\text{if }j = i\\
    \end{cases}
\end{equation*}
\end{definition}

\paragraph{Equivariant NE Approximator}
Considering the relationship of $\rho_i$-permuted game and $\rho_i$-permuted product strategy, we have the following result for NE: 
\begin{restatable}{lemma}{lemPermNE}
% \begin{lemma}
    \label{lem:perm:NE}
    In a normal-form game $\Gamma_u = (n, u, \Aa)$, for arbitrary player $i \in [n]$ and any ($\epsilon$-)NE strategy $\sigma = (\sigma_i, \sigma_{-i})$, $\rho_i\sigma = (\rho_i\sigma_i, \sigma_{-i})$ is also an ($\epsilon$-)NE for the $\rho_i$-permuted game $\Gamma_{\rho_i u}$.
% \end{lemma}
\end{restatable}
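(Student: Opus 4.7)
The plan is a direct verification: I will show that for every player $j$, the $(\epsilon$-)NE inequality for the permuted game $\Gamma_{\rho_i u}$ evaluated at $\rho_i \sigma$ reduces to the corresponding inequality for the original game $\Gamma_u$ evaluated at $\sigma$, by performing a change of variables $b_i = \rho_i^{-1}(a_i)$ inside the expectations. Since the statement must hold for every $j\in[n]$ and every deviation, I will split into two cases depending on whether $j=i$ or $j\ne i$. The key algebraic fact I will use repeatedly is that for any $b_i\in\Aa_i$, $(\rho_i\sigma_i)(\rho_i(b_i)) = \sigma_i(\rho_i^{-1}(\rho_i(b_i))) = \sigma_i(b_i)$, and that $(\rho_i\sigma)_k=\sigma_k$ for $k\ne i$.

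For the case $j\ne i$, I would expand $(\rho_i u)_j(\sigma_j', (\rho_i\sigma)_{-j})$ as a sum over joint actions, apply the substitution $a_i\mapsto \rho_i(b_i)$, and use the identity above to rewrite the marginal over the $i$-th coordinate back to $\sigma_i$. The payoff tensor collapses via $(\rho_i u_j)(\rho_i(b_i),a_{-i}) = u_j(b_i,a_{-i})$, and the whole expression becomes $u_j(\sigma_j', \sigma_{-j})$. Applying the same manipulation to the right-hand side (where the $j$-th component of $\rho_i\sigma$ is just $\sigma_j$ since $j\ne i$) gives $u_j(\sigma_j,\sigma_{-j})$. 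Thus the $j$-th NE inequality for $\rho_i\sigma$ in $\Gamma_{\rho_i u}$ is literally the $j$-th NE inequality for $\sigma$ in $\Gamma_u$, and is therefore satisfied.

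For the case $j=i$, the right-hand side simplifies similarly: $(\rho_i u)_i(\rho_i\sigma_i,\sigma_{-i}) = u_i(\sigma_i,\sigma_{-i})$ by the change of variables. The left-hand side, $(\rho_i u)_i(\sigma_i',\sigma_{-i})$ with deviation $\sigma_i'\in\Delta\Aa_i$, rewrites as $u_i(\rho_i^{-1}\sigma_i',\sigma_{-i})$. The only subtle point here is that we need the universal quantifier to transfer: since $\rho_i$ is a bijection on $\Aa_i$, the map $\sigma_i'\mapsto \rho_i^{-1}\sigma_i'$ is a bijection on $\Delta\Aa_i$, so ranging $\sigma_i'$ over all of $\Delta\Aa_i$ is the same as ranging $\rho_i^{-1}\sigma_i'$ over all of $\Delta\Aa_i$. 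Hence the $i$-th NE inequality in $\Gamma_{\rho_i u}$ for $\rho_i\sigma$ is equivalent to the $i$-th NE inequality in $\Gamma_u$ for $\sigma$, and again is satisfied.

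The argument is essentially bookkeeping and no step is deep; the only place demanding care is the $j=i$ case, where one must exploit bijectivity of $\rho_i$ to translate the ``for all deviations'' quantifier between the two games rather than merely exhibiting a single deviation. The $\epsilon$-approximate version follows because the additive slack $\epsilon$ is carried verbatim through every step of the change of variables.
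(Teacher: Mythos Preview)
Your proposal is correct and follows essentially the same approach as the paper: both split into the cases $j=i$ and $j\ne i$, apply the change of variables $a_i\mapsto\rho_i(b_i)$ to collapse $(\rho_i u_j)$ back to $u_j$, and invoke bijectivity of $\rho_i$ to transfer the quantifier over deviations. The only cosmetic difference is that the paper phrases the computation in terms of the exploitability $\Ee_j(\rho_i\sigma,\rho_i u)=\Ee_j(\sigma,u)$ (i.e., equality of the \emph{max} over deviations), whereas you verify the NE inequality pointwise for each deviation; the two formulations are equivalent and the underlying algebra is identical.
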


\autoref{lem:perm:NE} tells the unimportance of action order with respect to NE approximation.  
Based on it, we can summarize two ideal properties for NE approximators, which are defined as follows:

\begin{definition}[Player-Permutation-Equivariance, PPE]
    \label{def:PPE}
    We say an NE approximator $\fNE$ satisfies \emph{player $i$ permutation-equivariant ($i$-PE)} if for arbitrary permutation function $\rho_i \in \Gg_i$ we have
    \begin{equation}
    \label{eq:iPE}
    \tag{$i$-PE}
        \fNE(\rho_i u)_i = \rho_i \fNE(u)_i,
    \end{equation}
    Moreover, we say $\fNE$ is \emph{player-permutation-equivariant (PPE)} if $\fNE$ satisfies \ref{eq:iPE} for all player $i\in [n]$.
\end{definition}

\begin{definition}[Opponent-Permutation-Invariance, OPI]
    We say an NE approximator $\fNE$ is \emph{opponent $i$ permutation-invariant ($i$-PI)} if for any other player $j \in [n] - \{i\}$ and arbitrary permutation function $\rho_i \in \Gg_i$ we have
    \begin{equation}
    \label{eq:iPI}
    \tag{$i$-PI}
        \fNE(\rho_i u)_j = \fNE(u)_j, \forall j\ne i
    \end{equation}
    and we say $\fNE$ is \emph{opponent-permutation-invariant (OPI)} if $\fNE$ satisfies \ref{eq:iPI} for all player $i\in [n]$.
\end{definition}

\paragraph{Equivariant (C)CE approximator}
Considering the relationship of $\rho_i$-permuted game and $\rho_i$-permuted joint strategy, we have a similar result for CE and CCE: 
\begin{restatable}{lemma}{lemPermCCE}
% \begin{lemma}
    \label{lem:perm:CCE}
    In a normal-form game $\Gamma_u = (n, u, \Aa)$, for an arbitrary player $i \in [n]$ and any ($\varepsilon$-)CE or ($\epsilon$-)CCE strategy $\pi$, $\rho_i\pi$ is also an ($\varepsilon$-)CE or ($\epsilon$-)CCE for the $\rho_i$-permuted game $\Gamma_{\rho_i u}$.
% \end{lemma}
\end{restatable}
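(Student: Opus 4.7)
The plan is to verify directly from the definitions that the $(\epsilon$-$)$CE and $(\epsilon$-$)$CCE inequalities for $\rho_i\pi$ in $\Gamma_{\rho_i u}$ reduce to the corresponding inequalities for $\pi$ in $\Gamma_u$ via a change of variables. The key substitution is $b_i = \rho_i^{-1}(a_i)$, under which, by definition of the permuted payoff and permuted strategy,
\[
(\rho_i u_j)(a_i,a_{-i}) \;=\; u_j(b_i,a_{-i}), \qquad (\rho_i\pi)(a_i,a_{-i}) \;=\; \pi(b_i,a_{-i}).
\]
Because $\rho_i$ is a bijection on $\Aa_i$, summing over $a_i\in\Aa_i$ is the same as summing over $b_i\in\Aa_i$, so all expectations are preserved. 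In particular $(\rho_i u)_j(\rho_i\pi) = u_j(\pi)$ for every $j\in[n]$, and the marginal $(\rho_i\pi)_j$ equals $\pi_j$ for $j\neq i$ and equals $\rho_i\pi_i$ for $j=i$.

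For the CCE part, fix any deviating player $j$ and $\sigma_j\in\Delta\Aa_j$. The substitution above gives
\[
(\rho_i u)_j\bigl(\sigma_j,(\rho_i\pi)_{-j}\bigr) \;=\; u_j(\tau_j,\pi_{-j}),
\]
where $\tau_j = \sigma_j$ if $j\neq i$, and $\tau_i = \rho_i^{-1}\sigma_i \in \Delta\Aa_i$ if $j=i$. Applying the $(\epsilon$-$)$CCE condition of $\pi$ in $\Gamma_u$ to the deviation $\tau_j$ then yields the $(\epsilon$-$)$CCE condition for $\rho_i\pi$ in $\Gamma_{\rho_i u}$.

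For the CE part, given a strategy modification $\phi_j$ in $\Gamma_{\rho_i u}$, define an induced modification $\tilde\phi_j$ on $\Aa_j$ by $\tilde\phi_j=\phi_j$ for $j\neq i$ and $\tilde\phi_i(b_i)=\rho_i^{-1}\bigl(\phi_i(\rho_i b_i)\bigr)$ for $j=i$. Applying the substitution inside the CE sum gives
\[
\sum_{a\in\Aa}(\rho_i\pi)(a)\,(\rho_i u)_j\bigl(\phi_j(a_j),a_{-j}\bigr) \;=\; \sum_{a\in\Aa}\pi(a)\,u_j\bigl(\tilde\phi_j(a_j),a_{-j}\bigr),
\]
and the $(\varepsilon$-$)$CE condition of $\pi$ in $\Gamma_u$ transfers to $\rho_i\pi$ in $\Gamma_{\rho_i u}$.

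The only delicate point, and my main obstacle, is the bookkeeping in the case $j=i$: I must verify that the substitution is performed consistently on the payoff argument versus the strategy argument, and that the assignment $\phi_i\mapsto\tilde\phi_i$ is a bijection on the set of strategy modifications of player $i$ (which follows because $\rho_i$ is a bijection, so the maximum over $\phi_i$ in the permuted game equals the maximum over $\tilde\phi_i$ in the original game). Once this is checked, the exact equilibria follow by taking $\epsilon=0$, and no further work is needed.
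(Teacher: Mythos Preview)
Your proposal is correct and follows essentially the same approach as the paper. The paper likewise proves the result by the change of variables $b_i=\rho_i^{-1}(a_i)$, splits into the cases $j=i$ and $j\neq i$, and for the CE case with $j=i$ introduces precisely your conjugated modification $\tilde\phi_i(b_i)=\rho_i^{-1}\phi_i(\rho_i b_i)$ (denoted there $\bar\rho_i\phi_i$) and checks that $\phi_i\mapsto\tilde\phi_i$ is a bijection on $\{\Aa_i\to\Aa_i\}$; the only cosmetic difference is that the paper phrases everything via the equality $\Ee_j^{(\CE)}(\rho_i\pi,\rho_i u)=\Ee_j^{(\CE)}(\pi,u)$ rather than transferring the defining inequalities directly.
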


Inspired by \cref{lem:perm:CCE}, we can also summarize an ideal property for CE and CCE approximators defined as follows.
\begin{definition}[Permutation-Equivariance,PE]
    We say an (C)CE approximator $f^{\text{(C)CE}}$ is \emph{player $i$ permutation-equivariant ($i$-PE)} if for permutation function $\rho_i \in \Gg_i$ we have
    \begin{equation*}
        \fcCE(\rho_i u) = \rho_i \fcCE(u),
    \end{equation*}
    and we say $f^{\text{(C)CE}}$ is \emph{permutation-equivariant (PE)} if $f^{\text{(C)CE}}$ satisfies $i$-PE for all player $i\in [n]$.
\end{definition}

\paragraph{Equivariant Approximators in Literature}
For two-player games, \citet{feng2021neural} propose an MLP-based NE approximator that satisfies both PPE and OPI for zero-sum games.
Additionally, they also design a Conv$1$d-based NE approximator that satisfies PPE only. 
\citet{hartford2016deep} give a PPE approximator to predict players' strategies.
% \citet{wu2022using} designs a CNN-based NE approximator with both PPE and OPI.
The traditional algorithms \citet{TS0.3393} and \citet{DFM1/3}, which approximate NE by optimization, are also PPE and OPI to payoff and the initial strategies.
For $n$-player general games, \citet{marris2022turbocharging} provide a permutation-equivariant approximator to approximate CE and CCE.
Equivariant architectures are also adopted in optimal auction design~\citep{rahme2021permutation,duan2022context,ivanov2022optimaler}, and \citet{qin2022benefits} theoretically characterize the benefits of permutation-equivariant in auction mechanisms.
We follow the rough idea of \citet{qin2022benefits} when we analyze the benefits of equivariant equilibrium approximators.

\subsection{Orbit Averaging}
Orbit averaging is a well-known method to enforce equivariance or invariance for a function~\citep{schulz1994constructing}.
It averages the inputs of a function over the orbit of a group (e.g., the permutation group in our paper).

\paragraph{Orbit Averaging for NE Approximator}
For an NE approximator $\fNE$ and any player $i \in [n]$, we can construct a \ref{eq:iPI} or \ref{eq:iPE} NE approximator by averaging with respect to all the permutations of player $i$. 
Specifically, we construct an \ref{eq:iPI} NE approximator by operator $\Oo_i$ with  
\begin{equation*}
    (\Oo_i \fNE)(u)_j = 
    \begin{cases}
        \fNE(u)_i &,\text{if }j = i 
        \\
        \frac{1}{|\Aa_i|!} \sum_{\rho_i \in \Gg_i}\fNE(\rho_i u)_j &,\text{otherwise}
    \end{cases}
\end{equation*}
and we construct an \ref{eq:iPE} NE approximator by operator $\Pp_i$ with: 
\begin{equation*}
    (\Pp_i \fNE)(u)_j = 
    \begin{cases}
        \frac{1}{|\Aa_i|!} \sum_{\rho_i \in \Gg_i} \rho_i^{-1}\fNE(\rho_i u)_i &,\text{if }j = i
        \\
        \fNE(u)_j &,\text{otherwise}
    \end{cases}
\end{equation*}

% We also define the third operator $\Qq_i = \Oo_i \circ \Pp_i$ as a composition to construct an NE approximator with both $i$-PE and \ref{eq:iPI}:
% \begin{equation*}
%     (\Qq_i \fNE)(u)_j = 
%     \begin{cases}
%         \frac{1}{|\Aa_i|!} \sum_{\rho_i \in \Gg_i} \rho_i^{-1}\fNE(\rho_i u)_i &,\text{if }j=i 
%         \\
%         \frac{1}{|\Aa_i|!} \sum_{\rho_i \in \Gg_i}\fNE(\rho_i u)_j &,\text{otherwise}
%     \end{cases}
% \end{equation*}

\begin{restatable}{lemma}{lemNEOiPi}
% \begin{lemma}
\label{lem:NE:OiPi}
    $\Oo_i\fNE$ is \ref{eq:iPI} and $\Pp_i\fNE$ is \ref{eq:iPE}. 
    % $\Qq_i\fNE$ satisfies both \ref{eq:iPI} and \ref{eq:iPE}.
    Specially, if $\fNE$ is already \ref{eq:iPI} or \ref{eq:iPE}, then we have $\Oo_i\fNE = \fNE$ or $\Pp_i\fNE = \fNE$, respectively.
% \end{lemma}
\end{restatable}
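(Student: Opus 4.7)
The plan is to exploit the standard group-averaging trick: averaging a function over a finite group (here the permutation group $\Gg_i$) yields an object invariant under the group action, and introducing the inverse factor $\rho_i^{-1}$ in front of the summand turns invariance into equivariance. Since $\Gg_i$ is a finite group, left-multiplication by any fixed $\rho_i \in \Gg_i$ is a bijection of $\Gg_i$, so we may reindex the average freely. The proof then reduces to two symmetric direct calculations followed by two one-line verifications for the idempotency clauses.

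First, I would establish the $i$-PI claim for $\Oo_i\fNE$. For any $j \neq i$ and any fixed $\rho_i \in \Gg_i$, expanding the definition gives $(\Oo_i \fNE)(\rho_i u)_j = \frac{1}{|\Aa_i|!}\sum_{\rho'_i \in \Gg_i} \fNE(\rho'_i \rho_i u)_j$. Setting $\rho''_i \coloneqq \rho'_i \rho_i$ and using that $\rho'_i \mapsto \rho'_i \rho_i$ is a bijection on $\Gg_i$, the sum becomes $\frac{1}{|\Aa_i|!}\sum_{\rho''_i \in \Gg_i} \fNE(\rho''_i u)_j = (\Oo_i\fNE)(u)_j$. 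For the $j=i$ coordinate there is nothing to prove since $(\Oo_i\fNE)(u)_i = \fNE(u)_i$ by definition and $\Oo_i$ leaves it untouched on both sides.

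Next, I would prove the $i$-PE claim for $\Pp_i\fNE$, which is the analogous calculation with an extra inverse. For any $\rho_i \in \Gg_i$,
\begin{equation*}
(\Pp_i \fNE)(\rho_i u)_i = \frac{1}{|\Aa_i|!}\sum_{\rho'_i \in \Gg_i} (\rho'_i)^{-1}\fNE(\rho'_i \rho_i u)_i.
\end{equation*}
Reindexing via $\rho''_i \coloneqq \rho'_i \rho_i$, so $(\rho'_i)^{-1} = \rho_i (\rho''_i)^{-1}$, the sum factors as $\rho_i \cdot \tfrac{1}{|\Aa_i|!}\sum_{\rho''_i \in \Gg_i}(\rho''_i)^{-1}\fNE(\rho''_i u)_i = \rho_i (\Pp_i\fNE)(u)_i$, where the factor $\rho_i$ can be pulled out because the permutation action on strategies in the $i$-th coordinate is linear. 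The coordinates $j \neq i$ satisfy $(\Pp_i\fNE)(u)_j = \fNE(u)_j$ by definition, so no further work is needed there.

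Finally, the two idempotency statements are immediate once the definitions are substituted: if $\fNE$ already satisfies \ref{eq:iPI}, then $\fNE(\rho_i u)_j = \fNE(u)_j$ for all $j \neq i$ and all $\rho_i$, so each summand in the definition of $\Oo_i\fNE$ collapses to $\fNE(u)_j$; similarly, if $\fNE$ satisfies \ref{eq:iPE}, then $\rho_i^{-1}\fNE(\rho_i u)_i = \fNE(u)_i$ and the average in $\Pp_i\fNE$ reduces to a single term. I expect no genuine obstacle here; the only minor care is the reindexing argument and noting that the group structure of $\Gg_i$ (which the excerpt already asserts to be an Abelian group) is precisely what licenses the substitution $\rho'_i \mapsto \rho'_i \rho_i$.
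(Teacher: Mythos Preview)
Your proposal is correct and follows essentially the same approach as the paper: both arguments expand the definitions of $\Oo_i$ and $\Pp_i$, reindex the sum via the substitution $\widehat\rho_i = \rho'_i\rho_i$ using that right-multiplication is a bijection on $\Gg_i$, and then verify the idempotency clauses by direct substitution of the \ref{eq:iPI}/\ref{eq:iPE} hypotheses into the summands. The only cosmetic difference is that the paper multiplies and divides by $\rho_0$ before reindexing in the $\Pp_i$ case, whereas you compute $(\rho'_i)^{-1} = \rho_i(\rho''_i)^{-1}$ directly; these are the same step.
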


To construct a PPE or OPI NE approximator, we composite the operators with respect to all players.
Let $\Oo = \Oo_1 \circ \Oo_2 \circ \dots \circ \Oo_n$ and $\Pp = \Pp_1 \circ \Pp_2 \circ \dots \circ \Pp_n$, we get the following corollary:
\begin{restatable}{lemma}{lemNEOP}
% \begin{lemma}
\label{lem:NE:OP}
    $\Oo\fNE$ is OPI and $\Pp\fNE$ is PPE. 
    % $\Qq\fNE$ satisfies both OPI and PPE.
    If $\fNE$ is already OPI or PPE, we have $\Oo\fNE = \fNE$ or $\Pp\fNE = \fNE$, respectively. 
% \end{lemma}
\end{restatable}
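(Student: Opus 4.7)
The plan is to use \cref{lem:NE:OiPi} as a black-box building block and to show that the individual operators ``do not destroy'' each other's equivariance. Once that is established, the full composition $\Oo$ (resp.\ $\Pp$) enforces all $n$ component properties simultaneously, yielding OPI (resp.\ PPE). The idempotence statements will follow from direct unwinding of the definitions.

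First, I would establish the key preservation lemma: if $g$ is \ref{eq:iPI} and $i'\neq i$, then $\Oo_{i'} g$ is again \ref{eq:iPI}. The only nontrivial case is verifying $(\Oo_{i'} g)(\rho_i u)_j = (\Oo_{i'} g)(u)_j$ for $j\notin\{i,i'\}$, which unrolls to
\begin{equation*}
    \frac{1}{|\Aa_{i'}|!}\sum_{\rho_{i'}\in\Gg_{i'}} g(\rho_{i'}\rho_i u)_j \;=\; \frac{1}{|\Aa_{i'}|!}\sum_{\rho_{i'}\in\Gg_{i'}} g(\rho_{i'} u)_j.
\end{equation*}
Since $\rho_i$ and $\rho_{i'}$ permute entries along disjoint axes of the payoff tensor, they commute, i.e., $\rho_{i'}\rho_i u = \rho_i\rho_{i'} u$. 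Applying $i$-PI of $g$ to each summand (noting $j\neq i$) collapses the left-hand side to the right-hand side. The case $j=i'$ is immediate because $(\Oo_{i'} g)(u)_{i'} = g(u)_{i'}$ by definition and $g$ is $i$-PI by assumption.

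With this preservation property in hand, I would induct on the composition $\Oo = \Oo_1\circ\cdots\circ\Oo_n$ applied to $\fNE$: after $\Oo_n$ the result is $n$-PI by \cref{lem:NE:OiPi}; after further applying $\Oo_{n-1}$ it is $(n-1)$-PI by \cref{lem:NE:OiPi} and still $n$-PI by the preservation lemma; continuing, $\Oo\fNE$ is $i$-PI for every $i\in[n]$, i.e., OPI. The PPE case is analogous but actually simpler to preserve: if $g$ is \ref{eq:iPE} and $i'\neq i$, then $(\Pp_{i'} g)(u)_j = g(u)_j$ for every $j\neq i'$, so in particular the $i$-th output coordinate is untouched and the $i$-PE condition is inherited verbatim; combined with \cref{lem:NE:OiPi} for the freshly introduced $i'$-PE property, iterating $\Pp_n,\Pp_{n-1},\dots,\Pp_1$ yields PPE.

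For the idempotence claims, I would verify directly from the definition of $\Oo_i$ that if $\fNE$ is $i$-PI, then every summand $\fNE(\rho_i u)_j$ (for $j\neq i$) equals $\fNE(u)_j$, while the $j=i$ coordinate equals $\fNE(u)_i$ by definition; hence $\Oo_i\fNE = \fNE$. If $\fNE$ is OPI, this applies for every $i$, and chaining gives $\Oo\fNE = \fNE$. The $\Pp$-analogue is symmetric, using $\rho_i^{-1}\bigl(\rho_i\fNE(u)_i\bigr) = \fNE(u)_i$ under the $i$-PE assumption. The main subtlety I would double-check is the commutativity $\rho_{i'}\rho_i u = \rho_i\rho_{i'} u$ underlying the preservation lemma, since that is the one place the argument leaves the abstract operator level and invokes the concrete tensor structure of $u$; everything else is bookkeeping on top of \cref{lem:NE:OiPi}.
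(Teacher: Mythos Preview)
Your proposal is correct and is essentially the detailed version of the paper's own proof, which reads in full: ``A direct inference from \cref{lem:NE:OiPi}.'' You have filled in the one genuinely nontrivial detail that the paper elides, namely the preservation step (that $\Oo_{i'}$ does not destroy a previously established $i$-PI property, and likewise for $\Pp$), and your treatment of it via the commutativity $\rho_{i'}\rho_i u = \rho_i\rho_{i'} u$ on disjoint tensor axes is exactly right.
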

Furthermore, we can also compose $\Pp \circ \Oo$ to construct a NE approximator with both PPE and OPI.

\paragraph{Orbit Averaging for (C)CE Approximator}
For CE or CCE approximator $f$, we define $\Qq_i$-project for player $i \in [n]$ to construct an $i$-PE approximator, which averages with respect to all the permutations of player $i$.
\begin{equation*}
    (\Qq_i f^{\text{(C)CE}})(u) = \frac{1}{|\Aa_i|!} \sum_{\rho_i \in\Gg_i} \rho_i^{-1} f^{\text{(C)CE}}(\rho_i u)
\end{equation*}
Similarly, we define $\Qq = \Qq_1 \circ \Qq_2 \circ \dots \circ \Qq_n$ as the composite operator.
\begin{restatable}{lemma}{lemCCEQ}
% \begin{lemma}
\label{lem:CCE:Q}
    $\Qq_i f^{\text{(C)CE}}$ is $i$-PE and $\Qq f^{\text{(C)CE}}$ is PE.
    Specifically, If $f^{\text{(C)CE}}$ is already $i$-PE or PE, then we have $\Qq_i f^{\text{(C)CE}} = f^{\text{(C)CE}}$ or $\Qq f^{\text{(C)CE}} = f^{\text{(C)CE}}$, respectively.
% \end{lemma}
\end{restatable}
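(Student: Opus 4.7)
The plan is to follow the classical orbit-averaging argument, leveraging two ingredients: the group structure of each $\Gg_i$, and the fact that permutations from different $\Gg_i$ commute in their action on payoff tensors and on joint strategies because they act on disjoint coordinate axes.

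First I would establish the single-player claim that $\Qq_i \fcCE$ is $i$-PE. For an arbitrary $\rho_i' \in \Gg_i$, I expand $(\Qq_i \fcCE)(\rho_i' u)$ by the definition of $\Qq_i$ and reparametrize the sum via $\tau_i = \rho_i \rho_i'$, which is a bijection of $\Gg_i$ onto itself by the group property. Rewriting $\rho_i^{-1} = \rho_i' \tau_i^{-1}$ and pulling the constant $\rho_i'$ outside the sum immediately produces $\rho_i' (\Qq_i \fcCE)(u)$, as required. Idempotence on $i$-PE inputs then falls out directly: if $\fcCE(\rho_i u) = \rho_i \fcCE(u)$, each summand collapses to $\fcCE(u)$, so the average equals $\fcCE(u)$.

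For the composite operator $\Qq = \Qq_1 \circ \cdots \circ \Qq_n$, I would introduce an auxiliary stability step: if $g$ is $j$-PE and $i \ne j$, then $\Qq_i g$ is still $j$-PE. The proof is a short direct computation that uses $\rho_i \rho_j u = \rho_j \rho_i u$ and the analogous commutativity on joint strategies to push $\rho_j$ through the averaging sum, after which the $j$-PE hypothesis on $g$ applies term by term and $\rho_j$ can be extracted outside. Combining this stability property with the single-player claim and inducting over the composition, $\Qq \fcCE$ inherits $j$-PE for every $j \in [n]$, hence PE. Idempotence of $\Qq$ on PE inputs is then immediate, since a PE function is $i$-PE for every $i$ and each $\Qq_i$ fixes such functions.

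The main obstacle is the commutativity claim underlying the stability step. It is intuitively clear that $\rho_i$ modifies only the $i$-th coordinate of the payoff tensor and of the joint strategy while $\rho_j$ modifies only the $j$-th, but writing this cleanly requires explicit index bookkeeping from the definition of $\rho_i u$ and $\rho_i \pi$. Once this commutativity is verified, all remaining steps are routine algebraic manipulations of group sums, with no equilibrium-theoretic input required beyond the definitions themselves; in particular \cref{lem:perm:CCE} motivates the target equivariance but is not invoked in the proof.
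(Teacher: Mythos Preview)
Your proposal is correct and matches the paper's approach for the single-player claim: the paper proves $\Qq_i\fcCE$ is $i$-PE by exactly the group-reparametrization you describe (insert $\rho_0\rho_0^{-1}$, substitute $\widehat\rho_i=\rho_i\rho_0$), and proves idempotence on $i$-PE inputs by the same termwise collapse. For the composite operator $\Qq$, the paper gives no explicit argument at all; your added stability step (that $\Qq_i$ preserves $j$-PE for $j\ne i$, via commutativity of $\rho_i$ and $\rho_j$ on payoffs and strategies) is precisely what is needed to make the inductive passage from $\Qq_i$ to $\Qq$ rigorous, so your write-up is in fact more complete than the paper's on this point.
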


Combined with \cref{lem:NE:OiPi}, \cref{lem:NE:OP} and \cref{lem:CCE:Q}, we can have the following corollary directly.
\begin{corollary}
\label{cor:idempotent}
$\Oo^2 = \Oo, \Pp^2 = \Pp, \Qq^2 = \Qq$.
\end{corollary}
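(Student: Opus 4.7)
The plan is to obtain each of the three idempotency statements as an immediate consequence of the corresponding orbit-averaging lemma stated just above, exploiting that each lemma contains both a \emph{landing} clause (the operator outputs something in the equivariant/invariant class) and a \emph{fixed-point} clause (on that class the operator is the identity).

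First I would handle $\Oo^2 = \Oo$. Given any NE approximator $\fNE$, \cref{lem:NE:OP} tells me that $\Oo \fNE$ is OPI. That same lemma says that if a function is already OPI, the operator $\Oo$ leaves it unchanged. Applying this to the OPI function $\Oo\fNE$ yields $\Oo(\Oo\fNE) = \Oo\fNE$, which is exactly $\Oo^2 = \Oo$ as operators on $\FfNE$. The identical two-line argument, but now invoking the PPE clauses of \cref{lem:NE:OP}, gives $\Pp^2 = \Pp$. Finally, $\Qq^2 = \Qq$ follows the same way from \cref{lem:CCE:Q}: for any (C)CE approximator $\fcCE$, $\Qq \fcCE$ is PE, and $\Qq$ fixes PE approximators, so $\Qq(\Qq\fcCE) = \Qq\fcCE$.

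There is essentially no obstacle here; the corollary is a repackaging of the ``landing + fixed-point'' structure already proved. The only thing worth being a little careful about is that $\Oo$, $\Pp$, $\Qq$ are defined as compositions $\Oo_1 \circ \cdots \circ \Oo_n$ etc., so one might worry that applying, say, $\Oo_j$ after $\Oo_i$ destroys the $i$-PI property obtained earlier. However, this worry is already absorbed into the statement of \cref{lem:NE:OP}, which asserts that the full composition $\Oo \fNE$ is OPI (i.e., $i$-PI for every $i$) — so I do not need to re-prove any commutation of the per-player operators and can invoke the lemma directly.
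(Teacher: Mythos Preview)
Your proof is correct and matches the paper's own justification: the corollary is stated as following ``directly'' from \cref{lem:NE:OiPi}, \cref{lem:NE:OP}, and \cref{lem:CCE:Q}, via exactly the landing-plus-fixed-point reasoning you spelled out.
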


The benefit of using orbit averaging is shown in the following lemma:
\begin{restatable}{lemma}{lemSubset}
% \begin{lemma}
    \label{lem:subset}
    Denote $\Xx$ as an idempotent operator, i.e. $\Xx^2 = \Xx$ (e.g. $\Oo, \Pp$ or $\Qq$).
    For function class $\Ff$ of NE, CE or CCE approximator, let $\Ff_\Xx$ be any subset of $\Ff$ that is closed under $\Xx$, then $\Xx \Ff_\Xx$ is the largest subset of $\Ff_\Xx$ that is invariant under $\Xx$.
% \end{lemma}
\end{restatable}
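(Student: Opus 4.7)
The plan is to verify three things in order: (i) $\Xx\Ff_\Xx$ actually lies inside $\Ff_\Xx$, (ii) every element of $\Xx\Ff_\Xx$ is a fixed point of $\Xx$, and (iii) any subset of $\Ff_\Xx$ whose elements are all $\Xx$-fixed is contained in $\Xx\Ff_\Xx$. Each step reduces to a one-line application of idempotence or the closure hypothesis.

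First I would unpack the word ``invariant'': for an idempotent operator, a set $\Ff'$ satisfies $\Xx\Ff' = \Ff'$ if and only if $\Xx f = f$ for every $f \in \Ff'$. Indeed, if $\Xx f = f$ pointwise then $\Xx\Ff' = \Ff'$ trivially; conversely, if $\Xx\Ff' = \Ff'$, then any $f \in \Ff'$ has the form $f = \Xx g$ for some $g \in \Ff'$, and $\Xx f = \Xx^2 g = \Xx g = f$ by \cref{cor:idempotent}. I would state this equivalence once and use it freely.

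Then step (i) is immediate: since $\Ff_\Xx$ is closed under $\Xx$, every $\Xx f$ with $f \in \Ff_\Xx$ lies in $\Ff_\Xx$, so $\Xx\Ff_\Xx \subseteq \Ff_\Xx$. Step (ii) applies idempotence directly: for $h = \Xx f \in \Xx\Ff_\Xx$, we have $\Xx h = \Xx^2 f = \Xx f = h$, so every element of $\Xx\Ff_\Xx$ is $\Xx$-fixed, which by the equivalence above means $\Xx\Ff_\Xx$ is invariant under $\Xx$. For step (iii), let $\Ff' \subseteq \Ff_\Xx$ be any invariant subset; then for each $f \in \Ff'$ we have $f = \Xx f$, and since $f \in \Ff' \subseteq \Ff_\Xx$ this exhibits $f$ as $\Xx(\text{something in }\Ff_\Xx)$, placing $f \in \Xx\Ff_\Xx$. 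Hence $\Ff' \subseteq \Xx\Ff_\Xx$, which is the maximality claim.

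There is essentially no hard step here: the whole argument is driven by $\Xx^2 = \Xx$ together with $\Xx\Ff_\Xx \subseteq \Ff_\Xx$. The only subtle point worth being careful about is the definition of ``invariant under $\Xx$'' at the set level versus the pointwise level, which is why I would dispose of it explicitly at the start; once this is clarified, the three inclusions are each a single line.
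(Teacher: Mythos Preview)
Your proposal is correct and follows essentially the same three-claim structure as the paper's proof: (1) $\Xx\Ff_\Xx \subseteq \Ff_\Xx$ by closure, (2) $\Xx(\Xx\Ff_\Xx) = \Xx\Ff_\Xx$ by idempotence, and (3) any $\Xx$-invariant $\Yy \subseteq \Ff_\Xx$ satisfies $\Yy = \Xx\Yy \subseteq \Xx\Ff_\Xx$. The only difference is cosmetic: you argue pointwise and explicitly record the equivalence between set-level invariance $\Xx\Ff' = \Ff'$ and the pointwise fixed-point condition, whereas the paper works directly at the set level and leaves that equivalence implicit.
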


According to \cref{lem:NE:OP}, \cref{lem:CCE:Q} and \cref{lem:subset}, $\Oo\FfNE$(or $\Pp\FfNE$/$\Qq\FfcCE$) is the largest subset of $\FfNE$(or $\FfNE$/$\FfcCE$) with the corresponding property (OPI, PPE, or PE) if $\FfNE$(or $\FfNE$/$\FfcCE$) is closed operator under $\Oo$(or $\Pp$/$\Qq$).
The result tells that the orbit averaging operators, while enforcing the operated function to be equivariance or invariance, keep as large capacity of the function class as possible. 
Therefore, we believe that orbit averaging is an ideal approach to constructing equivariant or invariant functions.

% -----

\section{Theoretical Analysis of Benefits}
\label{sec:benefits}

In this section, we theoretically analyze the benefits of equivariant approximators with respect to generalizability and approximation.

\subsection{Benefits for Generalization}
We first derive the generalization bound and sample complexity for general approximator classes, and then we show the benefits of equivariant approximators by applying orbit averaging to the approximators.

The representativeness of an approximator class is measured by the covering numbers~\citep{shalev2014understanding} under $\ell_{\infty}$-distance, which are defined as follows:
\begin{definition}[$\ell_{\infty}$-distance] 
The $\ell_{\infty}$-distance between two equilibrium approximators $f, g$ is:
\begin{equation*}
    \ell_{\infty}(f,g) = \max_{u \in \Uu}\norm{f(u) - g(u)},
\end{equation*}
where we define the distance of two product strategies $\sigma$ and $\sigma'$ as
\begin{equation*}
    \norm{\sigma^1 - \sigma^2} 
    = \max_{i\in[n]} \sum_{a_i \in \Aa_i} |\sigma^1_i(a_i) - \sigma^2_i(a_i)|
\end{equation*}
and the distance of two joint strategy $\pi$ and $\pi'$ as
\begin{equation*}
    \norm{\pi^1 - \pi^2} = 
    \sum_{a \in \Aa} |\pi^1(a) - \pi^2(a)|
\end{equation*}
\end{definition}

\begin{definition}[$r$-covering number]
    For $r > 0$, we say function class $\Ff_r$ $r$-covers another function class $\Ff$ under $\ell_{\infty}$-distance if for all function $f \in \Ff$, there exists $f_r \in \Ff_r$ such that $\norm{f - f_r}_{\infty} \le r$.
    The $r$-covering number $\Nn_{\infty}(\Ff, r)$ of $\Ff$ is the cardinality of the smallest function class $\Ff_r$ that $r$-covers $\Ff$ under $\ell_{\infty}$-distance.
\end{definition}

Based on covering numbers, we provide the generalization bounds of NE, CE and CCE approximators. 
The bounds describe the difference between the expected testing approximation and empirical training approximation.

\begin{restatable}{theorem}{thmGB}[Generalization bound]
% \begin{theorem}
\label{thm:GB}
    For function class $\Ff$ of NE, CE or CCE approximator, with probability at least $1 - \delta$ over draw of the training set $S$ (with size $m$) from payoff distribution $\Dd$, for all approximator $f \in \Ff$ we have
    \begin{equation*}
    \begin{aligned}
    	&\EE_{u\sim\Dd}[\Ee(f(u), u)] - \frac{1}{m}\sum_{u\in S}\Ee(f(u), u) 
    	\le 2\cdot \inf_{r>0}\{ \sqrt{\frac{2\ln\Nn_{\infty}(\Ff, r)}{m}} + Lr \}
    	+ 4\sqrt{\frac{2\ln(4/\delta)}{m}},
    \end{aligned}
    \end{equation*}
    where $L = 2n$ for NE approximator, and $L = 2$ for CE and CCE approximators.
% \end{theorem}
\end{restatable}

To get the theorem, we first show that all three equilibrium approximations are Lipschitz continuous with respect to strategies.
Afterward, we derive the Rademacher complexity~\citep{bartlett2002rademacher} of the expected approximation based on the Lipschitz continuity and covering numbers. 
See \cref{prf:thm:GB} for the detailed proof. 

We can see from \cref{thm:GB} that, with a large enough training set, the generalization gaps of equilibrium approximators go to zero if the covering number $\Nn_\infty(\Ff, r)$ is bounded.
As a result, we can estimate the expected testing performance through the empirical training performance.

We can also derive the sample complexities of equilibrium approximators to achieve the desirable generalizability.
\begin{restatable}{theorem}{thmSC}[Sample complexity]
% \begin{theorem}
\label{thm:SC}
    For $\epsilon, \delta \in (0, 1)$, function class $\Ff$ of NE, CE or CCE approximator and distribution $\Dd$, with probability at least $1 - \delta$ over draw of the training set $S$ with
    \begin{equation*}
    	m \ge \frac{9}{2\epsilon^2}\left(\ln\frac{2}{\delta} + \ln\Nn_{\infty}(\Ff, \frac{\epsilon}{3L})\right)
    \end{equation*}
    games sampled from $\Dd$, $\forall f \in \Ff$ we have 
    \begin{equation*}
        \EE_{u\sim\Dd}[\Ee(f(u), u)] \le \frac{1}{m}\sum_{u\in S}\Ee(f(u), u) + \epsilon,
    \end{equation*}
    where $L = 2n$ for NE approximators, and $L = 2$ for CE and CCE approximators.
% \end{theorem}
\end{restatable}

The proof is based on the Lipschitz continuity of approximation, uniform bound, and concentration inequality.  
See \cref{prf:thm:SC} for details.
\cref{thm:SC} is also called the uniform convergence of function class $\Ff$, which is a sufficient condition for agnostic PAC learnable~\citep{shalev2014understanding}. 

As for the benefits of equivariant approximators for generalizability,
the following result indicates that the projected equilibrium approximators have smaller covering numbers.
\begin{restatable}{theorem}{thmCO}
% \begin{theorem}
\label{thm:CO}
The $\Oo$-projected, $\Pp$-projected and $\Qq$-projected approximator classes have smaller covering numbers, i.e., $\forall r > 0$ we have
\begin{align*}
    \Nn_{\infty}(\Oo\FfNE, r) &\le  \Nn_{\infty}(\FfNE, r), 
    \\
    \Nn_{\infty}(\Pp\FfNE, r) &\le  \Nn_{\infty}(\FfNE, r),  
    \\
    \Nn_{\infty}(\Qq\FfcCE, r) &\le \Nn_{\infty}(\FfcCE, r)
\end{align*}
% \end{theorem}
\end{restatable}
The proof is done by showing all the operators are contraction mappings. See \cref{prf:thm:CO} for details.

Both the generalization bounds in \cref{thm:GB} and the sample complexities in \cref{thm:SC} decrease with the decrease of covering numbers $\Nn_\infty(\Ff, r)$.
Thus, we can see from \cref{thm:CO} that both PPE and OPI can improve the generalizability of NE approximators, and PE can improve the generalizability of CE and CCE approximators.

\subsection{Benefits for Approximation}
We then show the benefits of equivariance for approximation when the payoff distribution is invariant under permutation.
The permutation-invariant distribution holds when the action is anonymous or indifferent or when we pre-train the equilibrium approximators using a manually designed distribution~\citep{marris2022turbocharging}.

\paragraph{(C)CE Approximator}
The following theorem tells the benefit of permutation-equivariance in decreasing the exploitability of (C)CE approximators.
\begin{restatable}{theorem}{thmCCEE}
% \begin{theorem}
\label{thm:CCE:E}
    When the payoff distribution $\Dd$ is invariant under the permutation of payoffs, the $\Qq$-projected (C)CE approximator has a smaller expected equilibrium approximation. Formally, for all $\fcCE \in \FfcCE$ and permutation-invariant distribution $\Dd$, we have
    \begin{equation*}
        \EE_{u \sim \Dd}[\Ee(\Qq \fcCE(u), u)] \le \EE_{u \sim \Dd}[\Ee(\fcCE(u), u)],
    \end{equation*}
% \end{theorem}
\end{restatable}

The proof is done by using the convexity of approximation. See \cref{prf:thm:CCE:E} for details.
We can see from \cref{thm:CCE:E} that, when payoff distribution is invariant under permutation, it is beneficial to use equivariant architecture as the CE or CCE approximators.

\paragraph{NE Approximator}
As for NE approximator, we have similar results.

\begin{restatable}{theorem}{thmNEBiC}
\label{thm:NE:Bi:c}
    For bimatrix constant-sum games, when the payoff distribution $\Dd$ is invariant under the permutation of payoffs, then the $\Xx$-projected ($\Xx \in \{\Oo, \Pp\}$) NE approximator has a smaller expected exploitability. Formally, for all $\fNE \in \FfNE$ and permutation-invariant distribution $\Dd$ for bimatrix constant-sum games, we have
    \begin{equation*}
        \EE_{u \sim \Dd}[\sum_i \Ee_i((\Xx \fNE)(u), u)] \le \EE_{u \sim \Dd}[\sum_i \Ee_i(\fNE(u), u)]
    \end{equation*}
\end{restatable}

\begin{restatable}{theorem}{thmNEP}
\label{thm:NE:P}
    When the payoff distribution $\Dd$ is invariant under the permutation of payoffs, and $\fNE$ satisfies OPI, then the $\Pp$-projected NE approximator has a smaller expected NE approximation. Formally, for all $\fNE \in \FfNE$ that is OPI and permutation-invariant distribution $\Dd$, we have
    \begin{equation*}
        \EE_{u \sim \Dd}[\Ee((\Pp \fNE)(u), u)] \le \EE_{u \sim \Dd}[\Ee(\fNE(u), u)].
    \end{equation*}
\end{restatable}

\begin{restatable}{theorem}{thmNEO}
\label{thm:NE:O}
    For bimatrix games, when the payoff distribution $\Dd$ is invariant under the permutation of payoffs, and $\fNE$ satisfies PPE, then the $\Oo$-projected NE approximator has a smaller expected NE approximation. Formally, for all $\fNE \in \FfNE$ that is PPE and permutation-invariant distribution $\Dd$ of bimatrix games, we have
    \begin{equation*}
        \EE_{u \sim \Dd}[\Ee((\Oo \fNE)(u), u)] \le \EE_{u \sim \Dd}[\Ee(\fNE(u), u)].
    \end{equation*}
\end{restatable}

\cref{thm:NE:P} and \cref{thm:NE:O} tell that PPE and OPI approximators can achieve better approximation than ones with only PPE or OPI.
Meanwhile, we can see from \cref{thm:NE:Bi:c} that for bimatrix constant-sum games (such as zero-sum games), it can be preferred to introduce PPE or OPI to the architectures.

% -----

\section{Theoretical Analysis of Limitations}
\label{sec:limitation}

As we discussed in \cref{sec:benefits}, equivariant approximators enjoy better generalizability and better approximation sometimes.
However, as we will show, they have some limitations regarding equilibrium selection and social welfare.
Such limitations attribute to the limited representativeness caused by equivariance.

\subsection{Equilibrium Selection}
We first show that there may be equilibria points that equivariant approximators will never find.
We illustrate such limitation in permutation-invariant games, which is defined as follows:
\begin{definition}[Permutation-$\rho$-Invariant Game]
\label{def:PERM}
We say a game $\Gamma_u$ is permutation-$\rho$-invariant, where $\rho = \circ_{i\in [n]}\rho_i$, if the payoff $u$ is permutation-invariant with respect to $\rho$. That is, $\rho u = u$.
\end{definition}
Permutation-$\rho$-invariance indicates that one cannot distinguish joint action $a$ from $\rho a$ using only the payoff $u$. 
We'd like to provide an example to show more insight of permutation-$\rho$-invariant games:
\begin{example}
For a $2$-player game $\Gamma_u = (2, u=(u_1, u_2), \Aa = ([m_1], [m_2]))$ , Let $\rho_i=(m_i, m_i-1,\dots,1)$ and $\rho = \rho_1\circ\rho_2$.
If one of the following conditions holds, then $u$ is permutation-$\rho$-invariant:
\begin{enumerate}
    \item $u_1$ and $u_2$ are symmetric and persymmetric (i.e.,  symmetric with respect to the northeast-to-southwest diagonal) squares.
    \item Both $u_1$ and $u_2$ are centrosymmetric, i.e., 
    $u_i(x, y) = u_i(m_1 + 1 - x, m_2 + 1 - y)$ for $i \in \{1, 2\}, x \in [m_1]$ and $y \in [m_2]$. 
\end{enumerate} 
\end{example}

For permutation $\rho = (\circ_{i\in[n]}\rho_i)$ and player $k\in [n]$, we denote the set of non-fixed actions of player $k$ under $\rho_k$ as
\begin{equation*}
    V(\rho_k) \coloneqq \{a_k|a_k\in\Aa_k,\rho_k(a_k)\neq a_k\}.
\end{equation*}
Based on $V(\rho_k)$, we find some equilibria points of permutation-$\rho$-invariant games that any equivariant approximators will never find.

\begin{restatable}{theorem}{thmNEG}
\label{thm:NEG}
For a permutation-$\rho$-invariant game $\Gamma_u$. 
if there is a pure NE $a^* = (a^*_i)_{i\in[n]}$ and at least one player $k\in [n]$ such that $a^*_k \in V(\rho_k)$, then $a^*$ will never be found by any NE approximator with both PPE and OPI.
Besides, $a^*$ (as a pure CE or CCE) will also never be found by any CE or CCE approximator with PE.
\end{restatable}

We illustrate \cref{thm:NEG} by the following example:
\begin{example}
\label{example:NE}
Consider a bimatrix game with identity utility
\begin{equation*}
\label{eq:example:u}
    u=\begin{bmatrix}
    \bm{1,1} & 0,0 \\ 
    0,0 & \bm{1,1}
    \end{bmatrix}   
\end{equation*}
There are two pure NE (bolded in the above matrix) and one mixed NE of $\sigma_1=(0.5, 0.5)$ and $\sigma_2=(0.5, 0.5)$.
Let $\rho_i$ be the unique permute function (except for identity function) of player $i \in [2]$, and $\rho = \rho_1 \circ \rho_2$.
The game is permutation-$\rho$-invariant.

\textbf{Case 1}:
Let $f$ be a permutation-equivariant CE or CCE approximator, and denote $\pi = f(u)$. We have 
\begin{equation*}
    \pi = f(u) \overset{(a)}{=} f(\rho u) \overset{(b)}{=} \rho f(u), 
\end{equation*}
where $(a)$ holds by permutation-$\rho$-invariance of $u$, and $(b)$ holds by PE of $f$.
Thus, we have $\pi_{1,1}=\pi_{2,2} \in [0, \frac{1}{2}]$ and $\pi_{1,2} = \pi_{2,1} \in [0, \frac{1}{2}]$.
As a result, the two pure (C)CEs cannot be found.

\textbf{Case 2}:
Let $f$ be a NE approximator that holds PPE and OPI.
Denote $f(u)=(\sigma_1,\sigma_2)$, where $\sigma_1 = (p_1,1-p_1)$ and $\sigma_2=(p_2,1-p_2)$.
By PPE and OPI of $f$, we have 
% $f(u)_1 = (p_1, 1 - p_1) 
% \overset{(a)}{=} f(\rho_1 \rho_2 u)_1 \overset{(b)}{=} \rho_1 f(\rho_2 u)_1 
% \overset{(c)}{=} \rho_1 f(u)_1
% = (1-p_1,p_1)$,
\begin{equation*}
    \begin{aligned}
        f(u)_1 &= (p_1, 1 - p_1) 
        \overset{(a)}{=} f(\rho_1 \rho_2 u)_1 \overset{(b)}{=} \rho_1 f(\rho_2 u)_1 
        \overset{(c)}{=} \rho_1 f(u)_1
        = (1-p_1,p_1),
    \end{aligned}
\end{equation*}
where $(a)$ holds by permutaion-$\rho$-invariance of $u$, $(b)$ holds by PPE of $f$, and $(c)$ holds by OPI of $f$.
As a result, the only NE that $f$ could find is the mixed NE.
\end{example}

As we can see from the example and \cref{thm:NEG}, the equivariance, while introducing inductive bias to the approximator architecture, is also a strong constraint. 
Such a constraint is why the equivariant approximators cannot find all the equilibria points.

\subsection{Social Welfare}

The social welfare of a joint strategy $\pi$ is defined as the sum of all players' utilities, i.e., 
\begin{equation*}
    \SW(\pi,u) = \sum_{i\in[n]} u_i(\pi).
\end{equation*}
The equilibrium with higher social welfare is usually preferred~\citep{marris2022turbocharging}.

To analyze the social welfare of equivariant approximators, we define the worst social welfare ratio as follows:
\begin{definition}
For any $N, M \ge 2$ and two NE (or CE/CCE) approximator classes $\Ff_1, \Ff_2$ that target on games with number of players $n \le N$ and $|\Aa_i|\le M$, we define the worst social welfare ratio of $\Ff_1$ over $\Ff_2$ as:
\begin{equation*}
\begin{aligned}
    &\SWR(\Ff_1,\Ff_2) \coloneqq \inf_{\Dd}\frac{ \max_{f_1 \in \Ff_1}\EE_{u\sim\Dd} \SW( f_1(u),u)}{ \max_{f_2 \in \Ff_2} \EE_{u\sim\Dd} \SW(f_2(u),u)}
\end{aligned}
\end{equation*}
\end{definition}

$\SWR(\Ff_1, \Ff_2)$ measures the relative representativeness of $\Ff_1$ over $\Ff_2$ in terms of social welfare.
Based on that, we have the following result for equivariant CE and CCE approximator classes:
\begin{restatable}{theorem}{thmNEGRatioCCE}
% \begin{theorem}
\label{thm:NEG:ratio:CCE}
Given $N,M \ge 2$, let $\FfcCE_\mathrm{PE}$ be the function class (target on games with number of players $n \le N$ and $|\Aa_i|\le M$) of all the (C)CE approximators with PE.
Denote by $\FfcCE_\mathrm{general}$ the function class of all the (C)CE approximators.
Then we have
\begin{equation*}
    \begin{aligned}
        &\SWR(\FfcCE_\mathrm{PE}, \FfcCE_\mathrm{general}) = 1.
    \end{aligned}
\end{equation*}
% \end{theorem}
\end{restatable}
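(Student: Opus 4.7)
The plan is to establish the two directions $\SWR(\FfcCE_\mathrm{PE}, \FfcCE_\mathrm{general}) \le 1$ and $\ge 1$ separately. The upper bound is immediate from $\FfcCE_\mathrm{PE}\subseteq\FfcCE_\mathrm{general}$: any PE approximator is in particular a general (C)CE approximator, so the numerator of the ratio never exceeds the denominator for any distribution $\Dd$.

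For the matching lower bound I would first simplify the denominator. Since $\FfcCE_\mathrm{general}$ contains every function $\Uu\to\Delta\Aa$ without any structural restriction, and $\SW(\pi,u)$ is linear in $\pi$, the maximizer over $\Delta\Aa$ for each fixed $u$ can always be taken to be a pure joint action, yielding
\begin{equation*}
\max_{f_2 \in \FfcCE_\mathrm{general}} \EE_{u\sim\Dd}\SW(f_2(u), u) \;=\; \EE_{u\sim\Dd}\max_{a\in\Aa}\sum_{i\in[n]} u_i(a).
\end{equation*}

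The key step is to exhibit a single PE approximator $f^\star$ that realizes this pointwise bound. I would set $A^\star(u)\coloneqq \arg\max_{a\in\Aa}\sum_i u_i(a)$ and let $f^\star(u)$ be the uniform distribution supported on $A^\star(u)$, so that $\SW(f^\star(u),u)=\max_a\sum_i u_i(a)$ automatically. Permutation-equivariance is checked via the identity $\sum_j (\rho_i u)_j(a) = \sum_j u_j(\rho_i^{-1}(a_i),a_{-i})$, which gives $A^\star(\rho_i u) = \rho_i A^\star(u)$; because $\rho_i$ acts as a bijection on $\Aa$, its pushforward sends the uniform measure on $A^\star(u)$ to the uniform measure on $\rho_i A^\star(u)$, so $f^\star(\rho_i u)=\rho_i f^\star(u)$. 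Hence the numerator is at least the denominator for every $\Dd$, giving $\SWR\ge 1$ and the theorem follows.

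There is no substantive obstacle to this argument; the only fiddly point is the equivariance verification of the uniform-over-argmax map, which rests on the two clean facts that $A^\star$ commutes with each $\rho_i$ and that the pushforward of the uniform measure by a bijection is again uniform on the image. Conceptually, the statement says that PE is costless for the maximum attainable (C)CE-approximator welfare, in sharp contrast to the NE story where PPE combined with OPI can force arbitrarily small reachable welfare.
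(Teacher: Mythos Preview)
Your proof is correct and in fact more direct than the paper's. Both arguments aim at the same target---producing a single PE approximator that attains $\max_{a}\sum_i u_i(a)$ for every $u$---but they get there differently. The paper starts from an arbitrary maximum-welfare approximator $f$ and applies the orbit-averaging operator $\Qq$ to obtain $f_0=\Qq f$; it then uses linearity of $\SW$ in $\pi$ to show $\EE_{u\sim\Dd}\SW(\Qq_i f(u),u)=\EE_{u\sim\Dd}\SW(f(u),u)$ for permutation-invariant $\Dd$, and finally recovers pointwise optimality of $f_0$ from the equality of expectations on each orbit. Your argument bypasses this machinery entirely: you write down the uniform-over-$\arg\max$ map $f^\star$, verify PE by the two clean facts you name (that $A^\star$ commutes with each $\rho_i$ and that pushforwards of uniform measures by bijections are uniform), and conclude immediately. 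What the paper's route buys is consistency with its central tool $\Qq$, reinforcing the narrative that orbit averaging is the ``right'' way to symmetrize; what your route buys is a self-contained, assumption-free construction that does not need to pass through permutation-invariant distributions at all.
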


\cref{thm:NEG:ratio:CCE} tells that, while the permutation-equivariant (C)CE approximator class may not be able to find all the (C)CE in a game, it can keep the social welfare of the output solutions.

However, when considering equivariant NE approximators, we have the following negative result:
\begin{restatable}{theorem}{thmNEGRatioNE}
% \begin{theorem}
\label{thm:NEG:ratio:NE}
Given $N,M \ge 2$, let ${\Ff}^\NE_\mathrm{OPI},{\Ff}^\NE_\mathrm{PPE}$ and ${\Ff}^\NE_\mathrm{both}$ be the function classes (target on games with number of players $n \le N$ and $|\Aa_i|\le M$) of all the NE approximators with OPI, PPE and both.
Denote the function class of all the NE approximators as $\FfNE_\mathrm{general}$.
Then we have
\begin{align}
    \SWR({\Ff}^\NE_\mathrm{OPI}, \FfNE_\mathrm{general}) &=  \frac{1}{M^{N-1}},\label{eq:NEG:ratio:OPI}\\
    \SWR({\Ff}^\NE_\mathrm{PPE}, \FfNE_\mathrm{general}) &\le \frac{1}{M},\label{eq:NEG:ratio:PPE}\\
    \SWR({\Ff}^\NE_\mathrm{both}, \FfNE_\mathrm{general}) &= \frac{1}{M^{N-1}}.\label{eq:NEG:ratio:both}
\end{align}
Additionally, when $M \ge 3$, denote by $\widetilde\Ff^\NE_\mathrm{both}$ the function class of all the NE oracles (functions that always output exact NE solutions of the input games) with both PPE and OPI, and by  $\widetilde\Ff^\NE_\mathrm{general}$ the function class of all the NE oracles.
Then we have
\begin{equation}\label{eq:NEG:ratio:both:oracle}
    \SWR(\widetilde{\Ff}^\NE_\mathrm{both}, \widetilde\Ff^\NE_\mathrm{general}) = 0.
\end{equation}
% \end{theorem}
\end{restatable}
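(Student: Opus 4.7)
The theorem has four claims with different equivariance constraints, so I would organize the proof into four parts sharing a common toolkit centered on the $N$-player $M$-action pure coordination game $u^*$ defined by $u^*_i(a) = \mathbf{1}[a_1 = a_2 = \cdots = a_N]$, whose automorphism group (as a permutation of actions) is precisely the diagonal subgroup $\{(\rho, \ldots, \rho) : \rho \in S_M\}$.

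\textbf{Upper bounds.} For Equation~\eqref{eq:NEG:ratio:both}, I would take $\Dd = \{u^*\}$. The argument of Example~\ref{example:NE} generalizes: since $\rho_\mathrm{diag} u^* = u^*$ for every $\rho$, iterating PPE and OPI over $\rho_\mathrm{diag} = \rho_1 \circ \cdots \circ \rho_N$ with $\rho_i = \rho$ forces $f(u^*)_i = \rho f(u^*)_i$ for all $\rho \in S_M$, so $f(u^*)_i$ is uniform, giving welfare $N/M^{N-1}$ versus general optimum $N$. For Equation~\eqref{eq:NEG:ratio:OPI}, a single-game distribution does not suffice under OPI alone, so I would instead take $\Dd$ uniform on the orbit $\{\rho u^* : \rho \in S_M^N\}$ (modulo diagonal stabilizer). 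A careful orbit-counting argument shows that for each player $j$ the $\rho_{-j}$-orbit of any representative covers the entire support of $\Dd$, so OPI forces $f$ to be constant across $\Dd$; then a direct computation gives $\mathbb{E}_\rho[\prod_i \sigma_i(\rho_i(k))] = M^{-N}$ and hence $\mathbb{E}_\Dd[\SW] = N/M^{N-1}$. For Equation~\eqref{eq:NEG:ratio:PPE}, I would take $\Dd = \{\rho_2 u^* : \rho_2 \in S_M\}$ in the 2-player case: the diagonal automorphism $(\rho, \rho) u^* = u^*$ combined with PPE on $\rho_1 = \rho$ yields $f(\rho_2 u^*)_1 = \rho_2^{-1} f(u^*)_1$, while PPE on $\rho_2$ directly gives $f(\rho_2 u^*)_2 = \rho_2 f(u^*)_2$; averaging over $\rho_2$ collapses the per-game welfare to $2/M$ regardless of $\sigma_1, \sigma_2$.

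\textbf{Lower bounds for Equations~\eqref{eq:NEG:ratio:OPI} and~\eqref{eq:NEG:ratio:both}.} Since $\Ff^\NE_\mathrm{both} \subseteq \Ff^\NE_\mathrm{OPI}$ we have $\SWR(\Ff^\NE_\mathrm{OPI}) \ge \SWR(\Ff^\NE_\mathrm{both})$, so it is enough to show $\SWR(\Ff^\NE_\mathrm{both}) \ge 1/M^{N-1}$. For any $\Dd$, let $A^*(u) = \arg\max_a \sum_i u_i(a)$ and define $\bar f_j(u)$ to be the $j$-th coordinate marginal of the uniform distribution on $A^*(u)$. Because $A^*(\rho u) = \rho A^*(u)$, one checks directly that $\bar f$ satisfies both PPE and OPI. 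Then $\SW(\bar f(u), u) \ge V^*(u) \cdot \Pr_{b \sim \prod_j \bar f_j(u)}[b \in A^*(u)]$, and the key lemma is that for any non-empty $A \subseteq [M]^N$, if $b_j$ is drawn from the $j$-th marginal of the uniform measure on $A$, then $\Pr[b \in A] \ge 1/M^{N-1}$. For $N = 2$ this reduces to the bipartite-graph inequality $\sum_{(u,v) \in E} d(u) d(v) \ge |E|^2 / M$, provable by Cauchy--Schwarz via $\sum_{(u,v) \in E} 1/(d(u) d(v)) \le M$; for general $N$ the bound follows by induction or an iterated application of the same CS identity along coordinates. Combined with the matching upper bound, both equations become equalities.

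\textbf{The oracle case, Equation~\eqref{eq:NEG:ratio:both:oracle}.} For $M \ge 3$, I would construct a one-parameter family of games $u^{(\epsilon)}$ on $[M]^2$ invariant under a non-trivial permutation whose cyclic orbits have length $\ge 2$ in every coordinate. By Theorem~\ref{thm:NEG}, any pure NE supported on such an orbit is unreachable by a PPE+OPI oracle, so the oracle must play a PPE+OPI-invariant (mixed) NE whose welfare can be tuned to $O(\epsilon)$ by scaling the off-diagonal payoffs, while a general NE oracle can return the untouched pure NE of welfare $\Omega(1)$. Sending $\epsilon \to 0$ yields $\SWR = 0$. The reason the $M = 2$ case is excluded is precisely that with only two actions the invariant mixed NE is the uniform strategy over both pure NEs, whose welfare is always bounded below by a constant fraction of the pure optimum.

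\textbf{Main obstacles.} The technical heart is the combinatorial inequality $\Pr_{b \sim \prod_j f_j}[b \in A] \ge M^{-(N-1)}$ behind the lower bound, which is a Loomis--Whitney-type statement and is the step most prone to error. A secondary difficulty is verifying, in the oracle construction of Equation~\eqref{eq:NEG:ratio:both:oracle}, that the chosen $u^{(\epsilon)}$ actually admits \emph{some} PPE+OPI-consistent NE (so that $\widetilde\Ff^\NE_\mathrm{both}$ is non-empty on this game), which requires checking that a symmetric mixed strategy is indeed an equilibrium.
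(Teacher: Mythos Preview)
Your upper-bound constructions are essentially the paper's (your single-game distribution $\Dd=\{u^*\}$ for the PPE+OPI case is even a bit cleaner than the paper's orbit distribution, and your OPI and PPE arguments track the paper's). The real divergence is in the lower bound and the oracle case.

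\textbf{Lower bound.} The paper does not use your $\bar f_j = j$-th marginal of $\mathrm{Unif}(A^*)$. Instead it passes to the cooperation game $\tilde u = \tfrac{1}{n}\sum_i u_i$ and takes $f_0(u)_i$ to be \emph{uniform on the support} $\Bb_i = \{a_i : \max_{a_{-i}}\tilde u(a_i,a_{-i}) = b\}$ of that marginal. The bound is then elementary: for each $a_i\in\Bb_i$ there is a completion $a_{-i}^*\in A^*$, every coordinate $a_j^*$ automatically lies in $\Bb_j$, and the other players hit $a_{-i}^*$ with probability $\prod_{j\ne i}1/|\Bb_j|\ge 1/M^{N-1}$. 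No combinatorial inequality is needed. Your lemma $\Pr_{b\sim\prod_j p_j}[b\in A]\ge M^{-(N-1)}$ is in fact true (a clean proof: Jensen gives $\Pr[Y\in A]\ge |A|\exp(-\sum_j H(X_j))$, and $\sum_j H(X_j)-H(X)\le (N-1)\max_j H(X_j)\le (N-1)\log M$ since $H(X)\ge\max_j H(X_j)$), but the Cauchy--Schwarz argument you sketch does \emph{not} iterate directly to $N\ge 3$: conditioning on one coordinate changes the remaining marginals, and the naive H\"older bound yields $\sum_{a\in A}\prod_j n_j(a_j)\ge |A|^2/M$ rather than $|A|^N/M^{N-1}$. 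So as stated your proof plan has a gap precisely at the step you flagged; the paper sidesteps it completely.

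\textbf{Oracle case.} The paper's construction uses a permutation \emph{with} a fixed point: on the $3\times 3$ game invariant under $\rho_1=\rho_2=(2,1,3)$, actions $1,2$ are swapped but action $3$ is fixed, and the unique PPE+OPI-compatible NE is the pure strategy $(3,3)$ with welfare $2\varepsilon$ (after collapsing the symmetric actions this is a $2\times 2$ Prisoner's Dilemma, so existence and uniqueness are immediate). Your description (``cyclic orbits of length $\ge 2$ in every coordinate'') points toward a fixed-point-free permutation, which forces the equivariant oracle onto a genuinely mixed NE and makes the existence/welfare verification you worry about substantially harder. The paper's fixed-point trick is the idea you are missing here.
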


The proof is done by construction (See \cref{prf:thm:NEG:ratio:NE} for details). 
As an illustration of \cref{eq:NEG:ratio:both:oracle}, consider a bimatrix game with the following payoff:
\begin{equation*}
    u = \begin{bmatrix}
    {1,1} & 0,0 & 0,\frac{1}{2}+\varepsilon
    \\
    0,0 & {1,1} & 0,\frac{1}{2}+\varepsilon
    \\
    \frac{1}{2}+\varepsilon,0 & \frac{1}{2}+\varepsilon,0 & \varepsilon,\varepsilon
    \end{bmatrix}
\end{equation*}
for $\epsilon \in (0, \frac{1}{2})$.
The maximum NE (the upper-left corner of $u$) social welfare is $2$, which can be found by at least one NE oracle in $ \widetilde\Ff^\NE_\mathrm{general}$.
However, the only NE (the lower-right corner of $u$) that the NE oracles in $\widetilde{\Ff}^\NE_\mathrm{both}$ could find only has a social welfare of $2\epsilon$.
As a result, 
\begin{equation*}
    \mathrm{SWR}_{2,3}(\widetilde{\Ff}^\NE_\mathrm{both}, \widetilde\Ff^\NE_\mathrm{general}) \le \frac{2\epsilon}{2} = \epsilon,
\end{equation*}
which goes to zero as $\epsilon \to 0$.
Recall that we always have $\mathrm{SWR}_{N,M} \ge 0$, thus \cref{eq:NEG:ratio:both:oracle} holds when $N=2$ and $M=3$.

\cref{thm:NEG:ratio:NE} tells that equivariant NE approximators may lose some social welfare while enjoying better generalizability.
Such a result inspires us to balance generalizability and social welfare when designing the NE approximator architecture.

% \begin{theorem}
% \label{thm:NEG:2}

% If we withdraw the condition in Theorem \ref{thm:NE:E} that $\fNE$ satisfies OPI, then there exists some $\fNE\in\Ff$ and a payoff matrice distribution $\Dd$ such that
% \begin{equation*}
%     \EE_{u \sim \Dd}[\Ee((\Pp \fNE)(u), u)] > \EE_{u \sim \Dd}[\Ee(\fNE(u), u)]
% \end{equation*}

% \end{theorem}

% This theorem means that orbit average might cause a larger expected exploibility.

% -----

% \section{Experiments(maybe)}

\section{Conclusion and Future Work}
\label{sec:conclusion}

In this paper, we theoretically analyze the benefits and limitations of equivariant equilibrium approximators, including player-permutation-equivariant (PPE) and opponent-permutation-invariant (OPI) NE approximator, and permutation-equivariant (PE) CE and CCE approximators.
For the benefits, we first show that these equivariant approximators enjoy better generalizability. 
To get the result, we derive the generalization bounds and sample complexities based on covering numbers, and then we prove that the symmetric approximators have lower covering numbers.
We then show that the equivariant approximators can decrease the exploitability when the payoff distribution is invariant under permutation.
For the limitations, we find the equivariant approximators may fail to find some equilibria points due to their limited representativeness caused by equivariance.
Besides, while equivariant (C)CE approximators can keep the social welfare, the equivariant NE approximators reach a small worst social welfare ratio comparing to the general approximators.
Such a result indicates that equivariance may reduce social welfare; therefore, we'd better balance the generalizability and social welfare when we design the architectures of NE approximators.

% \paragraph{Future work}
As for future work, since in our paper we assume the training and testing payoff distribution are the same, an interesting topic is to study the benefits of equivariant approximators under the payoff distribution shift.
Moreover, since we consider fixed and discrete action space, another interesting future direction is to analyze the benefits of equivariant approximators in varying or continuous action space.

% \section*{Acknowledgements}
% \todo[inline]{}
	
	\bibliographystyle{plainnat}
	\bibliography{reference}

\begin{thebibliography}{35}
\providecommand{\natexlab}[1]{#1}
\providecommand{\url}[1]{\texttt{#1}}
\expandafter\ifx\csname urlstyle\endcsname\relax
  \providecommand{\doi}[1]{doi: #1}\else
  \providecommand{\doi}{doi: \begingroup \urlstyle{rm}\Url}\fi

\bibitem[Bartlett and Mendelson(2002)]{bartlett2002rademacher}
Peter~L Bartlett and Shahar Mendelson.
\newblock Rademacher and gaussian complexities: Risk bounds and structural
  results.
\newblock \emph{Journal of Machine Learning Research}, 3\penalty0
  (Nov):\penalty0 463--482, 2002.

\bibitem[Cesa-Bianchi and Lugosi(2006)]{cesa2006prediction}
Nicolo Cesa-Bianchi and G{\'a}bor Lugosi.
\newblock \emph{Prediction, learning, and games}.
\newblock Cambridge university press, 2006.

\bibitem[Chen et~al.(2009)Chen, Deng, and Teng]{chen2009settling}
Xi~Chen, Xiaotie Deng, and Shang-Hua Teng.
\newblock Settling the complexity of computing two-player {N}ash equilibria.
\newblock \emph{Journal of the ACM (JACM)}, 56\penalty0 (3):\penalty0 1--57,
  2009.

\bibitem[Daskalakis et~al.(2009)Daskalakis, Goldberg, and
  Papadimitriou]{daskalakis2009complexity}
Constantinos Daskalakis, Paul~W Goldberg, and Christos~H Papadimitriou.
\newblock The complexity of computing a {N}ash equilibrium.
\newblock \emph{SIAM Journal on Computing}, 39\penalty0 (1):\penalty0 195--259,
  2009.

\bibitem[Deligkas et~al.(2022)Deligkas, Fasoulakis, and Markakis]{DFM1/3}
Argyrios Deligkas, Michail Fasoulakis, and Evangelos Markakis.
\newblock A polynomial-time algorithm for 1/3-approximate {Nash} equilibria in
  bimatrix games.
\newblock In \emph{30th Annual European Symposium on Algorithms, {ESA}}, 2022.

\bibitem[Duan et~al.(2021)Duan, Zhang, Huang, Du, Wang, Yang, and
  Deng]{duan2021towards}
Zhijian Duan, Dinghuai Zhang, Wenhan Huang, Yali Du, Jun Wang, Yaodong Yang,
  and Xiaotie Deng.
\newblock Towards the {PAC} learnability of {Nash} equilibrium.
\newblock \emph{arXiv preprint arXiv:2108.07472}, 2021.

\bibitem[Duan et~al.(2022)Duan, Tang, Yin, Feng, Yan, Zaheer, and
  Deng]{duan2022context}
Zhijian Duan, Jingwu Tang, Yutong Yin, Zhe Feng, Xiang Yan, Manzil Zaheer, and
  Xiaotie Deng.
\newblock A context-integrated transformer-based neural network for auction
  design.
\newblock In \emph{International Conference on Machine Learning}, pages
  5609--5626. PMLR, 2022.

\bibitem[D{\"u}tting et~al.(2019)D{\"u}tting, Feng, Narasimhan, Parkes, and
  Ravindranath]{dutting2019optimal}
Paul D{\"u}tting, Zhe Feng, Harikrishna Narasimhan, David Parkes, and
  Sai~Srivatsa Ravindranath.
\newblock Optimal auctions through deep learning.
\newblock In \emph{International Conference on Machine Learning}, pages
  1706--1715. PMLR, 2019.

\bibitem[Elesedy and Zaidi(2021)]{elesedy2021provably}
Bryn Elesedy and Sheheryar Zaidi.
\newblock Provably strict generalisation benefit for equivariant models.
\newblock In \emph{International Conference on Machine Learning}, pages
  2959--2969. PMLR, 2021.

\bibitem[Feng et~al.(2021)Feng, Slumbers, Wan, Liu, McAleer, Wen, Wang, and
  Yang]{feng2021neural}
Xidong Feng, Oliver Slumbers, Ziyu Wan, Bo~Liu, Stephen McAleer, Ying Wen, Jun
  Wang, and Yaodong Yang.
\newblock Neural auto-curricula in two-player zero-sum games.
\newblock \emph{Advances in Neural Information Processing Systems},
  34:\penalty0 3504--3517, 2021.

\bibitem[Fudenberg et~al.(1998)Fudenberg, Drew, Levine, and
  Levine]{fudenberg1998theory}
Drew Fudenberg, Fudenberg Drew, David~K Levine, and David~K Levine.
\newblock \emph{The theory of learning in games}, volume~2.
\newblock MIT press, 1998.

\bibitem[Goktas and Greenwald(2022)]{goktas2022exploitability}
Denizalp Goktas and Amy Greenwald.
\newblock Exploitability minimization in games and beyond.
\newblock In \emph{Advances in Neural Information Processing Systems}, 2022.

\bibitem[Greenwald et~al.(2003)Greenwald, Hall, Serrano,
  et~al.]{greenwald2003correlated}
Amy Greenwald, Keith Hall, Roberto Serrano, et~al.
\newblock Correlated {Q}-learning.
\newblock In \emph{ICML}, volume~3, pages 242--249, 2003.

\bibitem[Hartford et~al.(2016)Hartford, Wright, and
  Leyton-Brown]{hartford2016deep}
Jason~S Hartford, James~R Wright, and Kevin Leyton-Brown.
\newblock Deep learning for predicting human strategic behavior.
\newblock \emph{Advances in neural information processing systems}, 29, 2016.

\bibitem[Hu and Wellman(2003)]{hu2003nash}
Junling Hu and Michael~P Wellman.
\newblock {Nash} {Q}-learning for general-sum stochastic games.
\newblock \emph{Journal of machine learning research}, 4\penalty0
  (Nov):\penalty0 1039--1069, 2003.

\bibitem[Ivanov et~al.(2022)Ivanov, Safiulin, Filippov, and
  Balabaeva]{ivanov2022optimaler}
Dmitry Ivanov, Iskander Safiulin, Igor Filippov, and Ksenia Balabaeva.
\newblock Optimal-er auctions through attention.
\newblock In \emph{Advances in Neural Information Processing Systems}, 2022.

\bibitem[Jin et~al.(2022)Jin, Liu, Wang, and Yu]{jin2021v}
Chi Jin, Qinghua Liu, Yuanhao Wang, and Tiancheng Yu.
\newblock V-learning -- a simple, efficient, decentralized algorithm for
  multiagent {RL}.
\newblock In \emph{ICLR 2022 Workshop on Gamification and Multiagent
  Solutions}, 2022.

\bibitem[Lanctot et~al.(2017)Lanctot, Zambaldi, Gruslys, Lazaridou, Tuyls,
  P{\'e}rolat, Silver, and Graepel]{lanctot2017unified}
Marc Lanctot, Vinicius Zambaldi, Audrunas Gruslys, Angeliki Lazaridou, Karl
  Tuyls, Julien P{\'e}rolat, David Silver, and Thore Graepel.
\newblock A unified game-theoretic approach to multiagent reinforcement
  learning.
\newblock \emph{Advances in neural information processing systems}, 30, 2017.

\bibitem[Ling et~al.(2018)Ling, Fang, and Kolter]{ling2018game}
C.~Ling, Fei Fang, and J.~Z. Kolter.
\newblock What game are we playing? {E}nd-to-end learning in normal and
  extensive form games.
\newblock In \emph{IJCAI}, pages 396--402, 2018.

\bibitem[Liu et~al.(2022)Liu, Lanctot, Marris, and Heess]{liu2022simplex}
Siqi Liu, Marc Lanctot, Luke Marris, and Nicolas Heess.
\newblock Simplex neural population learning: Any-mixture bayes-optimality in
  symmetric zero-sum games.
\newblock In \emph{International Conference on Machine Learning, {ICML}}, 2022.

\bibitem[Lockhart et~al.(2019)Lockhart, Lanctot, Pérolat, Lespiau, Morrill,
  Timbers, and Tuyls]{lockhart2019computing}
Edward Lockhart, Marc Lanctot, Julien Pérolat, Jean-Baptiste Lespiau, Dustin
  Morrill, Finbarr Timbers, and Karl Tuyls.
\newblock Computing approximate equilibria in sequential adversarial games by
  exploitability descent.
\newblock In Sarit Kraus, editor, \emph{IJCAI}, pages 464--470. ijcai.org,
  2019.

\bibitem[Marris et~al.(2021)Marris, Muller, Lanctot, Tuyls, and
  Graepel]{marris2021multi}
Luke Marris, Paul Muller, Marc Lanctot, Karl Tuyls, and Thore Graepel.
\newblock Multi-agent training beyond zero-sum with correlated equilibrium
  meta-solvers.
\newblock In \emph{International Conference on Machine Learning}, pages
  7480--7491. PMLR, 2021.

\bibitem[Marris et~al.(2022)Marris, Gemp, Anthony, Tacchetti, Liu, and
  Tuyls]{marris2022turbocharging}
Luke Marris, Ian Gemp, Thomas Anthony, Andrea Tacchetti, Siqi Liu, and Karl
  Tuyls.
\newblock Turbocharging solution concepts: Solving {NE}s, {CE}s and {CCE}s with
  neural equilibrium solvers.
\newblock In \emph{Advances in Neural Information Processing Systems}, 2022.

\bibitem[Nash et~al.(1950)]{nash1950equilibrium}
John~F Nash et~al.
\newblock Equilibrium points in n-person games.
\newblock \emph{Proceedings of the national academy of sciences}, 36\penalty0
  (1):\penalty0 48--49, 1950.

\bibitem[Nekipelov et~al.(2015)Nekipelov, Syrgkanis, and
  Tardos]{nekipelov2015econometrics}
Denis Nekipelov, Vasilis Syrgkanis, and Eva Tardos.
\newblock Econometrics for learning agents.
\newblock In \emph{Proceedings of the sixteenth acm conference on economics and
  computation}, pages 1--18, 2015.

\bibitem[Nikaid{\^o} and Isoda(1955)]{nikaido1955note}
Hukukane Nikaid{\^o} and Kazuo Isoda.
\newblock Note on non-cooperative convex games.
\newblock \emph{Pacific Journal of Mathematics}, 5\penalty0 (S1):\penalty0
  807--815, 1955.

\bibitem[Qin et~al.(2022)Qin, He, Shi, Huang, and Tao]{qin2022benefits}
Tian Qin, Fengxiang He, Dingfeng Shi, Wenbing Huang, and Dacheng Tao.
\newblock Benefits of permutation-equivariance in auction mechanisms.
\newblock In \emph{Advances in Neural Information Processing Systems}, 2022.

\bibitem[Rahme et~al.(2021)Rahme, Jelassi, Bruna, and
  Weinberg]{rahme2021permutation}
Jad Rahme, Samy Jelassi, Joan Bruna, and S~Matthew Weinberg.
\newblock A permutation-equivariant neural network architecture for auction
  design.
\newblock In \emph{Proceedings of the AAAI Conference on Artificial
  Intelligence}, 2021.

\bibitem[Schulz-Mirbach(1994)]{schulz1994constructing}
Hanns Schulz-Mirbach.
\newblock Constructing invariant features by averaging techniques.
\newblock In \emph{Proceedings of the 12th IAPR International Conference on
  Pattern Recognition, Vol. 3-Conference C: Signal Processing (Cat. No.
  94CH3440-5)}, volume~2, pages 387--390. IEEE, 1994.

\bibitem[Sessa et~al.(2020)Sessa, Bogunovic, Krause, and
  Kamgarpour]{sessa2020contextual}
Pier~Giuseppe Sessa, Ilija Bogunovic, Andreas Krause, and Maryam Kamgarpour.
\newblock Contextual games: Multi-agent learning with side information.
\newblock \emph{Advances in Neural Information Processing Systems},
  33:\penalty0 21912--21922, 2020.

\bibitem[Shalev-Shwartz and Ben-David(2014)]{shalev2014understanding}
Shai Shalev-Shwartz and Shai Ben-David.
\newblock \emph{Understanding machine learning: From theory to algorithms}.
\newblock Cambridge university press, 2014.

\bibitem[Shoham and Leyton-Brown(2008)]{shoham2008multiagent}
Yoav Shoham and Kevin Leyton-Brown.
\newblock \emph{Multiagent systems: Algorithmic, game-theoretic, and logical
  foundations}.
\newblock Cambridge University Press, 2008.

\bibitem[Tsaknakis and Spirakis(2007)]{TS0.3393}
Haralampos Tsaknakis and Paul~G Spirakis.
\newblock An optimization approach for approximate {Nash} equilibria.
\newblock In \emph{International Workshop on Web and Internet Economics}, pages
  42--56. Springer, 2007.

\bibitem[Wu and Lisser(2022)]{wu2022using}
Dawen Wu and Abdel Lisser.
\newblock Using {CNN} for solving two-player zero-sum games.
\newblock \emph{Expert Systems with Applications}, page 117545, 2022.

\bibitem[Wu and Lisser(2023)]{wu2023ccgnet}
Dawen Wu and Abdel Lisser.
\newblock {CCGnet}: A deep learning approach to predict {Nash} equilibrium of
  chance-constrained games.
\newblock \emph{Information Sciences}, 2023.

\end{thebibliography}
	
	\newpage
\appendix
\onecolumn
\section{Omitted Proofs in \cref{sec:symmetric}}

\subsection{Useful Lemma}
We first introduce a lemma, which will be frequently used in the following proofs.
\begin{lemma}
\label{lem:u}
$\forall i, j \in [n], \rho_i\in\Gg_i$ we have
$(\rho_i u)_j(\sigma_i,\sigma_{-i}) = u_j (\rho_i^{-1}\sigma_i,\sigma_{-i})$
and 
$
    (\rho_i u)_j(\pi) = u_j(\rho_i^{-1} \pi)
$
\end{lemma}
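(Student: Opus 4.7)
}
The plan is to prove both identities by direct unfolding of the definitions, and then performing a change of summation variable via the bijection $\rho_i: \Aa_i \to \Aa_i$. Since $\rho_i$ is a permutation (hence a bijection on the finite set $\Aa_i$), substituting $b_i = \rho_i^{-1}(a_i)$ simply reindexes the sum, and this will be the single recurring move in both parts.

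For the first identity, I would start from the definition of expected utility under a product strategy:
\[
(\rho_i u)_j(\sigma_i, \sigma_{-i}) = \sum_{a \in \Aa} \sigma_i(a_i)\prod_{k\ne i}\sigma_k(a_k)\,(\rho_i u_j)(a_i, a_{-i}),
\]
and then apply the definition $(\rho_i u_j)(a_i, a_{-i}) = u_j(\rho_i^{-1}(a_i), a_{-i})$. Substituting $b_i = \rho_i^{-1}(a_i)$ (so $a_i = \rho_i(b_i)$), the sum over $a_i$ becomes a sum over $b_i$, and the factor $\sigma_i(\rho_i(b_i))$ matches the definition of $(\rho_i^{-1}\sigma_i)(b_i) = \sigma_i(\rho_i(b_i))$ given in the definition of permutation of a product strategy. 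Collecting terms recognizes the result as $u_j(\rho_i^{-1}\sigma_i, \sigma_{-i})$.

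For the second identity, I would do the analogous computation for a joint (possibly correlated) strategy $\pi$:
\[
(\rho_i u)_j(\pi) = \sum_{a \in \Aa} \pi(a_i, a_{-i})\, u_j(\rho_i^{-1}(a_i), a_{-i}),
\]
and again relabel $b_i = \rho_i^{-1}(a_i)$. The resulting factor $\pi(\rho_i(b_i), a_{-i})$ is, by the definition of $\rho_i$-permutation of a joint strategy applied to $\rho_i^{-1}$, exactly $(\rho_i^{-1}\pi)(b_i, a_{-i})$, yielding $u_j(\rho_i^{-1}\pi)$.

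There is no real obstacle here: both parts are essentially bookkeeping, and the only subtlety to flag is the consistency check that $(\rho_i^{-1}\sigma_i)(b_i) = \sigma_i(\rho_i(b_i))$ and $(\rho_i^{-1}\pi)(b_i, a_{-i}) = \pi(\rho_i(b_i), a_{-i})$, which follow from applying the permutation-of-strategy definitions with $\rho_i$ replaced by $\rho_i^{-1}$ (so that the inverse in the definition becomes $\rho_i$ itself). This justifies the change-of-variable step and closes both identities.
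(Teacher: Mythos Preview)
Your proposal is correct and follows essentially the same approach as the paper: both proofs unfold the expected-utility sum, apply the definition $(\rho_i u_j)(a_i,a_{-i}) = u_j(\rho_i^{-1}(a_i),a_{-i})$, and then perform the change of variable $b_i = \rho_i^{-1}(a_i)$ using that $\rho_i$ is a bijection on $\Aa_i$. The only cosmetic difference is that the paper rewrites $\sigma_i(a_i) = (\rho_i^{-1}\sigma_i)(\rho_i^{-1}a_i)$ (and similarly for $\pi$) \emph{before} relabeling the index, whereas you relabel first and then identify $\sigma_i(\rho_i(b_i)) = (\rho_i^{-1}\sigma_i)(b_i)$; the two orderings are equivalent.
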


\begin{proof}
    Define $\widehat{a}_i \coloneqq \rho_i^{-1} a_i$. For product strategy $\sigma=(\sigma_i)_{i\in[n]}$,
    \begin{align*}
        (\rho_i u)_j(\sigma_i,\sigma_{-i})
        =& \sum_{a_i\in\Aa_i} \sum_{a_{-i}\in\Aa_{-i}} (\rho_i u)_j(a_i, a_{-i}) \cdot \sigma_i (a_i) \cdot \sigma_{-i}(a_{-i}) &
        \\
        =& \sum_{a_i\in\Aa_i} \sum_{a_{-i}\in\Aa_{-i}} u_j(\rho_i^{-1} a_i, a_{-i}) \cdot \sigma_i (a_i) \cdot \sigma_{-i}(a_{-i}) &
        \\
        =& \sum_{a_i\in\Aa_i} \sum_{a_{-i}\in\Aa_{-i}} u_j(\rho_i^{-1} a_i, a_{-i}) \cdot (\rho_i^{-1}\sigma_i) (\rho_i^{-1} a_i) \cdot \sigma_{-i}(a_{-i}) &
        \\
        =& \sum_{\widehat{a}_i\in\Aa_i} \sum_{a_{-i}\in\Aa_{-i}} u_j(\widehat{a}_i, a_{-i}) \cdot (\rho_i^{-1}\sigma_i) (\widehat{a}_i) \cdot \sigma_{-i}(a_{-i}) &
        \\
        =& u_j(\rho_i^{-1}\sigma_i,\sigma_{-i}) &
    \end{align*}
    
    For joint strategy $\pi$, 
    \begin{align*}
        (\rho_i u)_j(\pi)
        =& \sum_{a_i \in \Aa_i} \sum_{a_{-i}\in\Aa_{-i}} (\rho_i u_j)(a_i,a_{-i}) \cdot \pi(a_i,a_{-i}) 
        \\
        =& \sum_{a_i \in \Aa_i} \sum_{a_{-i}\in\Aa_{-i}} u_j(\rho_i^{-1} a_i,a_{-i}) \cdot \pi(a_i,a_{-i})&
        \\
        =& \sum_{a_i \in \Aa_i} \sum_{a_{-i}\in\Aa_{-i}} u_j(\rho_i^{-1} a_i,a_{-i}) \cdot (\rho_i^{-1} \pi)(\rho_i^{-1} a_i,a_{-i})&
        \\
        =& \sum_{\widehat{a}_i \in \Aa_i} \sum_{a_{-i}\in\Aa_{-i}} u_j(\widehat{a}_i,a_{-i}) \cdot (\rho_i^{-1} \pi)(\widehat{a}_i,a_{-i})&
        \\
        =& u_j(\rho_i^{-1}\pi)
    \end{align*}
\end{proof}

\subsection{Proof of \cref{lem:perm:NE}}\label{prf:lem:perm:NE}
\lemPermNE*

\begin{proof}
For player $i$, we have
\begin{align*}
    \Ee_i(\rho_i \sigma,\rho_i u)
    =& \max_{a_i \in \Aa_i} \rho_i u_i(a_i,\rho_i\sigma_{-i}) - \rho_i u_i(\rho_i\sigma)
    = \max_{a_i \in \Aa_i} \rho_i u_i(a_i,\sigma_{-i}) - \rho_i u_i(\rho_i\sigma_i,\sigma_{-i})
    \\
    =& \max_{a_i \in \Aa_i} u_i(\rho_i^{-1}a_i ,\sigma_{-i}) - u_i(\rho_i^{-1}\rho_i\sigma_i,\sigma_{-i})
    \overset{(a)}{=} \max_{a_i\in\Aa_i} u_i(a_i,\sigma_{-i}) - u_i(\sigma_i,\sigma_{-i}) 
    = \Ee_i(\sigma,u),
\end{align*}
where  $(a)$ holds since $\rho_i$ is a bijection on $\Aa_i$.
For player $j \neq i$, we have
\begin{align*}
    \Ee_j(\rho_i\sigma,\rho_i u)
    =& \max_{a_j\in\Aa} \rho_i u_j(a_j,\rho_i\sigma_{-j}) - \rho_i u_j(\rho_i\sigma)
    = \max_{a_j\in\Aa_j} u_j(a_j,\rho_i^{-1}\rho_i \sigma_{-j}) - u_j(\rho_i^{-1}\rho_i \sigma)
    \\
    =& \max_{a_j\in\Aa_j} u_j(a_j,\sigma_{-j}) - u_j(\sigma)
    = \Ee_j(\sigma,u)
\end{align*}

From above, we have $\Ee(\rho_i\sigma,\rho_i u) = \Ee(\sigma,u)$, thus if $\sigma$ is a $\varepsilon$-NE of $\Gamma_u$, then $\rho_i\sigma$ must be a $\varepsilon$-NE of $\Gamma_{\rho_i u}$.
\end{proof}
% -----

\subsection{Proof of \cref{lem:perm:CCE}}
\label{prf:lem:perm:CCE}

\lemPermCCE*

\paragraph{CCE}
For player $i$, we have
\begin{align*}
    \Ee_i(\rho_i \pi,\rho_i u)
    =& \max_{a_i \in \Aa_i} (\rho_i u_i)(a_i,(\rho_i \pi)_{-i}) - (\rho_i u_i)(\rho_i \pi_i)
    \\
    =& \max_{a_i \in \Aa_i} (\rho_i u_i)(a_i,(\rho_i \pi)_{-i}) - u_i(\rho_i^{-1} \rho_i \pi_i)
    \\
    =& \max_{a_i \in \Aa_i} (\rho_i u_i)(a_i,(\rho_i \pi)_{-i}) - u_i(\pi_i)
    \\
    =& \max_{a_i \in \Aa_i} \sum_{b\in\Aa} (\rho_i u_i)(a_i,b_{-i})\cdot (\rho_i\pi)(b) - u_i(\pi_i)
    \\
    =& \max_{a_i \in \Aa_i} \sum_{b_i\in\Aa_i,b_{-i}\in\Aa_{-i}} u_i(\rho_i^{-1} a_i,b_{-i})\cdot \pi(\rho_i^{-1} b_i, b_{-i}) - u_i(\pi_i)
    \\
    =& \max_{a_i \in \Aa_i} \sum_{b_i\in\Aa_i,b_{-i}\in\Aa_{-i}} u_i(a_i,b_{-i})\cdot \pi(b_i, b_{-i}) - u_i(\pi_i) & &,\text{$\rho_i$ is a bijection on $\Aa_i$}
    \\
    =& \Ee_i(\pi,u)
\end{align*}

For player $j \neq i$, we have
\begin{align*}
    \Ee_j(\rho_i \pi,\rho_i u)
    =& \max_{a_j\in\Aa_j} (\rho_i u_j)(a_j,(\rho_i\pi)_{-j}) - (\rho_i u_j)(\rho_i \pi_j)
    \\
    =& \max_{a_j\in\Aa_j} (\rho_i u_j)(a_j,(\rho_i\pi)_{-j}) - u_j(\rho_i^{-1}\rho_i \pi_j)
    \\
    =& \max_{a_j\in\Aa_j} (\rho_i u_j)(a_j,(\rho_i\pi)_{-j}) - u_j(\pi_j)
    \\
    =& \max_{a_j\in\Aa_j} \sum_{b\in\Aa} (\rho_i u_j)(a_j,b_{-j})\cdot(\rho_i \pi)(b) - u_j(\pi_j)
    \\
    =& \max_{a_j\in\Aa_j} \sum_{b_i\in\Aa_i,b_{-i}\in\Aa_{-i}} u_j(a_j,(b_{-j})_{-i},\rho_i^{-1} b_i)\cdot\pi(\rho_i^{-1} b_i,b_{-i}) - u_j(\pi_j)
    \\
    =& \max_{a_j\in\Aa_j} \sum_{b_i\in\Aa_i,b_{-i}\in\Aa_{-i}} u_j(a_j,(b_{-j})_{-i},b_i)\cdot\pi(b_i,b_{-i}) - u_j(\pi_j) & &,\text{$\rho_i$ is a bijection on $\Aa_i$}
    \\
    =& \Ee_j(\pi,u)
\end{align*}
Thus, we have $\Ee(\rho_i\pi,\rho_i u) = \Ee(\pi,u)$. 
Thus, if $\pi$ is a $\varepsilon$-CCE of $\Gamma_u$, then $\rho_i\pi$ must be a $\varepsilon$-CCE of $\Gamma_{\rho_i u}$.

\paragraph{CE}
For player $j\ne i$, we have
\begin{align*}
    \Ee_j^\CE(\rho_i \pi,\rho_i u)
    =& \max_{\phi_j:\Aa_j\to\Aa_j} \sum_{a\in\Aa} (\rho_i \pi)(a)\cdot (\rho_i u_j)(\phi_j(a_j),a_{-j}) - (\rho_i u_j)(\rho_i \pi)
    \\
    =& \max_{\phi_j:\Aa_j\to\Aa_j} \sum_{a\in\Aa} \pi(\rho_i^{-1}a_i, a_{-i})\cdot u_j(\phi_j(a_j),a_{-i,j},\rho_i^{-1} a_i) - u_j(\pi)
    \\
    =& \max_{\phi_j:\Aa_j\to\Aa_j} \sum_{a\in\Aa} \pi(a_i, a_{-i}) \cdot u_j(\phi_j(a_j),a_{-i,j}, a_i) - u_j(\pi) & &,\text{$\rho_i$ is a bijection on $\Aa_i$}
    \\
    =& \Ee_j^\CE(\pi,u)
\end{align*}

For player $i$, we define operator $\Bar{\rho}_i$ as $(\bar{\rho}_i \phi_i)(a_i) = \rho_i^{-1} \phi_i(\rho_i a_i)$.
We can verify that $\Bar{\rho}_i$ is a bijection on $\{ \phi_i:\Aa_i\to\Aa_i\}$, because $\Bar{\cdot}$ is a homomorphism in the sense that
$\overline{\rho_i^{1}} \circ \overline{\rho_i^{2}} = \overline{\rho_i^{2}\rho_i^{1}}$
and $\Bar{\cdot}$ maps the identity mapping of $\Aa_i$ to the identity mapping of $\{\Aa_i \to \Aa_i\}$.
Specifically, 
% $
%     \overline{\rho_i^{1}} \circ \overline{\rho_i^{2}} \phi_i(a_i)
%     = (\rho_i^{1})^{-1} (\overline{\rho_i^{2}} \phi_i)(\rho_i^{1} a_i)
%     = (\rho_i^{1})^{-1} (\rho_i^{2})^{-1} \phi_i(\rho_i^{2} \rho_i^{1} a_i)
%     = \overline{\rho_i^{2}\rho_i^{1}} \phi_i(a_i)
% $ and 
% $
%     \overline{e_i} \phi_i(a_i)
%     = e_i^{-1} \phi_i(e_i a_i)
%     = \phi_i(a_i)
% $,
\begin{align*}
    \overline{\rho_i^{1}} \circ \overline{\rho_i^{2}} \phi_i(a_i)
    = (\rho_i^{1})^{-1} (\overline{\rho_i^{2}} \phi_i)(\rho_i^{1} a_i)
    = (\rho_i^{1})^{-1} (\rho_i^{2})^{-1} \phi_i(\rho_i^{2} \rho_i^{1} a_i)
    = \overline{\rho_i^{2}\rho_i^{1}} \phi_i(a_i),
\end{align*}
and
\begin{align*}
    \overline{e_i} \phi_i(a_i)
    = e_i^{-1} \phi_i(e_i a_i)
    = \phi_i(a_i).
\end{align*}
% where $\overline{\rho_i^{-1}}$ is the inverse element of $\Bar{\rho_i}$ and $\Bar{e_i}$ is the identity mapping, so is a bijection. Thus we must have $\Bar{\rho_i}$ is a bijection on $\{\phi_i:\Aa_i \to \Aa_i\}$.

Based on $\bar\rho_i$, we have
\begin{align*}
    & \Ee_i^\CE(\rho_i \pi,\rho_i u)
    \\
    =& \max_{\phi_i:\Aa_i\to\Aa_i} \sum_{a\in\Aa} (\rho_i \pi)(a)\cdot (\rho_i u_i)(\phi_i(a_i),a_{-i}) - u_i(\pi)
    \\
    =& \max_{\phi_i:\Aa_i\to\Aa_i} \sum_{a\in\Aa} \pi(\rho_i^{-1} a_i, a_{-i}) u_i(\rho_i^{-1} \phi_i(a_i),a_{-i}) - u_i(\pi)
    \\
    =& \max_{\phi_i:\Aa_i\to\Aa_i} \sum_{a\in\Aa} \pi(\rho_i^{-1} a_i, a_{-i}) u_i(\rho_i^{-1} \phi_i(\rho_i (\rho_i^{-1} a_i)),a_{-i}) - u_i(\pi)
    \\
    =& \max_{\phi_i:\Aa_i\to\Aa_i} \sum_{a\in\Aa} \pi(a_i, a_{-i}) u_i(\rho_i^{-1} \phi_i(\rho_i a_i),a_{-i}) - u_i(\pi)& &,\text{$\rho_i$ is a bijection on $\Aa_i$}
    \\
    =& \max_{\phi_i:\Aa_i\to\Aa_i} \sum_{a\in\Aa} \pi(a_i, a_{-i}) u_i((\Bar{\rho}_i \phi_i)( a_i),a_{-i}) - u_i(\pi)
    \\
    =& \max_{\phi_i:\Aa_i\to\Aa_i} \sum_{a\in\Aa} \pi(a_i, a_{-i}) u_i(\phi_i( a_i),a_{-i}) - u_i(\pi)& &,\text{$\Bar{\rho}_i$ is a bijection on $\{\Aa_i\to\Aa_i\}$}
    \\
    =& \Ee_i^\CE(\pi,u)
\end{align*}
Thus, we have $\Ee(\rho_i\pi,\rho_i u) = \Ee(\pi,u)$, thus if $\pi$ is a $\varepsilon$-CE of $\Gamma_u$, then $\rho_i\pi$ must be a $\varepsilon$-CE of $\Gamma_{\rho_i u}$.

% -----

\subsection{Proof of \cref{lem:NE:OiPi} to \cref{lem:CCE:Q}}

\lemNEOiPi*

\begin{proof}
\label{prf:lem:NE:OiPi}
$\forall j\ne i, \rho_0\in\Gg_i$, for operator $\Oo_i$ we have
\begin{align*}
    (\Oo_i \fNE)(\rho_0 u)_j =& \frac{1}{|\Aa_i|!} \sum_{\rho_i\in\Gg_i} \fNE (\rho_i \rho_0 u)_j 
    \overset{(a)}{=} \frac{1}{|\Aa_i|!} \sum_{\widehat{\rho}_i\in\Gg_i} \fNE (\widehat{\rho}_i u)_j
    = (\Oo_i \fNE)(u)_j
\end{align*}
where in $(a)$ we define $\widehat{\rho}_i=\rho_i \rho_0$, and $(a)$  holds since $\rho_0$ is a bijection on $\Gg_i$.
As a result, $\Oo_i \fNE$ is $i$-PI.

For operator $\Pp_i$ we have
\begin{align*}
    (\Pp_i \fNE)(\rho_0 u)_i =& \frac{1}{|\Aa_i|!} \sum_{\rho_i\in\Gg_i} \rho_i^{-1}\fNE (\rho_i \rho_0 u)_j
    = \rho_0 \frac{1}{|\Aa_i|!} \sum_{\rho_i\in\Gg_i} \rho_0^{-1} \rho_i^{-1}\fNE (\rho_i \rho_0 u)_j
    \\
    =& \rho_0 \frac{1}{|\Aa_i|!} \sum_{\widehat{\rho}_i\in\Gg_i} \widehat{\rho}_i^{-1}\fNE (\widehat{\rho}_i u)_j
    =\rho_0 (\Pp_i \fNE)(u)_i,
\end{align*}
therefore $\Pp_i \fNE$ is $i$-PE.

% Notice $\Qq_i = \Oo_i \circ \Pp_i$, thus $\Qq_i \fNE$ satisfied $i$-PI and $i$-PE.

If $\fNE$ is already $i$-PI, $\forall j \neq i$ we have
\begin{align*}
    \Oo_i \fNE(u)_j =& \frac{1}{|\Aa_i|!} \sum_{\rho_i\in\Gg_i}  \fNE(\rho_i u)_j
    = \frac{1}{|\Aa_i|!} \sum_{\rho_i\in\Gg_i}  \fNE(u)_j
    = \fNE(u)_j,
\end{align*}
and $\Oo_i \fNE(u)_i = \fNE(u)_i$ according to definition of $\Oo_i$. Therefore, $\Oo_i \fNE = \fNE$ for $i$-PI $\fNE$.

If $\fNE$ is already $i$-PE, we have
\begin{align*}
    \Pp_i \fNE(u)_i =& \frac{1}{|\Aa_i|!} \sum_{\rho_i\in\Gg_i} \rho_i^{-1} \fNE(\rho_i u)_i
    = \frac{1}{|\Aa_i|!} \sum_{\rho_i\in\Gg_i} \rho_i^{-1}\rho_i \fNE(u)_i
    = \frac{1}{|\Aa_i|!} \sum_{\rho_i\in\Gg_i} \fNE(u)_i
    =\fNE(u)_i,
\end{align*}
and $\forall j \neq i, \Pp_i \fNE(u)_j = \fNE(u)_j$ according to definition of $\Pp_i$. Therefore, $\Pp_i \fNE = \fNE$ for $i$-PE $\fNE$.

\end{proof}

\lemNEOP*
\begin{proof}
A direct inference from \cref{lem:NE:OiPi}
\end{proof}

\lemCCEQ*
\begin{proof}
\label{prf:lem:CCE:Q}
$\forall \rho_0\in\Gg_i$, we have

\begin{align*}
    (\Qq_i \fcCE)(\rho_0 u) =& \frac{1}{|\Aa_i|!} \sum_{\rho_i \in \Gg_i} \rho_i^{-1} \fcCE (\rho_i \rho_0 u)
    = \rho_0 \frac{1}{|\Aa_i|!} \sum_{\rho_i \in \Gg_i} \rho_0^{-1} \rho_i^{-1} \fcCE (\rho_i \rho_0 u)
    \\
    =& \rho_0 \frac{1}{|\Aa_i|!} \sum_{\widehat{\rho}_i \in \Gg_i} \widehat{\rho}_i^{-1} \fcCE (\widehat{\rho}_i u)
    = \rho_0 (\Qq_i \fcCE)(u)
\end{align*}

If $f^\text{(C)CE}$ is already $i$-PE, we have
\begin{align*}
    \Qq_i f^\text{(C)CE}(u) 
    =& \frac{1}{|\Aa_i|!}\sum_{\rho_i\in\Gg_i} \rho_i^{-1} f^\text{(C)CE}(\rho_i u)
    = \frac{1}{|\Aa_i|!}\sum_{\rho_i\in\Gg_i} \rho_i^{-1} \rho_i f^\text{(C)CE}(u)
    = \frac{1}{|\Aa_i|!}\sum_{\rho_i\in\Gg_i} f^\text{(C)CE}(u)
    = f^\text{(C)CE}(u)
\end{align*}

\end{proof}

% -----
\subsection{Proof of \cref{lem:subset}}
\label{prf:lem:subset}
\lemSubset*

\begin{proof}
We prove the three claims below.
\begin{enumerate}
    \item $\Xx \Ff_\Xx \subseteq \Ff_\Xx$.

    \item $\Xx^2 \Ff_\Xx = \Xx \Ff_\Xx$.

    \item If $\Xx \Yy = \Yy \subseteq \Ff_\Xx$, then $\Yy \subseteq \Xx \Ff_\Xx$
\end{enumerate}

The first claim holds because $\Ff_\Xx$ is closed under $\Xx$, and the second claim holds because $\Xx$ is idempotent. For the third claim, from $\Yy \subseteq \Ff_\Xx$ we know $\Xx\Yy \subseteq \Xx \Ff_\Xx$, then $\Yy = \Xx\Yy \subseteq \Xx\Ff_\Xx$.

We immediately know $\Xx\Ff_\Xx$ is the largest subset
of $\Ff_\Xx$ that is invariant under $\Xx$.
\end{proof}

\section{Omitted Proofs in \cref{sec:benefits}}

\subsection{Proof of \cref{thm:GB}}\label{prf:thm:GB}
\thmGB*

Some of the proof techniques come from \citet{dutting2019optimal} and \citet{duan2021towards}.
We first introduce some useful lemmas.
Denote $\ell: \Ff \times \Uu \to \RR$ as the loss function (such as $\ell(f, u) \coloneqq \Ee(f(u), u)$). 
We measure the capacity of the composite function class $\ell \circ \Ff$ using the empirical Rademacher complexity~\citep{bartlett2002rademacher} on the training set $S$, which is defined as:
\begin{equation*}
\begin{aligned}
	\Rr_S(\ell \circ \Ff) \coloneqq
	\frac{1}{m}\EE_{\bm x \sim \{+1,-1\}^m}\Big[\sup_{f\in\Ff} \sum_{i=1}^m x_i \cdot \ell(f, u^{(i)}) \Big],
\end{aligned}
\end{equation*}
where $\bm x$ is distributed i.i.d. according to uniform distribution in $\{+1,-1\}$.
We have

\begin{lemma}[\citet{shalev2014understanding}]
\label{lem:GB:Rad}
    Let $S$ be a training set of size $m$ drawn i.i.d. from distribution $\Dd$ over $\Uu$.
    Then with probability at least $1 - \delta$ over draw of $S$ from $\Dd$, for all $f \in \Ff$,
    \begin{equation*}
    	\EE_{u\sim\Dd}[\ell(f, u)] - \frac{1}{m}\sum_{u\in S}\ell(l, u)  \le 2\Rr_S(\ell \circ \Ff) + 4\sqrt{\frac{2\ln(4/\delta)}{m}}
    \end{equation*}
\end{lemma}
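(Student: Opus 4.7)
The plan is to apply the classical two-step Rademacher-complexity symmetrization argument, as in Chapter 26 of Shalev-Shwartz and Ben-David, specialized so that the final bound features the \emph{empirical} complexity $\Rr_S(\ell \circ \Ff)$ rather than its expectation $\EE_S[\Rr_S(\ell \circ \Ff)]$. This is why two concentration applications, rather than one, are needed.

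First, I would observe that since the payoffs are normalized to $[0, 1]$, the exploitability loss $\ell(f, u) = \Ee(f(u), u)$ is also bounded in $[0, 1]$. Define the uniform generalization gap
\begin{equation*}
\Phi(S) \coloneqq \sup_{f \in \Ff}\Big\{\EE_{u \sim \Dd}[\ell(f, u)] - \tfrac{1}{m}\sum_{u \in S}\ell(f, u)\Big\}.
\end{equation*}
Replacing any single game $u^{(i)} \in S$ by another game alters the empirical average by at most $1/m$, and therefore alters $\Phi(S)$ by at most $1/m$; an identical bounded-difference constant holds for $S \mapsto \Rr_S(\ell \circ \Ff)$ by the standard contraction of $\sup$'s in the Rademacher average.

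Next, I would invoke McDiarmid's inequality twice, each at confidence $\delta/2$. A union bound then yields, simultaneously with probability $\ge 1 - \delta$,
\begin{align*}
\Phi(S) &\le \EE_S[\Phi(S)] + O\!\left(\sqrt{\tfrac{\ln(1/\delta)}{m}}\right), \\
\EE_S[\Rr_S(\ell \circ \Ff)] &\le \Rr_S(\ell \circ \Ff) + O\!\left(\sqrt{\tfrac{\ln(1/\delta)}{m}}\right).
\end{align*}
The core step linking these two inequalities is to show $\EE_S[\Phi(S)] \le 2\,\EE_S[\Rr_S(\ell \circ \Ff)]$ by symmetrization: introduce a ghost sample $S' = (u'^{(1)}, \ldots, u'^{(m)})$ drawn i.i.d.\ from $\Dd$, rewrite the population risk as $\EE_{S'}[\tfrac{1}{m}\sum_i \ell(f, u'^{(i)})]$, pull the $\sup$ inside via Jensen, and exploit the symmetric distribution of $\ell(f, u'^{(i)}) - \ell(f, u^{(i)})$ to introduce independent Rademacher signs $x_i \in \{\pm 1\}$ without changing the law.

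Chaining the three ingredients and absorbing the two $O(\sqrt{\ln(1/\delta)/m})$ slack terms into the stated constant $4\sqrt{2\ln(4/\delta)/m}$ completes the proof. The main obstacle is not conceptual but merely the careful bookkeeping of the constants across the two McDiarmid applications and the union bound; every nontrivial step (bounded differences, symmetrization, the factor of $2$) is entirely standard. The only place where the specific game-theoretic setting enters is in verifying that $\ell$ is uniformly bounded, which is immediate from the $[0,1]$-normalization of the payoff tensor recorded in \cref{sec:preliminary}.
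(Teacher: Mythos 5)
The paper does not prove this lemma at all --- it is imported verbatim (as Lemma~26.5) from the cited reference \citet{shalev2014understanding} --- so there is no in-paper argument to diverge from. Your proof is precisely the canonical one from that source: symmetrization via a ghost sample to get $\EE_S[\Phi(S)] \le 2\,\EE_S[\Rr_S(\ell\circ\Ff)]$, two applications of McDiarmid's inequality (one for $\Phi$, one to replace the expected Rademacher complexity by its empirical version) each with bounded-difference constant $1/m$ thanks to $\ell\in[0,1]$, and a union bound; the resulting constant $\tfrac{3}{2}\sqrt{2\ln(4/\delta)/m}$ is indeed absorbed by the stated $4\sqrt{2\ln(4/\delta)/m}$. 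The argument is correct and complete at the level of detail appropriate for a cited textbook result.
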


\begin{lemma}
\label{lem:GB}
    If $|\ell(\cdot)| \le c$ for constant $c > 0$ and $\forall f, f' \in \Ff, | \ell(f, u) - \ell(f', u)| \le L\norm{f - f'}_{\infty}$, then we have
    \begin{equation*}
    	\EE_{u\sim\Dd}[\ell(f, u)] - \frac{1}{m}\sum_{u\in S}\ell(l, u)  \le 2\inf_{r>0}\left\{ c\sqrt{\frac{2\ln\Nn_{\infty}(\Ff, r)}{m}}+ Lr \right\} + 4\sqrt{\frac{2\ln(4/\delta)}{m}}
    \end{equation*}
\end{lemma}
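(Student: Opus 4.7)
The plan is to bound the generalization gap in \cref{lem:GB} by first invoking the Rademacher-complexity bound of \cref{lem:GB:Rad} and then controlling $\Rr_S(\ell \circ \Ff)$ via the $\ell_{\infty}$-covering number using the Lipschitz assumption. \cref{lem:GB:Rad} already furnishes the $4\sqrt{2\ln(4/\delta)/m}$ tail term and reduces the problem to showing
\[
\Rr_S(\ell\circ\Ff) \;\le\; \inf_{r>0}\Bigl\{\, c\sqrt{\tfrac{2\ln \Nn_{\infty}(\Ff,r)}{m}} + L r \,\Bigr\},
\]
so the entire task is to prove this single inequality.

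First, I fix an arbitrary $r>0$ and let $\Ff_r\subseteq\Ff$ be a minimum $r$-cover under $\ell_{\infty}$-distance, so that $|\Ff_r|=\Nn_{\infty}(\Ff,r)$ and every $f\in\Ff$ has a partner $f_r\in\Ff_r$ with $\|f-f_r\|_{\infty}\le r$. For any sign vector $\bm x\in\{\pm 1\}^m$, I decompose
\[
\sum_{i=1}^m x_i\,\ell(f,u^{(i)}) \;=\; \sum_{i=1}^m x_i\,\ell(f_r,u^{(i)}) \;+\; \sum_{i=1}^m x_i\bigl(\ell(f,u^{(i)})-\ell(f_r,u^{(i)})\bigr),
\]
and use the Lipschitz hypothesis $|\ell(f,u)-\ell(f_r,u)|\le L\|f-f_r\|_{\infty}\le Lr$ to bound the second sum by $mLr$ in absolute value. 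Taking suprema over $f\in\Ff$, then expectations over $\bm x$, and dividing by $m$ yields
\[
\Rr_S(\ell\circ\Ff) \;\le\; \Rr_S(\ell\circ\Ff_r) \;+\; L r.
\]

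Next I bound $\Rr_S(\ell\circ\Ff_r)$ for the \emph{finite} class $\Ff_r$ by the standard Massart finite-class lemma. Since $|\ell(\cdot)|\le c$, each vector $(\ell(f_r,u^{(1)}),\dots,\ell(f_r,u^{(m)}))$ has Euclidean norm at most $c\sqrt{m}$, and Massart's lemma gives
\[
\Rr_S(\ell\circ\Ff_r) \;\le\; c\sqrt{\tfrac{2\ln|\Ff_r|}{m}} \;=\; c\sqrt{\tfrac{2\ln \Nn_{\infty}(\Ff,r)}{m}}.
\]
Combining the two displays and taking the infimum over $r>0$ gives the required bound on $\Rr_S(\ell\circ\Ff)$; plugging into \cref{lem:GB:Rad} completes the proof.

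The only conceptually non-trivial step is the Lipschitz-based ``cover transfer'' from $\Ff$ to $\Ff_r$ where one must be careful that $\bm x$ may take either sign, so the triangle inequality must be applied inside the sum before taking the supremum; everything else is a textbook application of Massart's lemma and the boundedness assumption. I do not foresee any additional difficulty, since both $L$ and $c$ appear only as multiplicative constants and no union bound over $r$ is needed (the infimum is taken after the high-probability bound of \cref{lem:GB:Rad} is in place).
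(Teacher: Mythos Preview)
Your proposal is correct and follows essentially the same route as the paper: invoke \cref{lem:GB:Rad}, decompose over an $r$-cover, bound the residual by $Lr$ via the Lipschitz assumption, and control the finite-class Rademacher complexity with Massart's lemma using $|\ell|\le c$. The paper's argument is line-for-line the same, so there is nothing further to add.
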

\begin{proof}
    For function class $\Ff$, let $\Ff_r$ with $|\Ff_r| = \Nn_{\infty}(\Ff, r)$ be the function class that $r$-covers $\Ff$ for some $r>0$. 
    Similarly, $\forall f \in \Ff$, denote $f_r \in \Ff_r$ be the function that $r$-covers $f$. 
    We have
    \begin{equation}
    \label{eq:Rad}
	\begin{aligned}
		\Rr_S(\ell \circ \Ff) 
		=& \frac{1}{m}\EE_{\bm x}\Big[\sup_{f\in\Ff} \sum_{i=1}^m x_i \cdot \ell(f, u^{(i)}) \Big]
		\\
		=& \frac{1}{m}\EE_{\bm x}\Big[\sup_{f\in\Ff} \sum_{i=1}^m x_i \cdot \big( \ell(f_r, u^{(i)})
		+  \ell(f, u^{(i)})  -  \ell(f_r, u^{(i)})\big)  \Big]
		\\
		\le& \frac{1}{m}\EE_{\bm x}\Big[\sup_{f_r \in \Ff_r} \sum_{i=1}^m x_i \cdot  \ell(f_r, u^{(i)})  \Big]
		+ \frac{1}{m}\EE_{\bm x}\Big[\sup_{f\in \Ff} \sum_{i=1}^m |x_i \cdot Lr| \Big] & & ,|\ell(f, u) - \ell(f_r, u)| \le L\norm{f - f_r}_{\infty} = Lr
		\\
		{\le}& \sup_{f_r\in \Ff_r}\sqrt{\sum_{i=1}^m \ell^2(f_r, u^{(i)})} \cdot \frac{\sqrt{2\ln\Nn_{\infty}(\Ff, r)}}{m} + \frac{Lr}{m}\EE_{\bm x}\norm{\bm x} & &,\text{the first term holds by Massart's lemma}
		\\
		\le& \sqrt{c^2m} \cdot \frac{\sqrt{2\ln\Nn_{\infty}(\Ff, r)}}{m} + \frac{Lr}{m}\EE_{\bm x}\norm{\bm x}
		\\
		\le& c\sqrt{\frac{2\ln\Nn_{\infty}(\Ff, r)}{m}}+ Lr,
	\end{aligned}
    \end{equation}
    % where the second term of $(a)$ holds from $|\ell(f, u) - \ell(f_r, u)| \le L\norm{f - f_r}_{\infty} = Lr$, and the first term of $(b)$ holds by Massart's lemma~\citep{shalev2014understanding}.
    
    Combining \cref{lem:GB:Rad} and \cref{eq:Rad}, we get
    
    \begin{equation*}
    	\EE_{u\sim\Dd}[\ell(f, u)] - \frac{1}{m}\sum_{u\in S}\ell(l, u)  \le 2\inf_{r>0}\left\{ c\sqrt{\frac{2\ln\Nn_{\infty}(\Ff, r)}{m}}+ Lr \right\} + 4\sqrt{\frac{2\ln(4/\delta)}{m}}
    \end{equation*}
\end{proof}

\subsubsection{NE Approximator}
\begin{lemma}
	\label{lem:NE:L_sigma}
	For arbitrary product mixed strategy $\sigma$ and $\sigma'$, we have 
	\begin{equation*}
		\begin{aligned}
			|\Ee(\sigma,u) - \Ee(\sigma', u)|
			\le 2n\norm{\sigma - \sigma'},
		\end{aligned}
	\end{equation*}
\end{lemma}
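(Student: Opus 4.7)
The plan is to bound the Lipschitz constant of the per-player exploitability $\mathcal{E}_i(\cdot,u)$ in the strategy argument, and then lift this to the maximum via the standard inequality $|\max_y F(y)-\max_y G(y)|\le \max_y |F(y)-G(y)|$.

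First I would observe that, by this inequality applied to the outer max over players,
$|\mathcal{E}(\sigma,u)-\mathcal{E}(\sigma',u)| \le \max_{i\in[n]} |\mathcal{E}_i(\sigma,u)-\mathcal{E}_i(\sigma',u)|$.
Then for a fixed $i$, writing $\mathcal{E}_i(\sigma,u) = \max_{a'_i} u_i(a'_i,\sigma_{-i}) - u_i(\sigma)$ and applying the same inequality to the inner $\max_{a'_i}$ plus the triangle inequality yields
\begin{equation*}
|\mathcal{E}_i(\sigma,u)-\mathcal{E}_i(\sigma',u)| \;\le\; \max_{a'_i\in\Aa_i} \bigl| u_i(a'_i,\sigma_{-i}) - u_i(a'_i,\sigma'_{-i}) \bigr| \;+\; |u_i(\sigma)-u_i(\sigma')|.
\end{equation*}

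Next I would bound each of the two terms by a standard hybrid (telescoping) argument over players, swapping one marginal $\sigma_j$ for $\sigma'_j$ at a time. For one such single-coordinate swap, since the payoff tensor is entrywise in $[0,1]$, the change in expected utility is at most $\sum_{a_j}|\sigma_j(a_j)-\sigma'_j(a_j)|$, which by the definition of $\|\sigma-\sigma'\|$ is at most $\|\sigma-\sigma'\|$. The term $|u_i(\sigma)-u_i(\sigma')|$ telescopes across all $n$ players and is therefore bounded by $n\|\sigma-\sigma'\|$, while $|u_i(a'_i,\sigma_{-i})-u_i(a'_i,\sigma'_{-i})|$ holds player $i$'s pure action fixed and telescopes over only $n-1$ coordinates, giving a bound of $(n-1)\|\sigma-\sigma'\|$. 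Summing the two contributions yields $(2n-1)\|\sigma-\sigma'\| \le 2n\|\sigma-\sigma'\|$, and since this bound is uniform in $i$ it passes through the outer max, completing the proof.

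There is no real obstacle here; the only thing that needs care is matching the form of the norm $\|\sigma-\sigma'\|$ (a \emph{max} over players of an $\ell_1$-distance of marginals) with the hybrid step, which is exactly why each single swap contributes at most $\|\sigma-\sigma'\|$ (not the sum of $\ell_1$-distances). The constant $2n$ is loose by one; I would keep the slack to match the statement.
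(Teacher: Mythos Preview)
Your proposal is correct and follows essentially the same approach as the paper: both use a hybrid/telescoping argument across players to show $|u_i(\sigma)-u_i(\sigma')|\le n\|\sigma-\sigma'\|$ (with payoffs in $[0,1]$), and then combine this with the decomposition of $\mathcal{E}_i$ into a best-response term and an own-payoff term. The only cosmetic differences are that the paper argues one-sidedly and then invokes symmetry (obtaining $2n$ directly), whereas you use the $|\max-\max|\le\max|\cdot|$ inequality and track the best-response term over $n-1$ coordinates to get $2n-1$ before relaxing to $2n$.
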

\begin{proof}
    $\forall \sigma, \sigma'$, we define $y_{-j} \coloneqq (\sigma_1, \dots, \sigma_{j-1}, \sigma'_{j+1}, \dots, \sigma'_n)$. 
    Then, $\forall i \in [n]$ we have
    \begin{equation*}
        \begin{aligned}
            |u_i(\sigma) - u_i(\sigma')|
            =& |u_i(\sigma_1, \sigma_2, \dots, \sigma_n) - u_i(\sigma', \sigma'_2, \dots, \sigma'_n)| 
            \\
            =& \Big| \sum_{j=1}^n \Big(u_i(\sigma_1, \dots, \sigma_j, \sigma'_{j+1}, \dots, \sigma'_n) 
            - u_i(\sigma_1, \dots, \sigma'_j, \sigma'_{j+1}, \dots, \sigma'_n) \Big)\Big| 
            \\
            =& \Big| \sum_{j=1}^n \Big(u_i(\sigma_j, y_{-j}) - u_i(\sigma'_j, y_{-j}) \Big)\Big|
            \\
            =& \Big| \sum_{j=1}^n \sum_{a_j}(\sigma_j(a_j) - \sigma'_j(a_j)) \sum_{a_{-j}} u_i(a_j, a_{-j})y_{-j}(a_{-j})  \Big|
            \\
            \le& \sum_{j=1}^n \sum_{a_j}\Big|\sigma_j(a_j) - \sigma'_j(a_j)\Big|\sum_{a_{-j}} u_i(a_j, a_{-j})y_{-j}(a_{-j}) 
            \\
            {\le}& \sum_{j=1}^n \sum_{a_j}\Big|\sigma_j(a_j) - \sigma'_j(a_j)\Big|\sum_{a_{-j}}y_{-j}(a_{-j}) & &,u_i(\cdot) \in [0, 1]
            \\
            \le& \sum_{j=1}^n \sum_{a_j\in A_j}\Big|\sigma_j(a_j) - \sigma'_j(a_j)\Big|
            \le n\max_{j \in [n]} \sum_{a_j\in A_j}\Big|\sigma_j(a_j) - \sigma'_j(a_j)\Big|
            \\
            =& n\norm{\sigma - \sigma'},
        \end{aligned}
    \end{equation*}
    % where $(a)$ holds since $u_i(\cdot) \in [0, 1]$.
    Therefore, $\forall a_i \in A_i$, 
    \begin{equation*}
        \begin{aligned}
            u_i(a_i, \sigma_{-i}) - u_i(\sigma) 
            =& u_i(a_i, \sigma_{-i}) - u_i(a_i, \sigma'_{-i}) + u_i(a_i, \sigma'_{-i}) - u_i(\sigma') + u_i(\sigma') - u_i(\sigma)
            \\
            \le& n\norm{\sigma - \sigma'} + \Ee(\sigma', u) + n\norm{\sigma - \sigma'}
            \\
            =& \Ee(\sigma', u) + 2n\norm{\sigma - \sigma'}.
        \end{aligned}
    \end{equation*}
    Based on that, we get
    \begin{equation*}
        \begin{aligned}
            \Ee(\sigma, u) 
            = \max_{i\in N, a_i \in A_i}[u_i(a_i, \sigma_{-i}) - u_i(\sigma)]
            \le \Ee(\sigma', u) + 2n\norm{\sigma - \sigma'}
        \end{aligned}
    \end{equation*}
    Similarly, we also have
    \begin{equation*}
        \Ee(\sigma', u) 
        \le \Ee(\sigma, u) + 2n\norm{\sigma - \sigma'}
    \end{equation*}
\end{proof}

Based on \cref{lem:NE:L_sigma}, $\forall f, f' \in \FfNE$, we have
\begin{equation*}
	\Ee(f(u), u) - \Ee(f'(u), u) \le 2\norm{f(u) - f'(u)} \le 2\norm{f - f'}_{\infty}
\end{equation*}
Considering that $|\Ee(\cdot)| \le 1$, according to \cref{lem:GB}, we have:
\begin{equation*}
	\EE_{u\sim\Dd}[\Ee(\fNE(u), u)] - \frac{1}{m}\sum_{u\in S}\Ee(\fNE(u), u) \le 2\cdot \inf_{r>0}\Big\{ \sqrt{\frac{2\ln\Nn_{\infty}(\FfNE, r)}{m}}+ 2nr \Big\}
	+ 4\sqrt{\frac{2\ln(4/\delta)}{m}}
\end{equation*}

\subsubsection{CCE Approximator}
\begin{lemma}
	\label{lem:CCE:L_pi}
	For arbitrary joint mixed strategy $\pi$ and $\pi'$, we have 
	\begin{equation*}
		|\Ee(\pi,u) - \Ee(\pi', u)|
		\le 2\norm{\pi - \pi'},
	\end{equation*}
\end{lemma}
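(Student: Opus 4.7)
The plan is to establish Lipschitz continuity of $\Ee_i(\cdot, u)$ with respect to $\pi$ for each individual player $i$, then conclude the bound on $\Ee(\cdot, u)$ via the standard observation that $|\max_i A_i - \max_i B_i| \le \max_i |A_i - B_i|$. Writing $\Ee_i(\pi, u) = \max_{a_i' \in \Aa_i} u_i(a_i', \pi_{-i}) - u_i(\pi)$, the proof reduces to controlling two scalar quantities: the best-response value $\max_{a_i'} u_i(a_i', \pi_{-i})$ and the expected payoff $u_i(\pi)$.

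First I would establish two elementary bounds using only that payoffs lie in $[0, 1]$ together with the triangle inequality. For the second term,
\begin{equation*}
|u_i(\pi) - u_i(\pi')| \;=\; \Bigl|\sum_{a \in \Aa} u_i(a)(\pi(a) - \pi'(a))\Bigr| \;\le\; \sum_{a \in \Aa} |\pi(a) - \pi'(a)| \;=\; \norm{\pi - \pi'}.
\end{equation*}
For the first, noting that $\pi_{-i}(a_{-i}) = \sum_{a_i} \pi(a_i, a_{-i})$, for any fixed $a_i' \in \Aa_i$,
\begin{equation*}
|u_i(a_i', \pi_{-i}) - u_i(a_i', \pi'_{-i})| \;\le\; \sum_{a_{-i}} |\pi_{-i}(a_{-i}) - \pi'_{-i}(a_{-i})| \;\le\; \sum_{a_i, a_{-i}} |\pi(a_i, a_{-i}) - \pi'(a_i, a_{-i})| \;=\; \norm{\pi - \pi'},
\end{equation*}
where the last inequality collapses the sum over $a_i$ inside the absolute value by the triangle inequality.

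With these two bounds in hand, I would apply the standard max-difference trick. Let $a_i^\star \in \arg\max_{a_i'} u_i(a_i', \pi_{-i})$; then
\begin{equation*}
\Ee_i(\pi, u) - \Ee_i(\pi', u) \;\le\; u_i(a_i^\star, \pi_{-i}) - u_i(a_i^\star, \pi'_{-i}) + u_i(\pi') - u_i(\pi) \;\le\; 2\norm{\pi - \pi'},
\end{equation*}
and by symmetric argument with the optimizer for $\pi'$, $|\Ee_i(\pi, u) - \Ee_i(\pi', u)| \le 2\norm{\pi - \pi'}$. Taking the maximum over $i$ yields the claimed inequality. I do not foresee a real obstacle: the proof is mechanical once one commits to the decomposition into best-response and average payoff terms, the only subtlety being to use a single optimizer to upper-bound the difference of two maxima. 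The structure mirrors the NE argument in \cref{lem:NE:L_sigma}, but is actually simpler because the joint distribution $\pi$ is already a probability vector on $\Aa$, so no telescoping across players is required.
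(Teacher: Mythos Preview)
Your proposal is correct and follows essentially the same approach as the paper: both arguments bound $|u_i(\pi)-u_i(\pi')|$ and $|u_i(a_i',\pi_{-i})-u_i(a_i',\pi'_{-i})|$ by $\norm{\pi-\pi'}$ and then combine these via a max-difference step. The only cosmetic difference is that you first establish $|\Ee_i(\pi,u)-\Ee_i(\pi',u)|\le 2\norm{\pi-\pi'}$ and then take $\max_i$, whereas the paper inserts $\Ee(\pi',u)$ directly into the telescoping chain for $u_i(a_i,\pi_{-i})-u_i(\pi)$ before maximizing; the underlying estimates are identical.
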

\begin{proof}
	$\forall \pi, \pi', \forall i \in [n]$ we have
	\begin{equation}
	\label{eq:u:L_pi}
	\begin{aligned}
		|u_i(\pi) - u_i(\pi')| = \sum_{a\in\Aa}(\pi(a) - \pi'(a))u_i(a)
		\overset{(a)}{\le} \sum_{a\in\Aa}|\pi(a) - \pi'(a)|
		= \norm{\pi - \pi'}
	\end{aligned}
	\end{equation}
	where $(a)$ holds since $u_i(\cdot) \in [0, 1]$.
	Therefore, $\forall a_i \in A_i$, 
	\begin{equation*}
		\begin{aligned}
			u_i(a_i, \pi_{-i}) - u_i(\pi) 
			=& u_i(a_i, \pi_{-i}) - u_i(a_i, \pi'_{-i}) + u_i(a_i, \pi'_{-i}) - u_i(\pi') + u_i(\pi') - u_i(\pi)
			\\
			\le& \norm{\pi - \pi'} + \Ee(\pi', u) + \norm{\pi - \pi'}
			\\
			=& \Ee(\pi', u) + 2\norm{\pi - \pi'}.
		\end{aligned}
	\end{equation*}
	Based on that, we get
	\begin{equation*}
		\begin{aligned}
			\Ee(\pi, u) 
			= \max_{i\in N, a_i \in A_i}[u_i(a_i, \pi_{-i}) - u_i(\pi)]
			\le \Ee(\pi', u) + 2\norm{\pi - \pi'}
		\end{aligned}
	\end{equation*}
	Similarly, we also have
	\begin{equation*}
		\Ee(\pi', u) 
		\le \Ee(\pi, u) + 2\norm{\pi - \pi'}
	\end{equation*}
\end{proof}

Based on \cref{lem:CCE:L_pi}, $\forall f, f' \in \FfCCE$, we have
\begin{equation*}
	\Ee(f(u), u) - \Ee(f'(u), u) \le 2\norm{f(u) - f'(u)} \le 2\norm{f - f'}_{\infty}
\end{equation*}
Considering that $|\Ee(\cdot)| \le 1$, according to \cref{lem:GB}, we have:
\begin{equation*}
	\EE_{u\sim\Dd}[\Ee(\fCCE(u), u)] - \frac{1}{m}\sum_{u\in S}\Ee(\fCCE(u), u) \le 2\cdot \inf_{r>0}\Big\{ \sqrt{\frac{2\ln\Nn_{\infty}(\FfCCE, r)}{m}}+ 2r \Big\}
	+ 4\sqrt{\frac{2\ln(4/\delta)}{m}}
\end{equation*}

\subsubsection{CE Approximator}
\begin{lemma}
	\label{lem:CE:L_pi}
	For arbitrary joint mixed strategy $\pi$ and $\pi'$, we have 
	\begin{equation*}
		|\Ee^\CE(\pi,u) - \Ee^\CE(\pi', u)|
		\le 2\norm{\pi - \pi'},
	\end{equation*}
\end{lemma}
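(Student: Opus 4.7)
The plan is to mirror the CCE argument in \cref{lem:CCE:L_pi} but carry an extra quantifier over strategy modifications through the estimate. Fix a player $i$ and a strategy modification $\phi_i: \Aa_i \to \Aa_i$, and define
\begin{equation*}
	g_i^{\phi_i}(\pi, u) \coloneqq \sum_{a \in \Aa} \pi(a) u_i(\phi_i(a_i), a_{-i}) - u_i(\pi),
\end{equation*}
so that $\Ee^\CE_i(\pi, u) = \max_{\phi_i} g_i^{\phi_i}(\pi, u)$ and $\Ee^\CE(\pi, u) = \max_{i \in [n]} \max_{\phi_i} g_i^{\phi_i}(\pi, u)$.

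First I would bound $|g_i^{\phi_i}(\pi, u) - g_i^{\phi_i}(\pi', u)|$ by $2\norm{\pi - \pi'}$. The first summand is linear in $\pi$, and since $u_i(\phi_i(a_i), a_{-i}) \in [0, 1]$, the estimate
\begin{equation*}
	\Big| \sum_{a \in \Aa} (\pi(a) - \pi'(a)) u_i(\phi_i(a_i), a_{-i}) \Big| \le \sum_{a \in \Aa} |\pi(a) - \pi'(a)| = \norm{\pi - \pi'}
\end{equation*}
follows directly. The second summand contributes another $\norm{\pi - \pi'}$ by reusing \cref{eq:u:L_pi} from the proof of \cref{lem:CCE:L_pi}. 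Adding the two gives the per-$\phi_i$ Lipschitz bound.

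Second, I would lift this to $\Ee^\CE$ by two applications of the standard fact that a pointwise supremum of $L$-Lipschitz functions is $L$-Lipschitz. Concretely, for each $i$, taking $\max_{\phi_i}$ preserves the $2\norm{\pi - \pi'}$ bound, yielding Lipschitz continuity of $\Ee^\CE_i(\cdot, u)$; then taking $\max_{i \in [n]}$ preserves it again, yielding the claim. There is no real obstacle here: the only subtlety, compared with the CCE case, is that the deviation set is the (finite) function space $\{\phi_i : \Aa_i \to \Aa_i\}$ rather than a single action $a_i \in \Aa_i$, but since the bound on $g_i^{\phi_i}$ is uniform in $\phi_i$, the supremum argument goes through identically. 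The whole proof thus reduces to the two-line linearity/triangle-inequality computation above plus an invocation of \cref{lem:CCE:L_pi}'s payoff-Lipschitz estimate.
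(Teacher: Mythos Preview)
Your proposal is correct and follows essentially the same approach as the paper: both arguments bound the two linear-in-$\pi$ pieces of $g_i^{\phi_i}(\pi,u)$ by $\norm{\pi-\pi'}$ each (using $u_i(\cdot)\in[0,1]$ and \cref{eq:u:L_pi}), then pass to the maximum over $\phi_i$ and $i$. The only cosmetic difference is that the paper writes the one-sided inequality $g_i^{\phi_i}(\pi,u)\le \Ee^\CE(\pi',u)+2\norm{\pi-\pi'}$ directly and then swaps $\pi,\pi'$, whereas you package the same computation as ``supremum of $2$-Lipschitz functions is $2$-Lipschitz.''
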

\begin{proof}
    $\forall a_i \in A_i, \forall \phi_i$, we have 
	\begin{equation*}
	\begin{aligned}
		\sum_{a\in\Aa}\pi(a)u_i(\phi(a_i), a_{-i}) - u_i(\pi)
		=& \sum_{a\in\Aa}\pi(a)u_i(\phi(a_i), a_{-i}) - \sum_{a\in\Aa}\pi'(a)u_i(\phi(a_i), a_{-i}) \\ &\quad+ \sum_{a\in\Aa}\pi'(a)u_i(\phi(a_i), a_{-i}) - u_i(\pi') + u_i(\pi') - u_i(\pi)
		\\
		\le& \norm{\pi - \pi'} + \Ee^\CE(\pi', u) + \norm{\pi - \pi'}
		\\
		=& \Ee^\CE(\pi', u) + 2\norm{\pi - \pi'}.
	\end{aligned}
	\end{equation*}
	Based on that, we get
	\begin{equation*}
		\begin{aligned}
			\Ee^\CE(\pi, u) 
			= \max_{i\in N}\max_{\phi_i}\sum_{a\in\Aa}\pi(a)u_i(\phi(a_i), a_{-i})  - u_i(\pi)
			\le \Ee^\CE(\pi', u) + 2\norm{\pi - \pi'}
		\end{aligned}
	\end{equation*}
	Similarly, we also have
	\begin{equation*}
		\Ee^\CE(\pi', u) 
		\le \Ee^\CE(\pi, u) + 2\norm{\pi - \pi'}
	\end{equation*}
\end{proof}

Based on \cref{lem:CCE:L_pi}, $\forall f, f' \in \FfCE$, we have
\begin{equation*}
	\Ee^\CE(f(u), u) - \Ee^\CE(f'(u), u) \le 2\norm{f(u) - f'(u)} \le 2\norm{f - f'}_{\infty}
\end{equation*}
Considering that $|\Ee(\cdot)| \le 1$, according to \cref{lem:GB}, we have:
\begin{equation*}
	\EE_{u\sim\Dd}[\Ee^\CE(\fCE(u), u)] - \frac{1}{m}\sum_{u\in S}\Ee^\CE(\fCE(u), u) \le 2\cdot \inf_{r>0}\Big\{ \sqrt{\frac{2\ln\Nn_{\infty}(\FfCE, r)}{m}}+ 2r \Big\}
	+ 4\sqrt{\frac{2\ln(4/\delta)}{m}}
\end{equation*}

\subsection{Proof of \cref{thm:SC}}\label{prf:thm:SC}
\thmSC*
\begin{proof}
For function class $\Ff$ of NE, CE or CCE approximators, according to \cref{lem:NE:L_sigma}, \cref{lem:CCE:L_pi} and \cref{lem:CE:L_pi}, $\forall f, g \in \Ff$ we have
\begin{equation}
\label{eq:L_sigma_r}
    \Ee^{(\CE)}(f(u), u) - \Ee^{(\CE)}(g(u), u) \le L\norm{f(u) - g(u)} \le L\norm{f - g}_{\infty},
\end{equation}
where $L = 2n$ for NE approximators, and $L = 2$ for CE and CCE approximators.

For simplicity, we denote $L_S(f) = \frac{1}{m}\sum_{u\in S}\Ee^{(\CE)}(f(u), u)$ and $L_\Dd(f) = \EE_{u\sim\Dd}[\Ee^{(\CE)}(f(u), u)]$.
let $\Ff_r$ with $|\Ff_r| = \Nn_{\infty}(\Ff, r)$ be the function class that $r$-covers $\Ff$ for some $r>0$. 
$\forall \epsilon \in (0, 1)$, by setting $r = \frac{\epsilon}{3L}$ we have
\begin{equation*}
	\begin{aligned}
		&\PP_{S\sim\Dd^m}\Big[\exists f\in\Ff, \big|L_S(f)
		- L_\Dd(f)\big| > \epsilon \Big] 
		\\
		\le& \PP_{S\sim\Dd^m}\Big[ \exists f\in\Ff, \big|L_S(f) - L_S({f}_r)\big| + \big| L_S({f}_r) - L_\Dd({f}_r)\big|
		+ \big|L_\Dd({f}_r) - L_\Dd(f)\big| 
		> \epsilon \Big] 
		\\
		\overset{(a)}{\le}& \PP_{S\sim\Dd^m}\Big[\exists f\in\Ff, 
		Lr + \big| L_S({f}_r) - L_\Dd({f}_r)\big| + Lr > \epsilon \Big] 
		\\
		\le& \PP_{S\sim\Dd^m}\Big[\exists {f}_r \in {\Ff}_r, \big|L_S({f}_r) - L_\Dd({f}_r) \big| > \epsilon - 2Lr \Big]
		\\
		\overset{(b)}{\le}& \Nn_{\infty}(\Ff, r)\PP_{S\sim\Dd^m}\Big[\big|L_S({f}) - L_\Dd({f}) \big| > \epsilon - 2Lr \Big]
		\\
		\overset{(c)}{\le}& 2\Nn_{\infty}(\Ff, r)\exp(-2m(\epsilon - 2Lr)^2),
		\\
		=& 2\Nn_{\infty}(\Ff, \frac{\epsilon}{3L})\exp(-\frac{2}{9}m\epsilon^2)
	\end{aligned}
\end{equation*}
where $(a)$ holds by \cref{eq:L_sigma_r}, $(b)$ holds by union bound, and $(c)$ holds by Hoeffding inequality.
As a result, when $m \ge  \frac{9}{2\epsilon^2}\left(\ln\frac{2}{\delta} + \ln\Nn_{\infty}(\Ff, \frac{\epsilon}{3L})\right)$, we have
$\PP_{S\sim\Dd^m}\Big[\exists f\in \Ff, \Big|L_S(f) - L_\Dd(f)\Big| > \epsilon \Big] < \delta$.
% \begin{equation*}
% 	\begin{aligned}
% 		\PP_{S\sim\Dd^m}\Big[\exists f\in \Ff, \Big|L_S(f) - L_\Dd(f)\Big| > \epsilon \Big] < \delta
% 	\end{aligned}
% \end{equation*}
\end{proof}

\subsection{Proof of \cref{thm:CO}}\label{prf:thm:CO}
\thmCO*

We first provide an auxiliary lemma.
\begin{lemma}
    \label{lem:prf:thm:CO}
    For function class $\Ff$ and orbit averaging operator $\Xx$, if $\forall f,g \in \Ff, \ell_\infty(\Xx f, \Xx g) \le \ell_\infty(f, g)$, then $\Nn_\infty(\Xx \Ff, r) \le \Nn_\infty(\Ff, r)$ for any $r > 0$.
\end{lemma}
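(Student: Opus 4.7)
The plan is to construct an explicit $r$-cover of $\Xx\Ff$ whose cardinality is at most $\Nn_\infty(\Ff, r)$, by pushing a minimum cover of $\Ff$ through the operator $\Xx$. Concretely, fix $r > 0$ and let $\Ff_r$ be a minimum $r$-cover of $\Ff$ under $\ell_\infty$, so that $|\Ff_r| = \Nn_\infty(\Ff, r)$ and for every $f \in \Ff$ there is some $f_r \in \Ff_r$ with $\ell_\infty(f, f_r) \le r$. Set $\Xx\Ff_r := \{\Xx f_r : f_r \in \Ff_r\}$; as the image of $\Ff_r$ under a function, this family has cardinality at most $|\Ff_r|$.

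Next, I would verify that $\Xx\Ff_r$ is itself an $r$-cover of $\Xx\Ff$. Given any element $\Xx f \in \Xx\Ff$ with $f \in \Ff$, pick $f_r \in \Ff_r$ with $\ell_\infty(f, f_r) \le r$; the contraction hypothesis then gives $\ell_\infty(\Xx f, \Xx f_r) \le \ell_\infty(f, f_r) \le r$, so $\Xx f_r \in \Xx\Ff_r$ lies within $\ell_\infty$-distance $r$ of $\Xx f$. Combining the two observations yields $\Nn_\infty(\Xx\Ff, r) \le |\Xx\Ff_r| \le \Nn_\infty(\Ff, r)$, as desired.

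The one point that requires a little care is that the contraction hypothesis is stated for pairs in $\Ff \times \Ff$, whereas my argument applies it to a pair $(f, f_r)$ where $f_r$ may lie outside $\Ff$. In the intended application to \cref{thm:CO} this is harmless, because the averaging operators $\Oo, \Pp, \Qq$ are naturally defined on the full ambient space of strategy-valued functions on $\Uu$, and the triangle inequality together with the invariance of $\ell_\infty$ under permuting player indices or actions shows directly that each of them is $\ell_\infty$-non-expansive on that ambient space. Alternatively one may simply restrict attention to internal covers $\Ff_r \subseteq \Ff$, which is the standard convention. Beyond this small bookkeeping issue, the argument is a direct counting step, so I do not foresee any substantive obstacle.
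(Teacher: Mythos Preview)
Your argument is correct and essentially identical to the paper's: take a minimum $r$-cover $\Ff_r$ of $\Ff$, push it through $\Xx$, and use the non-expansiveness hypothesis to see that $\Xx\Ff_r$ is an $r$-cover of $\Xx\Ff$ of cardinality at most $|\Ff_r|$. Your additional remark about $f_r$ possibly lying outside $\Ff$ is a valid bookkeeping point that the paper's proof glosses over; either resolution you propose (extending the contraction to the ambient space, or using internal covers) is fine and does not change the substance of the argument.
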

\begin{proof}
    $\forall r > 0$, Denote $\Ff_r$ as the smallest $r$-covering set  that covers $\Ff$ with size $\Nn_\infty(\Ff, r)$.
    $\forall f \in \Ff$, let $f_r \in \Ff_r$ be the function that $r$-covers $f$.
    We have $\ell_\infty(\Xx f_r, \Xx f) \le \ell_\infty(f_r, f) \le r$.
    Therefore, $\Xx\Ff_r$ is a $r$-covering set of $\Xx \Ff$, and we have $\Nn_\infty(\Xx \Ff, r) \le |\Xx\Ff_r| \le |\Ff_r| = \Nn_\infty$.
\end{proof}

\begin{proof}[Proof of \cref{thm:CO}]
For player $i \in [n]$ and $\forall \fNE, \gNE \in \FfNE$, assuming $\Uu$ is closed under any $\rho_i \in \Gg_i$. 
For $\Oo_i$, 
\begin{align*}
    l_{\infty}(\Oo_i \fNE,\Oo_i \gNE)
    =& \max_{u\in \Uu} \|\Oo_i \fNE(u) - \Oo_i \gNE(u)\|\\
    % =& \max_{u\in \Uu} \max_{j\in [n]} \|(\Oo_i \fNE)(u)_j - (\Oo_i \gNE)(u)_j\|\\
    =& \max_{j\in [n]} \max_{u\in \Uu} \|(\Oo_i \fNE)(u)_j - (\Oo_i \gNE)(u)_j\|\\
    % =& \max\Big\{ \max_{u\in\Uu} \|\fNE(u)_i - \gNE(u)_i \|,~\max_{j\ne i} \max_{u\in\Uu} \|(\Oo_i \fNE)(u)_j - (\Oo_i \gNE)(u)_j \| \Big\}\\
    =& \max\Big\{ \max_{u\in\Uu} \|\fNE(u)_i - \gNE(u)_i \|,~\max_{j\ne i} \max_{u\in\Uu} \|\frac{1}{|\Aa_i|!}\sum_{\rho_i \in \Gg_i} (\fNE(\rho_i u)_j - \gNE(\rho_i u)_j) \| \Big\}\\
    % \le& \max \Big\{ \max_{u\in\Uu} \|\fNE(u)_i - \gNE(u)_i \|,~\max_{j\ne i} \max_{u\in\Uu} \frac{1}{|\Aa_i|!}\sum_{\rho_i \in \Gg_i} \| \fNE(\rho_i u)_j - \gNE(\rho_i u)_j \| \Big\}\\
    \le& \max \Big\{ \max_{u\in\Uu} \|\fNE(u)_i - \gNE(u)_i \|,~\max_{j\ne i} \frac{1}{|\Aa_i|!}\sum_{\rho_i \in \Gg_i} \max_{u\in\Uu} \| \fNE(\rho_i u)_j - \gNE(\rho_i u)_j \| \Big\}\\
    =& \max \Big\{ \max_{u\in\Uu} \|\fNE(u)_i - \gNE(u)_i \|,~\max_{j\ne i} \frac{1}{|\Aa_i|!}\sum_{\rho_i \in \Gg_i} \max_{u\in\Uu} \| \fNE(u)_j - \gNE(u)_j \| \Big\}\\
    =& \max \Big\{ \max_{u\in\Uu} \|\fNE(u)_i - \gNE(u)_i \|,~\max_{j\ne i} \max_{u} \| \fNE(u)_j - \gNE(u)_j \| \Big\}\\
    % =& \max_{u\in\Uu} \max_{k\in[n]} \|\fNE(u)_k - \gNE(u)_k \|\\
    =& l_{\infty}(\fNE,\gNE)\\
\end{align*}
Since $\Oo = \Oo_1 \circ \cdots \circ \Oo_n$, we have 
\begin{equation}
    \label{eq:prf:thm:CO:O}    
    \ell_\infty(\Oo\fNE, \Oo\gNE) \le \ell_\infty(\fNE, \gNE).
\end{equation}

For $\Pp_i$,
\begin{align*}
    l_{\infty}(\Pp_i \fNE,\Pp_i \gNE)
    =& \max_{u \in \Uu} \max_{j\in [n]} \|(\Pp_i \fNE)(u)_j - (\Pp_i \gNE)(u)_j\|\\
    % =& \max\Big\{ \max_{j\ne i}\max_u \|(\Pp_i \fNE)(u)_j - (\Pp_i \gNE)(u)_j\|,~\max_u\|(\Pp_i \fNE)(u)_i - (\Pp_i \gNE)(u)_i\|\Big\}\\
    =& \max\Big\{ \max_{j\ne i}\max_u \|\fNE(u)_j - \gNE(u)_j\|,~\max_u\|\frac{1}{|\Aa_i|!}\sum_{\rho_i \in \Gg_i} \rho_i^{-1} (\fNE(\rho_i u)_i - \gNE(\rho_i u)_i)\|\Big\}\\
    =& \max\Big\{ \max_{j\ne i}\max_u \|\fNE(u)_j - \gNE(u)_j\|,~\max_u\|\frac{1}{|\Aa_i|!}\sum_{\rho_i \in \Gg_i} (\fNE(\rho_i u)_i - \gNE(\rho_i u)_i)\|\Big\}\\
    % \le& \max\Big\{ \max_{j\ne i}\max_u \|\fNE(u)_j - \gNE(u)_j\|,~\max_u \frac{1}{|\Aa_i|!}\sum_{\rho_i \in \Gg_i} \|\fNE(\rho_i u)_i - \gNE(\rho_i u)_i\|\Big\}\\
    \le& \max\Big\{ \max_{j\ne i}\max_u \|\fNE(u)_j - \gNE(u)_j\|,~\frac{1}{|\Aa_i|!}\sum_{\rho_i \in \Gg_i}\max_u  \|\fNE(\rho_i u)_i - \gNE(\rho_i u)_i\|\Big\}\\
    =& \max\Big\{ \max_{j\ne i}\max_u \|\fNE(u)_j - \gNE(u)_j\|,~\frac{1}{|\Aa_i|!}\sum_{\rho_i \in \Gg_i}\max_u  \|\fNE(u)_i - \gNE(u)_i\|\Big\}\\
    =& \max\Big\{ \max_{j\ne i}\max_u \|\fNE(u)_j - \gNE(u)_j\|,~\max_u  \|\fNE(u)_i - \gNE(u)_i\|\Big\}\\
    % =& \max_u \max_i \|\fNE(u)_i - \gNE(u)_i \|\\
    =& l_{\infty}(\fNE,\gNE)\\
\end{align*}
Since $\Pp = \Pp_1 \circ \cdots \circ \Pp_n$, we have 
\begin{equation}
    \label{eq:prf:thm:CO:P}    
    \ell_\infty(\Pp\fNE, \Pp\gNE) \le \ell_\infty(\fNE, \gNE).
\end{equation}

For CE or CCE approximator $\fcCE \in \FfcCE$ and $\Qq_i$, we have
\begin{align*}
    l_\infty(\Qq_i \fcCE,\Qq_i \gcCE)
    =& \max_{u\in\Uu} \| (\Qq_i \fcCE)(u) - (\Qq_i \gcCE)(u) \|
    \\
    =& \max_u \| \frac{1}{|\Aa_i|!}\sum_{\rho_i\in\Gg_i} \rho_i^{-1} (\fcCE(\rho_i u) - \gcCE(\rho_i u))\|
    \\
    \le& \max_u \frac{1}{|\Aa_i|!}\sum_{\rho_i\in\Gg_i} \| \rho_i^{-1} (\fcCE(\rho_i u) - \gcCE(\rho_i u))\|\\
    \le& \frac{1}{|\Aa_i|!}\sum_{\rho_i\in\Gg_i} \max_u \| \rho_i^{-1} (\fcCE(\rho_i u) - \gcCE(\rho_i u))\|\\
    =& \frac{1}{|\Aa_i|!}\sum_{\rho_i\in\Gg_i} \max_u \| \fcCE(\rho_i u) - \gcCE(\rho_i u)\|
    \\
    =& \frac{1}{|\Aa_i|!}\sum_{\rho_i\in\Gg_i} \max_u \| \fcCE(u) - \gcCE(u)\|
    \\
    =& l_\infty (\fcCE,\gcCE)\\
\end{align*}
Since $\Qq = \Qq_1 \circ \cdots \circ \Qq_n$, we have 
\begin{equation}
    \label{eq:prf:thm:CO:Q}    
    \ell_\infty(\Qq\fcCE, \Qq\gcCE) \le \ell_\infty(\fcCE, \gcCE).
\end{equation}

Combing \cref{lem:prf:thm:CO}, \cref{eq:prf:thm:CO:O}, \cref{eq:prf:thm:CO:P} and \cref{eq:prf:thm:CO:Q}, we finish the proof.
\end{proof}

\subsection{Proof of \cref{thm:CCE:E}}\label{prf:thm:CCE:E}
\thmCCEE*

We first prove a lemma about the property of $\Ee_i(\pi,u)$ and $\Ee_i^\CE(\pi,u)$.
\begin{lemma}
    $\Ee_i(\pi,u)$ and $\Ee_i^\CE(\pi,u)$ are
    convex on $\pi$, i.e.
    \begin{equation*}
        \begin{aligned}
            p \Ee_i^{\text{(CE)}} (\pi_1,u) + (1-p) \Ee_i^{\text{(CE)}}(\pi_2,u) \ge \Ee_i^{\text{(CE)}} (p\pi_1 + (1-p)\pi_2,u),\quad \forall p\in[0,1]
        \end{aligned}
    \end{equation*}
\end{lemma}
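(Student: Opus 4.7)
The plan is to prove the lemma by showing that each function is expressible as a pointwise maximum of affine functions of $\pi$, minus an affine function of $\pi$, and then invoke the standard fact that such a structure yields a convex function.

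First, I would verify that $u_i(\pi) = \sum_{a \in \Aa} \pi(a) u_i(a)$ is linear (hence affine) in $\pi$, directly from the definition of expected utility over a joint distribution. Next, for $\Ee_i(\pi,u) = \max_{a'_i \in \Aa_i} u_i(a'_i, \pi_{-i}) - u_i(\pi)$, I would observe that the marginal $\pi_{-i}(a_{-i}) = \sum_{a_i} \pi(a_i, a_{-i})$ is linear in $\pi$, so for every fixed $a'_i$ the quantity $u_i(a'_i, \pi_{-i}) = \sum_{a_{-i}} u_i(a'_i, a_{-i}) \pi_{-i}(a_{-i})$ is linear in $\pi$. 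Taking a pointwise maximum over the finite set $\Aa_i$ of linear functions gives a convex function of $\pi$, and subtracting the affine $u_i(\pi)$ preserves convexity.

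For $\Ee_i^\CE(\pi,u) = \max_{\phi_i} \sum_{a \in \Aa} \pi(a) u_i(\phi_i(a_i), a_{-i}) - u_i(\pi)$, I would note that for every fixed strategy modification $\phi_i: \Aa_i \to \Aa_i$, the sum $\sum_{a \in \Aa} \pi(a) u_i(\phi_i(a_i), a_{-i})$ is by inspection linear in $\pi$. The maximum is over the finite set of all functions $\Aa_i \to \Aa_i$, so again we obtain a pointwise maximum of finitely many linear functions, which is convex. Subtracting $u_i(\pi)$ keeps convexity.

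There is essentially no main obstacle here. The only thing to be careful about is confirming linearity in $\pi$ for the deviation terms, and in particular that $\pi_{-i}$ depends linearly (in fact, via a marginalization operator) on $\pi$. Once that is in hand, the convexity is immediate from the general principle that $\max_{k} (\ell_k(\pi) - c(\pi))$ is convex whenever each $\ell_k$ and $c$ is affine. Concretely, for any $p \in [0,1]$ and any pair $\pi_1, \pi_2$, writing $\pi_p \coloneqq p\pi_1 + (1-p)\pi_2$ and choosing $a'_i{}^* \in \arg\max_{a'_i} u_i(a'_i, (\pi_p)_{-i})$ (resp.\ $\phi_i^* \in \arg\max_{\phi_i}$ for the CE case), linearity gives
\begin{equation*}
\Ee_i^{\text{(CE)}}(\pi_p, u) = p\bigl[\text{linear}(\pi_1) - u_i(\pi_1)\bigr] + (1-p)\bigl[\text{linear}(\pi_2) - u_i(\pi_2)\bigr] \le p\Ee_i^{\text{(CE)}}(\pi_1,u) + (1-p)\Ee_i^{\text{(CE)}}(\pi_2,u),
\end{equation*}
where the inequality uses that the maximum over $a'_i$ (or $\phi_i$) at $\pi_1, \pi_2$ dominates the value at the specific maximizer chosen for $\pi_p$.
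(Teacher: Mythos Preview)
Your proposal is correct and takes essentially the same approach as the paper: observe that $u_i(\pi)$, $u_i(a'_i,\pi_{-i})$, and $\sum_a \pi(a)u_i(\phi_i(a_i),a_{-i})$ are linear in $\pi$, then use that a pointwise maximum of linear functions is convex. Your write-up is in fact more detailed than the paper's, which simply states these linearities and invokes the same principle.
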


\begin{proof}
We recall the definition $\Ee_i(\pi,u) = \max_{a_i\in \Aa_i} u_i(a_i,\pi_{-i}) - u_i(\pi)$ for CCE approximator
and $\Ee_i^\CE(\pi,u) = \max_{\phi_i \in \Aa_i \to \Aa_i} \sum_a \pi(a) u_i(\phi_i(a_i),a_{-i}) - u_i(\pi)$ for CE approximator.
$u_i(a_i,\pi_{-i})$ is linear on $\pi$.
% \begin{align*}
%      u_i(a_i,\pi_{-i})
%     =& \sum_{b_i\in\Aa_i,b_{-i}\in\Aa_{-i}} u_i(b_i,b_{-i})\cdot(a_i, \pi_{-i})(b_i,b_{-i})
%     \\
%     =& \sum_{b_{-i}\in\Aa_{-i}} u_i(a_i,b_{-i})\cdot(a_i, \pi_{-i})(a_i,b_{-i})
%     \\
%     =& \sum_{b_{-i}\in\Aa_{-i}} u_i(a_i,b_{-i})\cdot\pi_{-i}(b_{-i})
%     \\
%     =& \sum_{b_i\in\Aa_i,b_{-i}\in\Aa_{-i}} u_i(a_i,b_{-i})\cdot\pi(b_i,b_{-i})
% \end{align*}
% Then,
% \begin{align*}
%     p u_i(a_i,\pi_{-i}^{1}) + (1-p) u_i(a_i,\pi_{-i}^{2})
%     =& \sum_{b_i\in\Aa_i,b_{-i}\in\Aa_{-i}} u_i(a_i,b_{-i})\cdot(p \pi^{1}(b_i,b_{-i}) + (1-p) \pi^{2}(b_i,b_{-i}) )
%     \\
%     =& \sum_{b_i\in\Aa_i,b_{-i}\in\Aa_{-i}} u_i(a_i,b_{-i})\cdot(p \pi^{1} + (1-p)\pi^{2})(b_i,b_{-i})
%     \\
%     =& u_i(a_i,p \pi^{1} + (1-p)\pi^{2})
% \end{align*}
Given $\phi$, $\sum_a \pi(a) u_i(\phi_i(a_i),a_{-i})$ is also linear on $\pi$.
Moreover, the maximum operator on a set of linear functions will induce a convex function.

\end{proof}

\begin{proof}[Proof of \cref{thm:CCE:E}]
For $f\in\FfcCE$ and $\forall i,j \in [n]$,
\begin{align*}
    \EE_{u\sim\Dd}[\Ee_i^\text{(CE)}(\Qq_j f(u),u)]
    =& \EE_{u\sim \Dd}[\Ee_i^\text{(CE)}(\frac{1}{|\Aa_j|!}\sum_{\rho_j\in\Gg_j} \rho_j^{-1} f(\rho_j u),u)]& &,\text{by definition}\\
    \le& \frac{1}{|\Aa_j|!}\sum_{\rho_j\in\Gg_j} \EE_{u\sim \Dd}[\Ee_i^\text{(CE)}(\rho_j^{-1}f(\rho_j u),u)]& &,\text{by convexity}\\
    =& \frac{1}{|\Aa_j|!}\sum_{\rho_j\in\Gg_j} \EE_{v\sim \Dd}[\Ee_i^\text{(CE)}(\rho_j^{-1}f(v),\rho_j^{-1} v)]& &,\text{let $v=\rho_j u$}\\
    =& \frac{1}{|\Aa_j|!}\sum_{\rho_j\in\Gg_j} \EE_{v\sim \Dd}[\Ee_i^\text{(CE)}(f(v),v)]& &,\text{invariance of $\Ee_i^\text{(CE)}(\pi,u)$ under $\rho_j^{-1}\in\Gg_j$}\\
    =& \EE_{u\sim \Dd}[\Ee_i^\text{(CE)}(f(u),u)]\\
\end{align*}

Since $\Qq = \circ_{i} \Qq_i$ and $\Ee = \max_i \Ee_i$, we have
\begin{equation*}
    \EE_{u\sim\Dd}[\Ee(\Qq f(u),u)]\le \EE_{u\sim\Dd}[\Ee(f(u),u)]
\end{equation*}
\end{proof}

% -----
\subsection{Proof of \cref{thm:NE:Bi:c}}\label{prf:thm:NE:Bi:c}
\thmNEBiC*

\begin{proof}
We only prove for the $\Pp$-projected case; the proof for $\Oo$-projected case is similar and therefore omitted.

Recall
\begin{align*}
    \Ee_i(\sigma,u) = \max_{a_i\in\Aa_i} u_i(a_i,\sigma_{-i}) - u_i(\sigma)
\end{align*}

Denote $u_1(\sigma) + u_2(\sigma) \equiv c$, then 
\begin{align*}
    \sum_i \Ee_i(\sigma,u) = \max_{a_1\in\Aa_1,a_2\in\Aa_2} u_1(a_1,\sigma_2) + u_2(a_2,\sigma_1) - c
\end{align*}
Then we have
\begin{align*}
    \EE_{u\sim\Dd} [\sum_i \Ee_i((\Pp\fNE)(u),u)]
    =& \EE_{u\sim\Dd} [\max_{a_1,a_2} u_1(a_1,(\Pp\fNE)(u)_2) + u_2(a_2,(\Pp\fNE)(u)_1) - c]
    \\
    =& \EE_{u\sim\Dd} [\max_{a_1} u_1(a_1,(\Pp\fNE)(u)_2)] + \EE_{u\sim\Dd} [\max_{a_2} u_2(a_2,(\Pp\fNE)(u)_1)] - c
\end{align*}

For the first term,
\begin{align*}
    \EE_{u\sim\Dd} [\max_{a_1} u_1(a_1,(\Pp\fNE)(u)_2)]
    =& \EE_{u\sim\Dd} [\max_{a_1} u_1(a_1,\frac{1}{|\Aa_2|!}\sum_{\rho_2\in\Gg_2}\rho_2^{-1} \fNE(\rho_2 u)_2)]
    \\
    \le& \frac{1}{|\Aa_2|!}\sum_{\rho_2\in\Gg_2} \EE_{u\sim\Dd} [\max_{a_1} u_1(a_1,\rho_2^{-1} \fNE(\rho_2 u)_2)]
    \\
    =& \frac{1}{|\Aa_2|!}\sum_{\rho_2\in\Gg_2} \EE_{v\sim\Dd} [\max_{a_1} (\rho_2^{-1} v)_1(a_1,\rho_2^{-1} \fNE(v)_2)]
    \\
    =& \frac{1}{|\Aa_2|!}\sum_{\rho_2\in\Gg_2} \EE_{v\sim\Dd} [\max_{a_1} v_1(a_1, \fNE(v)_2)]
    \\
    =& \EE_{u\sim\Dd} [\max_{a_1} u_1(a_1, \fNE(u)_2)]
\end{align*}

Similarly, for the second term,
\begin{align*}
    & \EE_{u\sim\Dd} [\max_{a_2} u_2(a_2,(\Pp\fNE)(u)_1)]
    \le \EE_{u\sim\Dd} [\max_{a_2} u_2(a_2, \fNE(u)_1)]
\end{align*}

Above all,
\begin{align*}
    \EE_{u\sim\Dd} [\sum_i \Ee_i((\Pp\fNE)(u),u)]
    =& \EE_{u\sim\Dd} [\max_{a_1} u_1(a_1,(\Pp\fNE)(u)_2)] + \EE_{u\sim\Dd} [\max_{a_2} u_2(a_2,(\Pp\fNE)(u)_1)] - c
    \\
    \le& \EE_{u\sim\Dd} [\max_{a_1} u_1(a_1,\fNE(u)_2)] + \EE_{u\sim\Dd} [\max_{a_2} u_2(a_2,\fNE(u)_1)] - c
    \\
    =& \EE_{u\sim\Dd} [\sum_i \Ee_i(\fNE(u),u)]
\end{align*}
\end{proof}
% -----

% \subsection{Proof of \cref{thm:NEG}}
% \label{prf:thm:NEG}

% Consider a 2-player cooperation game with identity utility, i.e. the game matrix $u_i$ is 
% \begin{table}[H]
%     \centering
%     \begin{tabular}{|c|c|}
%         \hline
%         1 & 0 \\
%         \hline
%         0 & 1 \\
%         \hline
%     \end{tabular}
%     % \caption*{Caption}
%     % \label{tab:game}
% \end{table}
% for $i\in\{1,2\}$

% Let $f$ be a NE approximator which holds PE and PI, $f(u)=(\sigma_1,\sigma_2)$, where $\sigma_1 = (p_1,1-p_1)$ and $\sigma_2=(p_2,1-p_2)$.

% Denote $\Gg_i = \{e,\rho_i\}$ be the permutation group of player i, where $i\in\{1,2\}$, $e$ is the identity function, and $\rho_i$ is the swap function for player i. Note that $\rho_1\rho_2 u = u$. $f(u)_1 = f(\rho_1 \rho_2 u)_1 = \rho_1 f(\rho_2 u)_1 = \rho_1 f(u)_1 = (1-p_1,p_1)$, then we know $p_1 = 0.5$. Similarly, we have $p_2 = 0.5$. Then the only NE that $f$ could find is $\sigma_1=(0.5,0.5)$ and $\sigma_2=(0.5,0.5)$.

% However, this game has three NE: $\sigma_1=\sigma_2=(1,0)$, $\sigma_1=\sigma_2=(0,1)$, and $\sigma_1=\sigma_2=(0.5,0.5)$. The first two NEs induce social welfare to be 2, while the last NE induces social welfare to be 1.

% -----

\subsection{Proof of \cref{thm:NE:P}}\label{prf:thm:NE:P}
\thmNEP*

We first introduce a useful lemma. It is about the property of $\Ee_i(\sigma,u)$
\begin{lemma}
\label{lem:NE:E}
$\Ee_i(\sigma,u)$ is
\begin{enumerate}
    \item Linear on $\sigma_i$, i.e. 
    \begin{equation*}
        p \Ee_i((\sigma_i^{1},\sigma_{-i}),u) + (1-p) \Ee_i((\sigma_i^{2},\sigma_{-i}),u) = \Ee_i((p\sigma_i^{1}+(1-p)\sigma_i^2,\sigma_{-i}),u),~ \forall p\in[0,1]
    \end{equation*}
    
    \item Convex on $\sigma_j$, i.e.
    \begin{equation*}
        p \Ee_i((\sigma_j^{1},\sigma_{-j}),u) + (1-p) \Ee_i((\sigma_j^{2},\sigma_{-j}),u) \ge \Ee_i((p\sigma_j^{1}+(1-p)\sigma_j^2,\sigma_{-j}),u),~ \forall p\in[0,1], j\ne i
    \end{equation*}
    
    % \item Invariant under $\rho_j \in\Gg_j$, i.e.
    % \begin{equation*}
    %     \Ee_i(\sigma,u) = \Ee_i(\rho_k \sigma, \rho_k u),\quad \forall k\in[n]
    % \end{equation*}
\end{enumerate}
\end{lemma}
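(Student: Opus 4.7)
The plan is to exploit the explicit formula $\Ee_i(\sigma,u) = \max_{a'_i \in \Aa_i} u_i(a'_i,\sigma_{-i}) - u_i(\sigma_i,\sigma_{-i})$ and decompose the analysis by examining how each of the two terms depends on the relevant coordinate of $\sigma$. Since payoffs are multilinear in a product strategy (they are expectations over a product distribution), each term will either be linear or a maximum of linear functions, and the two cases of the lemma will follow immediately.

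For part (1), the plan is to observe that the first term $\max_{a'_i} u_i(a'_i,\sigma_{-i})$ does not depend on $\sigma_i$ at all, so it is trivially ``linear'' under any convex combination of $\sigma_i^1,\sigma_i^2$. The second term $u_i(\sigma_i,\sigma_{-i}) = \sum_{a_i\in\Aa_i}\sigma_i(a_i)\,u_i(a_i,\sigma_{-i})$ is manifestly linear in $\sigma_i$. Taking the difference preserves linearity (since a constant minus a linear function is affine, and because we combine $p+(1-p)=1$, the constant term also passes through the convex combination correctly, giving equality).

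For part (2), fix $j \ne i$ and two strategies $\sigma_j^1,\sigma_j^2$ for player $j$, together with $p \in [0,1]$. For each fixed $a'_i \in \Aa_i$, the map $\sigma_j \mapsto u_i(a'_i,\sigma_{-i})$ is linear because one can write $u_i(a'_i,\sigma_{-i}) = \sum_{a_j}\sigma_j(a_j) \sum_{a_{-i,j}} u_i(a'_i,a_j,a_{-i,j})\prod_{k\ne i,j}\sigma_k(a_k)$. Hence $\max_{a'_i} u_i(a'_i,\sigma_{-i})$, being a pointwise maximum of linear functions of $\sigma_j$, is convex in $\sigma_j$. Similarly $u_i(\sigma_i,\sigma_{-i})$ is linear (in particular affine) in $\sigma_j$ by the same multilinearity argument. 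The difference of a convex function and a linear function is convex, which yields the desired inequality.

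There is no real obstacle here; the entire proof is essentially bookkeeping once the key observation is made, namely that exploitability splits into a ``best-response'' term that is independent of $\sigma_i$ (giving linearity in part 1) and a ``max of multilinear functionals'' term (giving convexity in part 2 via the standard fact that a max of affine functions is convex). The only mild care needed is to write out the expectation $u_i(\sigma) = \sum_{a\in\Aa} \prod_k \sigma_k(a_k)\,u_i(a)$ explicitly so that linearity in a single coordinate $\sigma_j$ (with all other $\sigma_k$ fixed) is transparent.
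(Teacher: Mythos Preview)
Your proposal is correct and follows essentially the same approach as the paper: recall the definition $\Ee_i(\sigma,u)=\max_{a_i}u_i(a_i,\sigma_{-i})-u_i(\sigma)$, use the multilinearity of $u_i(\sigma)$ in each coordinate $\sigma_k$, and invoke the fact that a pointwise maximum of linear functions is convex. The paper's proof is just a terser version of exactly this argument.
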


\begin{proof}
\label{prf:lem:NE:E}
    % The first and the second claim are trivial.
    We recall the definition $\Ee_i(\sigma,u) = \max_{a_i\in\Aa_i} u_i(a_i,\sigma_{-i}) - u_i(\sigma)$.
    Notice that $u_i(\sigma)$ is linear on $\sigma_k$ for all $k\in [n]$, thus both $u_i(a_i,\sigma_{-i})$ and $u_i(\sigma)$ are linear on $\sigma_k$ for any $k \in [n]$. 
    Moreover, the maximum operator on a set of linear functions will induce a convex function.

    % For the third claim, we verify it by definition. For player $i$, we have
    % \begin{align*}
    %     \Ee_i(\rho_i \sigma,\rho_i u)
    %     =& \max_{a_i \in \Aa_i} \rho_i u_i(a_i,\rho_i\sigma_{-i}) - \rho_i u_i(\rho_i\sigma)
    %     = \max_{a_i \in \Aa_i} \rho_i u_i(a_i,\sigma_{-i}) - \rho_i u_i(\rho_i\sigma_i,\sigma_{-i})
    %     \\
    %     =& \max_{a_i \in \Aa_i} u_i(\rho_i^{-1}a_i ,\sigma_{-i}) - u_i(\rho_i^{-1}\rho_i\sigma_i,\sigma_{-i})
    %     \overset{(a)}{=} \max_{a_i\in\Aa_i} u_i(a_i,\sigma_{-i}) - u_i(\sigma_i,\sigma_{-i}) 
    %     = \Ee_i(\sigma,u),
    % \end{align*}
    % where  $(a)$ holds since $\rho_i$ is a bijection on $\Aa_i$.
    % For player $j \neq i$, we have
    % \begin{align*}
    %     \Ee_i(\rho_j\sigma,\rho_j u)
    %     =& \max_{a_i\in\Aa} \rho_j u_i(a_i,\rho_j\sigma_{-i}) - \rho_j u_i(\rho_j\sigma)
    %     = \max_{a_i\in\Aa_i} u_i(a_i,\rho_j^{-1}\rho_j \sigma_{-i}) - u_i(\rho_j^{-1}\rho_j \sigma)
    %     \\
    %     =& \max_{a_i\in\Aa_i} u_i(a_i,\sigma_{-i}) - u_i(\sigma)
    %     = \Ee_i(\sigma,u)
    % \end{align*}
\end{proof}

\begin{proof}[Proof of \cref{thm:NE:P}]
We prove the theorem in two steps.

\paragraph{Step 1}
First, we show that 
\begin{equation*}
    \EE_{u \sim \Dd}[\Ee_i((\Pp_i \fNE)(u), u)] = \EE_{u \sim \Dd}[\Ee_i(\fNE(u), u)],\quad \forall \fNE \in \FfNE
\end{equation*}

By definition,
\begin{align*}
    &\EE_{u \sim \Dd}[\Ee_i(\Pp_i \fNE(u), u)]
    \\
    =& \EE_{u \sim \Dd}[\Ee_i((\frac{1}{|\Aa_i |!}\sum_{\rho_i \in \Gg_i} \rho_i^{-1} f(\rho_i u)_i , f(u)_{-i}),u)]
    \\
    =& \frac{1}{|\Aa_i |!}\sum_{\rho_i \in \Gg_i} \EE_{u \sim \Dd}[\Ee_i(( \rho_i^{-1} f(\rho_i u)_i , f(u)_{-i}),u)]& & ,\text{by linearity of $\Ee_i(\sigma,u)$ on $\sigma_i$}
    \\
    =& \frac{1}{|\Aa_i |!}\sum_{\rho_i \in \Gg_i} \EE_{v \sim \Dd}[\Ee_i(( \rho_i^{-1} f(v)_i , f(\rho_i^{-1} v)_{-i}),\rho_i^{-1} v)]& & ,\text{let $v=\rho_i u$ and use the invariance of $\Dd$}
    \\
    =& \frac{1}{|\Aa_i |!}\sum_{\rho_i \in \Gg_i} \EE_{v \sim \Dd}[\Ee_i(( \rho_i^{-1} f(v)_i , f(v)_{-i}),\rho_i^{-1} v)]& & ,\text{OPI of $f$}
    \\
    =& \frac{1}{|\Aa_i |!}\sum_{\rho_i \in \Gg_i} \EE_{u \sim \Dd}[\Ee_i(( f(u)_i , f(u)_{-i}),u)]& & ,\text{invariance of $\Ee_i(\sigma,u)$ under $\rho_i^{-1}\in\Gg_i$}
    \\
    =& \EE_{u \sim \Dd}[\Ee_i(\fNE(u),u)]
\end{align*}

\paragraph{Step 2}
Then we show that
\begin{equation*}
    \EE_{u \sim \Dd}[\Ee_j((\Pp_i \fNE)(u), u)] \le \EE_{u \sim \Dd}[\Ee_j(\fNE(u), u)],\quad \forall \fNE \in \FfNE, j\ne i
\end{equation*}

\begin{align*}
    & \EE_{u \sim \Dd}[\Ee_j((\Pp_i \fNE)(u), u)]\\
    =& \EE_{u \sim \Dd}[\Ee_j((\frac{1}{|\Aa_i |!}\sum_{\rho_i \in \Gg_i} \rho_i^{-1} f(\rho_i u)_i , f(u)_{-i}),u)]\\
    \le& \frac{1}{|\Aa_i |!}\sum_{\rho_i \in \Gg_i} \EE_{u \sim \Dd}[\Ee_j(( \rho_i^{-1} f(\rho_i u)_i , f(u)_{-i}),u)]& &,\text{by convexity of $\Ee_j(\sigma,u)$ on $\sigma_i$}\\
    =& \frac{1}{|\Aa_i |!}\sum_{\rho_i \in \Gg_i} \EE_{v \sim \Dd}[\Ee_j(( \rho_i^{-1} f(v)_i , f(\rho_i^{-1} v)_{-i}),\rho_i^{-1} v)]& &,\text{let $v=\rho_i u$ and use the invariance of $\Dd$}\\
    =& \frac{1}{|\Aa_i |!}\sum_{\rho_i \in \Gg_i} \EE_{v \sim \Dd}[\Ee_j(( \rho_i^{-1} f(v)_i , f(v)_{-i}),\rho_i^{-1} v)]& &,\text{OPI of $f$}\\
    =& \frac{1}{|\Aa_i |!}\sum_{\rho_i \in \Gg_i} \EE_{u \sim \Dd}[\Ee_j(( f(u)_i , f(u)_{-i}),u)]& &,\text{invariance of $\Ee_j(\sigma,u)$ under $\rho_i^{-1}\in\Gg_i$}\\
    =& \EE_{u \sim \Dd}[\Ee_j(\fNE(u),u)]\\
\end{align*}

Since $\Pp = \circ_{i} \Pp_i$ and $\Ee = \max_i \Ee_i$, we have 
\begin{equation*}
    \EE_{u \sim \Dd}[\Ee((\Pp \fNE)(u), u)] \le
    \EE_{u \sim \Dd}[\Ee(\fNE(u), u)]
\end{equation*}

\end{proof}
% -----

\subsection{Proof of \cref{thm:NE:O}}\label{prf:thm:NE:O}
\thmNEO* 

\begin{proof}
We prove the theorem in two steps, similar to the proof of \cref{thm:NE:P}.

\paragraph{Step 1}
First we show that for player $i\in\{1,2\}$, let $\{j\}=\{1,2\} \backslash \{i\}$,
\begin{equation*}
    \EE_{u\sim\Dd} [\Ee_i((\Oo_i \fNE)(u),u)] \le \EE_{u\sim\Dd} [\Ee_i(\fNE(u),u)]
\end{equation*}
This is because
\begin{align*}
    \EE_{u\sim\Dd} [\Ee_i((\Oo_i \fNE)(u),u)]
    =& \EE_{u\sim\Dd} [\Ee_i((\fNE(u)_i,\frac{1}{|\Aa_i|!}\sum_{\rho_i \in\Gg_i}\fNE(\rho_i u)_j),u)]
    \\
    \le& \frac{1}{|\Aa_i|!}\sum_{\rho_i \in\Gg_i}\EE_{u\sim\Dd} [\Ee_i((\fNE(u)_i,\fNE(\rho_i u)_j),u)]& &,\text{by convexity of $\Ee_i$ on $\sigma_j$}
    \\
    =& \frac{1}{|\Aa_i|!}\sum_{\rho_i \in\Gg_i}\EE_{v\sim\Dd} [\Ee_i((\fNE(\rho_i^{-1} v)_i,\fNE(v)_j),\rho_i^{-1} v)]& &,\text{let $v=\rho_i u$}
    \\
    =& \frac{1}{|\Aa_i|!}\sum_{\rho_i \in\Gg_i}\EE_{v\sim\Dd} [\Ee_i((\rho_i^{-1} \fNE(v)_i,\fNE(v)_j),\rho_i^{-1} v)]& &,\text{by PPE of $\fNE$}
    \\
    =& \frac{1}{|\Aa_i|!}\sum_{\rho_i \in\Gg_i}\EE_{v\sim\Dd} [\Ee_i((\fNE(v)_i,\fNE(v)_j),v)]& &,\text{invariance of $\Ee_i(\sigma,u)$ under $\rho_i^{-1}\in\Gg$}
    \\
    =& \EE_{u\sim\Dd} [\Ee_i((\fNE)(u),u)]
\end{align*}

\paragraph{Step 2}
Then we show that if $j\ne i$ and $\{i,j\}=\{1,2\}$
\begin{equation*}
    \EE_{u\sim\Dd} [\Ee_j((\Oo_i \fNE)(u),u)] = \EE_{u\sim\Dd} [\Ee_j(\fNE(u),u)]
\end{equation*}
This is because
\begin{align*}
    \EE_{u\sim\Dd} [\Ee_j((\Oo_i \fNE)(u),u)]
    =& \EE_{u\sim\Dd} [\Ee_j((\fNE(u)_i,\frac{1}{|\Aa_i|!}\sum_{\rho_i\in\Gg_i} \fNE(\rho_i u)_j),u)]
    \\
    =& \frac{1}{|\Aa_i|!}\sum_{\rho_i\in\Gg_i} \EE_{u\sim\Dd} [\Ee_j((\fNE(u)_i, \fNE(\rho_i u)_j),u)]& &,\text{by linearity of $\Ee_j$ on $\sigma_j$}
    \\
    =& \frac{1}{|\Aa_i|!}\sum_{\rho_i\in\Gg_i} \EE_{v\sim\Dd} [\Ee_j((\fNE(\rho_i^{-1} v)_i, \fNE(v)_j),\rho_i^{-1} v)]& &,\text{let $v=\rho_i u$}
    \\
    =& \frac{1}{|\Aa_i|!}\sum_{\rho_i\in\Gg_i} \EE_{v\sim\Dd} [\Ee_j((\rho_i^{-1} \fNE(v)_i, \fNE(v)_j),\rho_i^{-1} v)]& &,\text{by PPE of $\fNE$}
    \\
    =& \frac{1}{|\Aa_i|!}\sum_{\rho_i\in\Gg_i} \EE_{v\sim\Dd} [\Ee_j((\fNE(v)_i, \fNE(v)_j),v)]& &,\text{invariance of $\Ee_j(\sigma,u)$ under $\rho_i^{-1}\in\Gg_i$}
    \\
    =& \EE_{u\sim\Dd} [\Ee_j(\fNE(u), u)]
\end{align*}

Since $\Oo = \circ_{i} \Oo_i$ and $\Ee = \max_i \Ee_i$, we have
\begin{equation*}
    \EE_{u\sim\Dd}[\Ee(\Oo \fNE(u),u)]\le \EE_{u\sim\Dd}[\Ee(\fNE(u),u)]
\end{equation*}
\end{proof}
% -----

\section{Omitted Proofs in \cref{sec:limitation}}
\subsection{Proof of \cref{thm:NEG}}\label{prf:thm:NEG}
\thmNEG*
\begin{proof}
Let $f$ be a PPE and OPI NE approximator.
Denote $f(u)=(\sigma_i)_{i\in[n]}$. 
For player $k$ that $a^*_k \in V(\rho_k)$, we get 
\begin{equation}
\label{eq:prf:thm:NEG}
    \sigma_k = f(u)_k
    \overset{(a)}{=} f(\rho u)_k
    \overset{(b)}{=} f(\rho_k u)_k
    \overset{(c)}{=} \rho_k f(u)_k
    = \rho_k \sigma_k,
\end{equation}
where $(a)$ holds since $u$ is permutable w.r.t. $\rho$, $(b)$ holds by OPI of $f$, and $(c)$ holds by PPE of $f$.
If $a^*$ can be found by $f$, we will have 
$1 = \sigma_k(a^*_k) \overset{(d)}{=} \rho_k \sigma_k(a^*_k) = \sigma_k(\rho_k^{-1} (a^*_k))$, where $(d)$ holds by \cref{eq:prf:thm:NEG}.
However, such result leads to a contradiction, because $a^*_k \neq \rho_k^{-1}(a_k)$ but $\sigma_k(a^*_k)=\sigma(\rho_k^{-1}(a^*_k))=1$.

Let $f$ be a PE (C)CE approximator. Denote $f(u)=\pi$, we have
\begin{equation}
\label{eq:prf:thm:NEG2}
    \pi = f(u)
    \overset{(a)}{=} f(\rho u)
    \overset{(b)}{=} \rho f(u)
    = \rho \pi
\end{equation}
where $(a)$ holds since $u$ is permutable w.r.t. $\rho$, $(b)$ holds by PE of $f$. If $a^*$ can be found by $f$, we will have
$1 = \pi(a^*) \overset{(c)}{=} \rho\pi(a^*) = \pi(\rho^{-1} a^*) = \pi(\rho_1^{-1} a_1^*,\cdots,\rho_n^{-1} a_n^*)$, where $(c)$ holds by \cref{eq:prf:thm:NEG2}.
However, from $a_k^*\in V(\rho_k)$ we know $\rho_k^{-1}(a_k^*)\ne a_k^*$, then $\rho^{-1} a^* \ne a^*$, but $\pi(a^*) = \pi(\rho^{-1} a^*) = 1$.
\end{proof}

\subsection{Proof of \cref{thm:NEG:ratio:CCE}}\label{prf:thm:NEG:ratio:CCE}
\thmNEGRatioCCE*
\begin{proof}
Assume $f\in\Ff^\text{(C)CE}_\mathrm{general}$ is an (C)CE approximator that always finds the strategy that maximizes the social welfare.
Afterward, we construct another $f_0$ that satisfies PE and always finds the strategy that maximizes social welfare.
$f_0$ is constructed by orbit averaging:
\begin{align*}
    f_0(u) = \Qq f(u),
\end{align*}
thus $f_0$ is PE.

Denote $\Dd$ as an arbitrary payoff distribution of $u$ such that $\Dd$ is invariant under permutation and the cardinality of its support is finite.
We have
% then we show that
% \begin{align*}
%     \EE_{u\sim\Dd} \SW(f_0(u),u) = \EE_{u\sim\Dd}\SW(f(u),u)
% \end{align*}
\begin{align*}
    \EE_{u\sim\Dd} \SW(\Qq_i f(u),u)
    =& \EE_{u\sim\Dd} \SW(\frac{1}{|\Aa_i|!}\sum_{\rho_i\in\Gg_i} \rho_i^{-1} f(\rho_i u),u)
    \\
    =& \EE_{u\sim\Dd} \sum_{i=1}^n u_i(\frac{1}{|\Aa_i|!}\sum_{\rho_i\in\Gg_i} \rho_i^{-1} f(\rho_i u))
    \\
    =& \frac{1}{|\Aa_i|!}\sum_{\rho_i\in\Gg_i} \EE_{u\sim\Dd} \sum_{i=1}^n u_i( \rho_i^{-1} f(\rho_i u))
    \\
    =& \frac{1}{|\Aa_i|!}\sum_{\rho_i\in\Gg_i} \EE_{v\sim\Dd} \sum_{i=1}^n (\rho_i^{-1}v)_i( \rho_i^{-1} f(v)) & &,\text{let } v = \rho_i u
    \\
    =& \frac{1}{|\Aa_i|!}\sum_{\rho_i\in\Gg_i} \EE_{v\sim\Dd} \sum_{i=1}^n v_i(f(v))
    \\
    =& \EE_{u\sim\Dd} \sum_{i=1}^n u_i(f(u))\\
    =& \EE_{u\sim\Dd} \SW(f(u),u)
\end{align*}

Due to that $\Qq = \Qq_1\circ\cdots\circ\Qq_n$, we have
\begin{align*}
    \EE_{u\sim\Dd} \SW(f_0(u),u) = \EE_{u\sim\Dd}\SW(f(u),u)
\end{align*}

Due to the arbitrariness of $\Dd$, we know that $f_0$  maximizes the social welfare w.r.t. any $u$.

From above, we immediately know 
\begin{align*}
    \SWR({\Ff}^\mathrm{(C)CE}_\mathrm{PE}, \FfcCE_\mathrm{general}) = 1
\end{align*}

\end{proof}
% -----

\subsection{Proof of \cref{thm:NEG:ratio:NE}}\label{prf:thm:NEG:ratio:NE}
\thmNEGRatioNE*

\subsubsection{Proof of \cref{eq:NEG:ratio:OPI} and \cref{eq:NEG:ratio:both}}
We first prove the theorem with respect to $\Ff^\mathrm{NE}_\mathrm{OPI}$ and $\Ff^\mathrm{NE}_\mathrm{both}$

\paragraph{Step 1}
On the one part, we prove
\begin{align*}
\left.
\begin{aligned}
    \SWR({\Ff}^\NE_\mathrm{OPI},\Ff^\NE_\mathrm{general})
    \\
    \SWR({\Ff}^\NE_\mathrm{both},\Ff^\NE_\mathrm{general})
\end{aligned}
\right\}\le \frac{1}{M^{N-1}}
\end{align*}

We prove this by construction.

Consider a game with $N$ player and $\Aa_i=[M]$ for $i\in[N]$.
$\forall a \in \Aa, i \in [N]$, define the payoff $\bar u$ as follows: 
\begin{equation*}
    \bar u_i(a) =     \begin{cases}
        1 &,\text{if }a_1 = a_2 = \dots = a_N
        \\
        0 &,\text{otherwise}
    \end{cases}
\end{equation*}
% If all players play the same action, then they all receive a utility of $1$, otherwise, they get $0$ utility.
Define $U=\{u'|u' = \circ_{i}\rho_i \bar u,\rho_i\in\Gg_i\}$ and $\Dd$ as a uniform distribution on $U$. Easy to certify that $\Dd$ is a permutation-invariant distribution.

Let $\Tilde{f}\in \Tilde\Ff^\NE_\mathrm{general}$ be the NE oracle that $\Tilde{f}(\bar u)_i = 1$ and for any $u'=\circ_{i} \rho_i \bar u\in U$, $\Tilde{f}(u')_i=\rho_i (1)$. 
% If $u'$ has multiple representations, the oracle will choose an arbitrary one. 
Intuitively, the oracle will choose the action that will provide all players with revenue $1$, leading to a social welfare of $N$. 
Since each player has got her maximum possible utility, we have 
\begin{equation}\label{eq:maxSW:oracle}
    \max_{f\in\Ff^\NE_\mathrm{general}} \EE_{u\sim\Dd} \SW(f(u),u)=\max_{\Tilde{f}\in\widetilde{\Ff}^\NE_\mathrm{general}} \EE_{u\sim\Dd} \SW(\Tilde{f}(u),u)=N.
\end{equation}

For any $j_1,j_2\in[M]$ and $j_1 < j_2$, let $\rho^{(j_1,j_2)}_{i} = (1, \dots, j_2, \dots, j_1, \dots, M)$ for all player $i \in [N]$ be the swap permutation that swaps actions $j_1$ and $j_2$ and keeps other actions still. 
Then $\circ_{i\ne j} \rho_i^{(j_1,j_2)} \bar u = \rho_j^{(j_1,j_2)} \bar u$ for player $j$. 
For $f \in \FfNE_\mathrm{OPI}$, we have $f(\bar u)_j = f(\circ_{i\ne j} \rho_i^{(j_1,j_2)} \bar u)_j = f(\rho_j^{(j_1,j_2)} \bar u)_j$ for arbitrary swap permutation $\rho_j^{(j_1,j_2)}$. 
Since any permutation can be achieved by composition of swap permutations, we have $\forall \rho_j\in\Gg_j$, $f(\bar u)_j = f(\rho_j \bar u)_j$.
Based on that, and by OPI of $f$, $\forall \rho = \circ_{i\in [N]} \rho_i$ we have $f(\bar u)_j = f(\rho \bar u)_j$, i.e. $f$ is a constant function on $U$.
Without loss of generality, we denote $f(u)\equiv\sigma$ for all $u\in U$.
Then 
\begin{equation*}
    \EE_{u\sim \Dd}\SW(f(u),u) 
    = \frac{1}{|U|}\sum_{u'\in U}\SW(\sigma,u') 
    = \frac{1}{(M!)^{N-1}}\SW(\sigma,\sum_{u'\in U}u').
\end{equation*}
Additionally,  we have  $(\sum_{u'\in U}u')(a) = ((M-1)!)^{N-1}$ for any $a \in \Aa$. 
Based on that, we have
\begin{equation}\label{eq:maxSW:OPI}
    \max_{f\in\Ff^\mathrm{NE}_\mathrm{OPI}} \EE_{u\sim \Dd}\SW(f(u),u)
    = \frac{1}{(M!)^{N-1}} \cdot N((M-1)!)^{N-1}
    =\frac{N}{M^{N-1}}.
\end{equation}
Combining \cref{eq:maxSW:oracle} and \cref{eq:maxSW:OPI}, we have
\begin{equation*}
    \SWR(\Ff^{\mathrm{NE}}_{\mathrm{OPI}},\Ff^\NE_\mathrm{general})\le\frac{1}{M^{N-1}}.
\end{equation*}

Due to $\Ff^\mathrm{NE}_\mathrm{both}\subseteq\Ff^\mathrm{NE}_\mathrm{OPI}$, we immediately know
\begin{align*}
    \SWR(\Ff^{\mathrm{NE}}_{\mathrm{both}},\Ff^\NE_\mathrm{general})\le\frac{1}{M^{N-1}}
\end{align*}

\paragraph{Step 2}
On the other part, we prove
\begin{align*}
\left.
\begin{aligned}
    \SWR({\Ff}^\NE_\mathrm{OPI},\Ff^\NE_\mathrm{general})
    \\
    \SWR({\Ff}^\NE_\mathrm{both},\Ff^\NE_\mathrm{general})
\end{aligned}
\right\}\ge 1/M^{N-1}
\end{align*}

Define the maximum possible utility (MPU) for player $i$ with respect to utility $u_i$ and action $a_i$ as
\begin{equation}
    \mathrm{MPU}(u_i,a_i) \coloneqq \max_{a_{-i}\in\Aa_{-i}} u_i(a_i,a_{-i})
\end{equation}

Define the set of maximum possible utility best response for player $i$ w.r.t. $u_i$ as 
\begin{equation*}
    \Bb_i(u_i) \coloneqq \{a_i\in\Aa_i : \mathrm{MPU}(u_i,a_i) = \max_{a'_i\in\Aa_i}\mathrm{MPU}(u_i,a'_i) \}
\end{equation*}

We first conduct some simplification to the target. 
\begin{align*}
    & \SWR({\Ff}^\NE_\mathrm{both},\Ff^\NE_\mathrm{general})
    = \inf_\Dd \frac{\max_{f\in\Ff^\mathrm{NE}_\mathrm{both}} \EE_{u\sim \Dd}\SW(f(u),u)}{\max_{f\in{\Ff^\NE_\mathrm{general}}} \EE_{u\sim\Dd} \SW(f(u),u)}
    \ge \inf_\Dd \frac{\max_{f\in\Ff^\mathrm{NE}_\mathrm{both}} \EE_{u\sim \Dd}\SW(f(u),u)}{\EE_{u\sim\Dd} \max_\sigma \SW(\sigma,u)}
\end{align*}

Then we constrain $u$ to be a cooperation game. For a normal form game $\Gamma_u$, we define $\Tilde{u}=(\Tilde{u}_i)_{i\in[n]}$ in which $\Tilde{u}_i = \frac{1}{n}\sum_{i=1}^n u_i$.
Then we have $\SW(\sigma,u) = \SW(\sigma,\Tilde{u})$, which means that constraining $u$ to be a cooperation game will induce the same social welfare.
Then
\begin{align*}
    & \inf_\Dd \frac{\max_{f\in\Ff^\mathrm{NE}_\mathrm{both}} \EE_{u\sim \Dd}\SW(f(u),u)}{\EE_{u\sim\Dd} \max_\sigma \SW(\sigma,u)}
    = \inf_\Dd \frac{\max_{f\in\Ff^\mathrm{NE}_\mathrm{both}} \EE_{u\sim \Dd}\SW(f(u),\Tilde{u})}{\EE_{u\sim\Dd} \max_\sigma \SW(\sigma,\Tilde{u})}
\end{align*}

Denote $f_0$ be the approximator that always outputs uniform strategy on $\Bb_i(\Tilde{u}_i)$ for player $i$. It's obvious that $f_0$ is both OPI and PPE because the operations from $u$ to $f_0(u)$ are all permutation-equivariant. Then,
\begin{align*}
    & \inf_\Dd \frac{\max_{f\in\Ff^\mathrm{NE}_\mathrm{both}} \EE_{u\sim \Dd}\SW(f(u),\Tilde{u})}{\EE_{u\sim\Dd} \max_\sigma \SW(\sigma,\Tilde{u})}
    \ge \inf_\Dd \frac{\EE_{u\sim \Dd}\SW(f_0(u),\Tilde{u})}{\EE_{u\sim\Dd} \max_\sigma \SW(\sigma,\Tilde{u})}
\end{align*}

Ignore the infimum and the expectation operator, consider $\frac{\SW(f_0(u),\Tilde{u})}{\max_\sigma \SW(\sigma,\Tilde{u})}$ for arbitrary $\Tilde{u}$, denote $b$ be the maximum element appeared in $\Tilde{u}$, then the denominator equals $Nb$. But for the numerator, for player $i$, no matter what action $a_i\in\Bb_i(\Tilde{u}_i)$ she chooses, she always has probability at least $\prod_{j\ne i} \frac{1}{|\Bb_j|}\ge \frac{1}{M^{N-1}}$ to achieve revenue $b$, therefore inducing $\SW(f_0(u),\Tilde{u})\ge\frac{Nb}{M^{N-1}}$.

Then, $\frac{\SW(f_0(u),\Tilde{u})}{\max_\sigma \SW(\sigma,\Tilde{u})}\ge \frac{1}{M^{N-1}}$, so as $\inf_\Dd \frac{\EE_{u\sim \Dd}\SW(f_0(u),\Tilde{u})}{\EE_{u\sim\Dd} \max_\sigma \SW(\sigma,\Tilde{u})}$, $\SWR({\Ff}^\NE_\mathrm{both})$ and $\SWR({\Ff}^\NE_\mathrm{OPI})$.

Above all, 
\begin{align*}
\left.
\begin{aligned}
    \SWR(\Ff^{\mathrm{NE}}_{\mathrm{OPI}},\Ff^\NE_\mathrm{general})
    \\
    \SWR(\Ff^{\mathrm{NE}}_{\mathrm{both}},\Ff^\NE_\mathrm{general})
\end{aligned}
\right\}= \frac{1}{M^{N-1}}
\end{align*}

\subsubsection{Proof of \cref{eq:NEG:ratio:PPE}}
We next prove the theorem with respect to $\Ff^\mathrm{NE}_\mathrm{PPE}$that
\begin{align*}
    \SWR(\Ff^{\mathrm{NE}}_{\mathrm{PPE}},\Ff^\NE_\mathrm{general}) \le \frac{1}{M}
\end{align*}

Consider a bimatrix game and $\Aa_i=[M]$ for $i\in[2]$.
$\forall a \in \Aa, i \in [2]$, define the payoff $\bar u$ as follows: 
\begin{equation*}
    \bar u_i(a) =     \begin{cases}
        1 &,\text{if }a_1 = a_2
        \\
        0 &,\text{otherwise}
    \end{cases}
\end{equation*}
Define $U\coloneqq\{u'|u'=\rho_1 \rho_2 \bar u,\rho_i \in\Gg_i\}$ and $\Dd$ as a uniform distribution on $U$. 
Easy to certify that $U = \{u'| u'=\rho_1\Bar{u}, \rho_1 \in \Gg_1\} = \{u'| u'=\rho_2\Bar{u}, \rho_2 \in \Gg_2\}$ and $\Dd$ is a permutation-invariant distribution.

Let $\Tilde{f}\in \Tilde\Ff^\NE_\mathrm{general}$ be the NE oracle that $\Tilde{f}(\bar u)_i = 1$ and for any $u'=\circ_{i} \rho_i \bar u\in U$, $\Tilde{f}(u')_i=\rho_i (1)$. 
Intuitively, the oracle will choose the action that will provide all players with revenue of $1$, leading to a social welfare of $2$. 

For a permutation $\varrho$ on $[M]$, let $P_\varrho \in \{0,1\}^{M\times M}$ be the corresponding permutation matrix. Denote $\Pp$ as the set of all permutation matrice.
As a result, $\forall u \in U, \forall \rho_1 \in \Gg_1, \rho_1 u = (P_{\rho_1} {u}_1, P_{\rho_1}{u}_2) =: P_{\rho_1}u$ and 
$\forall \rho_2 \in \Gg_2, \rho_2 u = ({u}_1P_{\rho_2}^T, {u}_2P_{\rho_2}^T) =: uP_{\rho_2}^T$.
Specially, we have $P_\varrho \Bar{u} P_\varrho^T = \Bar{u}$.
% By PPE of $f \in \Ff^\NE_\mathrm{PPE}$, we have 
% \begin{equation*}
%     % f(\bar u)_1 = \varrho^{(1)} f((\varrho^{(1)})^{-1} \bar u)_1 = \varrho^{(1)} f((\varrho^{(1)})^{-1} \varrho^{(1)} \varrho^{(2)} \bar u)_1 = \varrho^{(1)} f(\varrho^{(2)} \bar u)_1 = \varrho f(\varrho^{(2)} \bar u)_1.
%     f(\bar u)_1 
%     = P_\varrho f(P_\varrho^{-1}\Bar{u})_1
%     = P_\varrho f(P_\varrho^{-1}P_\varrho \Bar{u} P_\varrho^T)_1 
%     = P_\varrho f(\Bar{u} P_\varrho^T)_1.
% \end{equation*}
% Therefore $P_\varrho^{-1}f(\bar u)_1 = f(\Bar{u} P_\varrho^T)_1$.
% Furthermore, by PPE of $f$, we have 
% \begin{equation*}
%     f(\rho_1 \rho_2 \Bar{u})_1 = \rho_1 f(\rho_2 \Bar{u})_1 = \rho_1 f(\Bar{u}P_{\rho_2}^T)_1 = \rho_1 P_{\rho_2}^{-1}f(\bar u)_1 = \rho_1 \rho_2^{-1}f(\Bar{u})_1.
% \end{equation*}
% similarly, we also have 
% \begin{equation*}
%     f(\rho_1 \rho_2 \Bar{u})_2 = f(\rho_2 \rho_1 \Bar{u})_2 = \rho_2 \rho_1^{-1}f(\Bar{u})_2.
% \end{equation*}
For $f \in \Ff^\NE_\mathrm{PPE}$, Denote $f(\Bar{u})=\sigma=(\sigma_1,\sigma_2)$. 
For permutation $\varrho$ in $[M]$ and payoff $u' = P_\varrho\Bar{u} = \Bar{u}(P_\varrho^T)^{-1}$, by PPE of $f$, we have 
$
    f(u')_1 = f(P_\varrho\Bar{u})_1 =  P_\varrho \sigma_1 = \varrho \sigma_1,
$
and 
$
    f(u')_2 = f(\Bar{u}(P_\varrho^T)^{-1})_2 = (P_\varrho)^{-1} \sigma_2 = \varrho^{-1}\sigma_2.
$
Then we have 
\begin{equation*}
    \SW(f(u'),u') 
    = \sum_i (P_\varrho\Bar{u})_i(\varrho \sigma_1, \varrho^{-1}\sigma_2) 
    = \sum_i \bar{u}_i(\sigma_1, \varrho^{-1}\sigma_2) 
    = \sum_i (\Bar{u}P_\varrho^T)_i(\sigma_1,\sigma_2) 
    = \SW(f(\Bar{u}), \Bar{u}P_\varrho^T)
\end{equation*}

Therefore
\begin{align*}
    \EE_{u\sim\Dd} \SW(f(u),u) 
    &= \frac{1}{|U|}\sum_{u'\in U} \SW(f(u'),u') \\
    &= \frac{1}{|U|}\sum_{P_{\varrho}\in \Pp} \SW(f(\Bar{u}), \Bar{u}P_\varrho^T) \\
    &= \frac{1}{|U|}\sum_{u=\Bar{u}(P_\varrho^T) \in U} \SW(f(\Bar{u}), u) \\
    &= \frac{1}{|U|}\SW(\sigma,\sum_{u'\in U}u').
\end{align*}
Since $|U|=\frac{1}{M!}$ and $\sum_{u'\in U}u'$ is a tensor with all elements equal to $(M-1)!$. 
Thus $\EE_{u\sim\Dd} \SW(f(u),u)=\frac{2}{M}$ and 
\begin{align*}
    \SWR(\Ff^{\mathrm{NE}}_{\mathrm{PPE}},\Ff^\NE_\mathrm{general})\le\frac{1}{M}
\end{align*}

\subsubsection{Proof of \cref{eq:NEG:ratio:both:oracle}}

% Assume $|A|=n=3$, otherwise, we could add zero row and coloum to $u$ such that the maximum social welfare doesnot change.

% Let $\frac{1}{2}>\varepsilon>0$ and $u$ as follows,
Consider a $3\times 3$ game as follows, where $\epsilon \in (0, \frac{1}{2})$:
\begin{equation*}
    u = \begin{bmatrix}
    \bm{1,1} & 0,0 & 0,\frac{1}{2}+\varepsilon
    \\
    0,0 & \bm{1,1} & 0,\frac{1}{2}+\varepsilon
    \\
    \frac{1}{2}+\varepsilon,0 & \frac{1}{2}+\varepsilon,0 & \varepsilon,\varepsilon
    \end{bmatrix}
\end{equation*}

It is obvious that $\max_{\sigma^* \subseteq \text{NE}(\Gamma_u)}\text{SW}(\sigma^*,u) = 2$, and the corresponding strategy has been bolded. However, for NE oracles with both PPE and OPI, it can only output a unique NE with a pure strategy that induces utility $(\varepsilon,\varepsilon)$.

Let $\rho_1=\rho_2=(2,1,3)$, we have $\rho_1 \rho_2 u = u$. From the analysis above we know if $\fNE\in\widetilde\Ff^\NE_\mathrm{both}$ and $\fNE(u)=(\sigma_1,\sigma_2)$, then $\sigma_1(1)=\sigma_1(2)$, $\sigma_2(1)=\sigma_2(2)$. We integrate the first two actions of player $1$ and player $2$ into a new action that will choose randomly between the first two actions, then we form the utility matrix below:
\begin{equation*}
    \overline{u} = \begin{bmatrix}
    \frac{1}{2},\frac{1}{2} & 0,\frac{1}{2}+\varepsilon
    \\
    \frac{1}{2}+\varepsilon,0 & \bm{\varepsilon,\varepsilon}
    \end{bmatrix}
\end{equation*}

There is a unique NE in this Prisoner's Dilemma, which has been bolded. The game $\overline{u}$ is the same with the game $u$ under the assumption that $\sigma_1(1)=\sigma_1(2)$ and $\sigma_2(1)=\sigma_2(2)$ in $u$. Then $\max_{f\in\widetilde\Ff^\NE_\mathrm{both}}\text{SW}(f(u),u) = 2\varepsilon$.
Since $\varepsilon$ can be arbitrarily small, we have
$\mathrm{SWR}_{2,3}(\widetilde\Ff^\NE_\mathrm{both},\widetilde{\Ff}^\NE_\mathrm{general}) = 0$.
As a result, we have $\mathrm{SWR}_{N,M}(\widetilde\Ff^\NE_\mathrm{both},\widetilde{\Ff}^\NE_\mathrm{general}) = 0$ for all $N \ge 2$ and $M \ge 3$.

% -----
% \input{_appendix_experiment.tex}

\end{document}